	\newtheoremstyle{indent}{\topsep}{\topsep}{\addtolength{\leftskip}{1.5em}\itshape}{-1.5em}{\bfseries}{.\,}{ }{}
	\theoremstyle{indent}
	\newtheorem{corollary}{Corollary}
	\newtheorem{theorem}{Theorem}
	\newtheorem{cond}{Condition}
        \newtheorem{proposition}{Proposition}
    \newcommand{\Kappa}{K}
    \newcommand{\mvbar}{\middle\vert}
	\newcommand{\ud}{\,\mathrm{d}}
	\newcommand{\U}{\text{U}}
	\newcommand{\N}{\text{N}}
	\newcommand{\Exp}{\text{Exp}}
    \newcommand{\rem}[1]{{#1}^{\text{rem}}}
    \def\bx{{\mathbf{x}}}
    \newcommand{\algref}[1]{\hyperref[#1]{Algorithm \ref*{#1}}}
    \newcommand{\stepref}[1]{\hyperref[#1]{Step \ref*{#1}}}
    \newcommand{\algstref}[2]{\hyperref[#2]{Algorithm \ref*{#1} Step \ref*{#2}}}
    \newcommand{\figref}[1]{\hyperref[#1]{Figure \ref*{#1}}}
    \newcommand{\subfigref}[3]{\hyperref[#1]{Figure \ref*{#2}(#3)}}
    \newcommand{\tabref}[1]{\hyperref[#1]{Table \ref*{#1}}}
    \newcommand{\apxref}[1]{\hyperref[#1]{Appendix \ref*{#1}}}
    \newcommand{\apxrefpl}[2]{Appendices \hyperref[#1]{\ref*{#1}} and \hyperref[#2]{\ref*{#2}}}
    \newcommand{\secref}[1]{\hyperref[#1]{Section \ref*{#1}}}
    \newcommand{\prinref}[1]{\hyperref[#1]{Principle \ref*{#1}}}
    \newcommand{\conref}[1]{\hyperref[#1]{Condition \ref*{#1}}}
    \newcommand{\resref}[1]{\hyperref[#1]{Result \ref*{#1}}}
    \newcommand{\defnref}[1]{\hyperref[#1]{Definition \ref*{#1}}}
    \newcommand{\thmref}[1]{\hyperref[#1]{Theorem \ref*{#1}}}
    \newcommand{\lemref}[1]{\hyperref[#1]{Lemma \ref*{#1}}}
        \newcommand{\propref}[1]{\hyperref[#1]{Proposition \ref*{#1}}}
    \newcommand{\corrolref}[1]{\hyperref[#1]{Corollary \ref*{#1}}}
    \newcommand{\remref}[1]{\hyperref[#1]{Remark \ref*{#1}}}
\newcommand{\dtss}[1][]{\ensuremath{E}_{#1}\xspace}
\newcommand{\dtproc}[1][]{\ensuremath{\mathfrak{X}_{#1}}\xspace}
\newcommand{\dtaux}[1][]{\ensuremath{\mathfrak{Z}_{#1}}\xspace}
    \def\precon{\mathbf{\Lambda}}
\def\bx{\mathbf{x}}
\def\bX{\mathbf{X}}
\def\by{\mathbf{y}}
\begin{document}
\begin{frontmatter}
% Title
\title{Quasi-stationary Monte Carlo and the ScaLE Algorithm}
\runtitle{Quasi-stationary Monte Carlo}

\begin{aug}
\author{\fnms{Murray} \snm{Pollock}\thanksref{e1}, \ead[label=e1,mark]{m.pollock@warwick.ac.uk}}
%\and
\author{\fnms{Paul} \snm{Fearnhead}\thanksref{e2}, \ead[label=e2,mark]{p.fearnhead@lancs.ac.uk}}\\
%\and
\author{\fnms{Adam M.} \snm{Johansen}\thanksref{e3} \ead[label=e3,mark]{a.m.johansen@warwick.ac.uk}}
\and
\author{\fnms{Gareth O.} \snm{Roberts}\thanksref{e4}
  \ead[label=e4,mark]{gareth.o.roberts@warwick.ac.uk}}

\printead{e1,e2}\\
\printead*{e3,e4}

\runauthor{M. Pollock et al.}
\affiliation{University of Warwick}
\end{aug}

% Abstract
\begin{abstract}
This paper introduces a class of Monte Carlo algorithms which are based upon the simulation of a Markov process whose quasi-stationary distribution coincides with a distribution of interest. This differs fundamentally from, say, current Markov chain Monte Carlo methods  which simulate a Markov chain whose stationary distribution is the target. We show how to approximate distributions of interest by carefully combining sequential Monte Carlo methods with methodology for the exact simulation of diffusions. The methodology introduced here is particularly promising in that it is applicable to the same class of problems as gradient based Markov chain Monte Carlo algorithms but entirely circumvents the need to conduct Metropolis-Hastings type accept/reject steps whilst retaining {\em exactness}: the paper gives theoretical guarantees ensuring the algorithm has the correct limiting target distribution. Furthermore, this methodology is highly amenable to big data problems. By employing a modification to existing  na\"{i}ve sub-sampling and control variate techniques it is possible to obtain an algorithm which is still exact but has {\em sub-linear} iterative cost as a function of data size.
\end{abstract}

\begin{keyword}
\kwd{Control variates} \kwd{Importance sampling} \kwd{Killed Brownian motion}\kwd{Langevin diffusion}  \kwd{Markov chain Monte Carlo} \kwd{Quasi-stationarity} \kwd{Sequential Monte Carlo} 
\end{keyword}

\end{frontmatter}

%%%%%%%%%%%%%%%%%%%%%%%%%%%%%%%%%%%%%%%%%%

\TMResetAll
\section{Introduction} \label{s:intro}

Advances in methodology for the collection and storage of data have led to scientific challenges and opportunities in a wide array of disciplines. This is particularly the case in Statistics as the complexity of appropriate statistical models 
often increases with data size. Many current state-of-the-art statistical methodologies have algorithmic cost that scales poorly with increasing volumes of data. As noted by \cite{b:j13}, `many statistical procedures either have unknown runtimes or runtimes that render the procedure unusable on large-scale data' and this has resulted in a proliferation in the literature of methods `\ldots{}which may provide no statistical guarantees and which in fact may have poor or even disastrous statistical properties'.\\
\\
This is particularly keenly felt in computational and Bayesian statistics, in which the standard computational tools are Markov chain Monte Carlo (MCMC), Sequential Monte Carlo (SMC) and their many variants (see for example \cite{bk:mcsm}). MCMC methods are {\em exact} in the (weak) sense that they construct Markov chains which have the correct limiting distribution. Although MCMC methodology has had considerable success in being applied to a wide variety of substantive areas, they are not well-suited to this new era of `big data' as their computational cost will increase at least linearly with the number of data points. For example, each iteration of the Metropolis-Hastings algorithm requires evaluating the likelihood, the calculation of which, in general, scales linearly with the number of data points.
%\begin{enumerate}
%\item[\textbullet]
%Most MCMC algorithms require at least one likelihood evaluation per iteration. For example, Metropolis-Hastings algorithms require likelihood evaluations in order to perform the {\em accept/reject} step. Given a data set of size $n$, this will be at %best an $\mathcal{O}(n)$ calculation. For algorithms which avoid accept/reject steps, such as Gibbs samplers, alternative $\mathcal{O}(n)$ calculations such as the evaluation of suitable sufficient statistics are required.
%\item[\textbullet]
%A common approach in MCMC for missing data and latent variable models is data augmentation \cite{jasa:tw87}. In the context of MCMC for large data sets this approach increases the dimensionality of the state space, and in some cases this increase %will be $\mathcal{O}(n)$. As  demonstrated in \secref{s:examp:2} it is possible to circumvent this issue.
%\end{enumerate}
%
%\noindent Naturally, owing to the previous successes of MCMC, there has been considerable effort to modify existing methodology in order to address computational scalability -- and t
The motivation behind the work presented in this paper is on developing Monte Carlo methods that are exact, in the same sense as MCMC, but that have a have a computational cost per effective sample size that is sub-linear in the number of data points.\\
\\
To date, the success of methods that aim to adapt MCMC to reduce its algorithmic cost has been mixed and has invariably led to a compromise on exactness --- such methodologies generally construct a stochastic process with limiting distribution which is (at least hopefully) close to the desired target distribution. Broadly speaking these methods can be divided into three classes of approach: `Divide-and-conquer' methods; `Exact Sub-sampling' methods; and, `Approximate Sub-sampling' methods. Each of these approaches has its own strengths and weaknesses which will be briefly reviewed in the following paragraphs.\\
\\
Divide-and-conquer methods (for instance, \cite{uap:nwx14,arxiv:wd13,ijmsem:sbb16,icml:ssld14}) begin by splitting the data set into a large number of smaller data sets (which may or may not overlap). Inference is then conducted on these smaller data sets and resulting estimates are combined in some appropriate manner. A clear advantage of such an approach is that inference on each small data set can be conducted independently, and in parallel, and so if one had access to a large cluster of computing cores then the computational cost could be significantly reduced. %Interaction between the cores is typically impracticable as the primary bottleneck is not computational speed, but latency. Therefore, from a parallelisation perspective this approach is perhaps the most appealing. 
The primary weakness of these methods is that the recombination of the separately conducted inferences is inexact. All current theory is asymptotic in the number of data points, $n$ \citep{uap:nwx14,b:lsd17}. For these asymptotic regimes the posterior will tend to a Gaussian distribution \cite{ams:j70}, and it is questionable whether divide-and-conquer methods offer an advantage over simple approaches such as a Laplace approximation to the posterior \cite{arxiv:bdh15}. Most results on convergence rates (e.g. \cite{aistats:sctd16}) have rates that are of order $\mathcal{O}(m^{-1/2})$, where $m$ is the number of data-points in each sub-set. As such they are no stronger than convergence rates for analysing just a single batch. One exception is in \cite{b:lsd17}, where convergence rates of $\mathcal{O}(n^{-1/2})$ are obtained, albeit under strong conditions. However, these results only relate to estimating marginal posterior distributions, rather than the full posterior.\\
\\ 
Sub-sampling methods are designed so that each iteration requires access to only a subset of the data. Exact approaches in this vein typically require subsets of the data of random size at each iteration. One approach is to construct unbiased estimators of point-wise evaluations of the target density using subsets of the data, which could then be embedded within the pseudo-marginal MCMC framework recently developed by \cite{as:ar09}. Unfortunately, the construction of such positive unbiased estimators is not possible in general \citep{as:jt15} and such methods often require both bounds on, and good analytical approximations of, the likelihood \citep{ijcai:ma15}.\\
\\
More promising practical results have been obtained by approximate sub-sampling approaches. These methods use subsamples of the data to estimate quantities such as acceptance probabilities \citep{arxiv:nfw12,icml:kcw14,icml:bdh14}, or the gradient of the posterior, that are used within MCMC algorithms. These estimates are then used in place of the true quantities. Whilst this can lead to increases in computational efficiency, the resulting algorithms no longer target the true posterior. The most popular of these algorithms is the stochastic gradient Langevin dynamics algorithm of \cite{icml:wt11}. This approximately samples a Langevin diffusion which has the posterior as its stationary distribution. To do this requires first approximating the continuous-time diffusion by a discrete-time Markov process, and then using sub-sampling estimates of the gradient of the posterior within the dynamics of this discrete-time process. This idea has been extended to approximations of other continuous-time dynamics that target the posterior \citep{icml:abw12,icml:cfg14,nips:mcf15}.\\
\\
Within these sub-sampling methods it is possible to tune the subsample size, and sometimes the algorithm's step-size, so as to control the level of approximation. This leads to a trade-off, whereby increasing the computational cost of the algorithm can lead to samplers that target a closer approximation to the the true posterior. There is also substantial theory quantifying the bias in, say, estimates of posterior means, that arise from these methods \citep{jmlr:ttv16,jmlr:vzt16,nips:cdc15,aistats:hz17,Dalalyan:2017}, and how this depends on the subsample size and step-size. However, whilst they often work well in practice it can be hard to know just how accurate the results are for any given application. Furthermore, many of these algorithms still have a computational cost that increases linearly with data size \citep{arxiv:bdh15,arxiv:ndhv17,sc:bffn19}.\\
\\
The approach to the problem of big data proposed here is a significant departure from the current literature. Rather than building our methodology upon the stationarity of appropriately constructed Markov chains,  a novel approach based on the {\em quasi-limiting} distribution of suitably constructed stochastically weighted diffusion processes is developed. A {\em quasi-limiting distribution} for a Markov process $X$ with respect to a Markov stopping time $\zeta $ is the limit of the distribution of $X_t\mid \zeta>t$ as $t \to \infty $, and such distributions are automatically \emph{quasi-stationary distributions}, see \cite{bk:qsd}; this concept is completely unrelated to the popular area of Quasi-Monte Carlo.  These {\em Quasi-Stationary Monte Carlo (QSMC) methods} developed can be used for a broad range of Bayesian problems (of a similar type to MCMC) and exhibit interesting and differing algorithmic properties. The QSMC methods developed are {\em exact} in the same (weak) sense as MCMC, in that they give the correct (quasi-)limiting distribution. There are a number of different possible implementations of the theory which open up interesting avenues for future research, in terms of  branching processes, by means of stochastic approximation methods, or (as outlined in this paper) SMC methods. We note that the use of continuous-time SMC and related algorithms to obtain approximations of large time limiting distributions of processes conditioned to remain alive has also been explored in settings in which a quantity of interest admits a natural representation of this form (see \cite{esaim:dm03,rousset2006}, and related work in the physics literature, such as \cite{giardina2011} and references therein); a substantial difference between these and the present work is that the QSMC methods described here construct a process for which a quite general distribution of interest is the quasi-stationary distribution and entirely avoid time-discretisation errors. One particularly interesting difference between our class of Monte Carlo algorithms and MCMC is that QSMC methods allow us to circumvent entirely the Metropolis-Hastings type accept/reject steps, while still retaining theoretical guarantees that the correct limiting target distribution is recovered. In the case of big data problems, this removes one of the fundamental $\mathcal{O}(n)$ bottlenecks in computation.\\
\\
Quasi-Stationary Monte Carlo methods can be applied in big data contexts by using a novel sub-sampling approach. We call the resulting algorithm the \textit{Scalable Langevin Exact Algorithm (ScaLE)}. The name refers to the `Langevin'  diffusion which is used in the mathematical construction of the algorithm, although it should be emphasised that it is not explicitly used in the algorithm itself. As shown in \secref{s:scale}, the approach to sub-sampling adopted here can potentially decrease the computational complexity of each iteration of QSMC to be $\mathcal{O}(1)$. Furthermore, for a rejection sampler implementation of QSMC, the use of sub-sampling introduces no additional error --- as the rejection sampler will sample from the same stochastic process, a killed Brownian motion, regardless of whether sub-sampling is used or not. There can be a computational cost of using sub-sampling, as the number of rejection-sampler iterations needed to simulate the killed Brownian motion for a given time interval will increase. However, this paper will show that by using control variates \citep{arxiv:bdh15} to reduce the variability of sub-sampling estimators of features of the posterior, the ongoing algorithm computational cost  will be
%the increase in the number of iterations will be by a factor that is 
$\mathcal{O}(1)$. Constructing the control variates involves a pre-processing step whose cost is $\mathcal{O}(n)$ (at least in the case of posterior contraction at rate $n^{-1/2}$) but after this pre-processing step the resulting cost of ScaLE per effective sample size can be $\mathcal{O}(1)$. The importance of using control variates to get a computational cost that is sub-linear in $n$ is consistent with other recent work on scalable Monte Carlo methods \citep{aistats:hz17,arxiv:bfr16,arxiv:qmcr16,nips:drwp16,arxiv:ndhv17,sc:bffn19}.\\
\\
The next section presents the main result that motivates  development of quasi-stationary Monte Carlo. The following sections then provide detail on how to implement QSMC algorithms in practice, and how and why they are amenable to use with sub-sampling ideas. For clarity of presentation, much of the technical and algorithmic detail has been suppressed, but can be found in the appendices.

%%%%%%%%%%%%%%%%%%%%%%%%%%%%%%%%%%%%%%%%%%

\section{Quasi-stationary Monte Carlo} \label{s:qsmc}

Given a target density $\pi $ on $\mathbbm{R}^d$, traditional (i.e. Metropolis-Hastings type) MCMC proposes at each iteration from Markov dynamics with proposal density $q(\bx,\by)$, `correcting' its trajectory by either accepting the move with probability
\begin{align}
\alpha (\bx ,\by ) = \min \left\{
1, {\pi (\by ) q(\by , \bx ) \over \pi (\bx ) q(\bx , \by ) }
\right\},
\end{align}
or rejecting the move and remaining at state $\bx $. In {\em quasi-stationary Monte Carlo}, rather than rejecting a move and staying at $\bx $, the algorithm {\em kills} the trajectory entirely, according to probabilities which relate to the target density.\\
\\
Simulation of a Markov process with killing inevitably leads to death of the
process. Thus it is natural to describe the long-term behaviour of the process through its conditional distribution given that the process is still alive. The limit of this distribution is called the quasi-stationary distribution (see, for example, \cite{bk:qsd}). The idea of quasi-stationary Monte Carlo is to construct a Markov process whose quasi-stationary distribution is the distribution, $\pi(\bx)$, from which the user wishes to sample from. Simulations from such a process can then be used to approximate expectations with respect to $\pi(\bx)$ just as in MCMC.\\ 
\\
Although in principle QSMC can be used with any Markov process, this paper will work exclusively with killed Brownian motion as it has a number of convenient properties that can be exploited. Therefore let $\{\mathbf{X}_t, t\ge 0\}$ denote $d$-dimensional Brownian motion initialised at $\mathbf{X}_0=\bx _0$. Suppose $\kappa (\bx )$ denotes a non-negative hazard rate at which the Brownian motion is killed when it is in state $\bx $, and let $\zeta $ be the killing time itself. Finally define
\begin{equation}
\label{e:mut} 
\mu _t(d\bx ) := \mathbbm{P}(\bX_t \in d\bx \mid \zeta >t ),
\end{equation}
the distribution of $\bX_t$ given that it has not yet been killed. The limit of this distribution as $t\rightarrow\infty$ is the quasi-stationary distribution of the killed Brownian motion.\\
\\
The aim will be to choose $\kappa $ in such a way that $\mu _t$ converges to $\pi$, and with this in mind, we introduce the function $\phi :\mathbbm{R}^d\to \mathbbm{R}$
\begin{align}
\label{e:phidef}
\phi(\bx) :=  {
\| \nabla \log \pi (\bx )  \| ^2
+ \Delta \log \pi (\bx ) \over 2} = {\Delta \pi(\bx) \over 2 \pi(\bx) },
\end{align}
where $\|\cdot\|$ denotes the usual Euclidean norm and $\Delta $ the Laplacian operator. By further imposing the condition
\begin{cond} \label{cond:phi}
There exists a constant $\Phi > -\infty$ such that $\Phi \leq \phi(\mathbf{u})$ $\forall\mathbf{u}\in\mathbbm{R}^d$.
\end{cond}
\noindent the following result can be proved:
\begin{theorem}
\label{thm:qsd}
Under the regularity conditions (\ref{eq:Q0}) and (\ref{eq:Q1}) in \apxref{apx:proofthm1}, suppose that \conref{cond:phi} holds and set
\begin{equation}
\label{e:kappadef}
\kappa (\bx ) :=  \phi (\bx ) - \Phi \geq 0,
\end{equation}
then it follows that $\mu _t$ converges in $L^1$ and pointwise to $\pi $.
\proof See \apxref{apx:proofthm1}. \endproof
\end{theorem}
\noindent Note that the regularity conditions in \apxref{apx:proofthm1} are largely technical smoothness and other weak regularity conditions common in stochastic calculus. On the other hand \conref{cond:phi} is necessary for us to be able to construct quasi-stationary Monte Carlo methods. However, since non-pathological densities on $\mathbbm{R}^d$ are typically convex in the tails, the second identity in (\ref{e:phidef}), demonstrates that \conref{cond:phi} is almost always satisfied in real examples.\\
\\
\thmref{thm:qsd} can be exploited for statistical purposes by noting that for some sufficiently large $t^*$, $\mu_t\approx \pi$ for $t>t^*$. Thus, given samples from $\mu_t$ for $t>t^*$, one would have an (approximate) sample from $\pi$. This is analogous to MCMC, with $t^*$ being the burn-in period; the only difference being the need to simulate from the distribution of the process conditional upon it not having died.\\
\\
The next two sections describe how to simulate from $\mu_t$. Firstly a  description of how to simulate killed Brownian motion process exactly in continuous-time is provided. A na\"{i}ve approach to sample from $\mu_t$, is to simulate independent realisations of this killed Brownian motion, and use the values at time $t$ of those processes which have not yet died by time $t$. In practice this is impracticable, as the probability of survival will, in general, decay exponentially with $t$. To overcome this  sequential Monte Carlo methods are employed. \\
\\
Both these two steps introduce additional challenges not present within standard MCMC. Thus a natural question is: why use quasi-stationary Monte Carlo at all? This is addressed this in \secref{s:scale} where it is shown that  simulating the killing events can be carried out using just subsamples of data.  In fact  subsamples of size 2 can be used without introducing any approximation into the dynamics of the killed Brownian motion.

%%%%%%%%%%%%%%%%%%%%%%%%%%%%%%%%%%%%%%%%%%

\section{Implementing QSMC} \label{s:emcfd}

%%%%%%%%%%%%%%%%%%%%%%%%%%%%%%%%%%%%%%%%%%

\subsection{Simulating Killed Brownian Motion} \label{s:killedBM}

\thmref{thm:qsd} relates a target distribution of interest
 to the quasi-stationary distribution of an appropriate killed Brownian motion. To be able to simulate from this quasi-stationary distribution it is necessary to be able to simulate from killed Brownian motion. %Here we describe how to do this.
 \\
\\
To help get across the main ideas, first consider the case where the killing rate, $\kappa(\bx)$, is bounded above by some constant, $\Kappa$ say. In this case it is possible to use thinning (see for example, \cite{bk:pp}) to simulate the time at which the process will die. This involves simulating the Brownian motion independently of a Poisson process with rate $\Kappa$. Each event of the Poisson process is a potential death event, and an appropriate Bernoulli variable then determines whether or not the death occurs. For an event at time $\xi$ the probability that death occurs depends on the state of the Brownian motion at time $\xi$, and is equal to $\kappa(\bx_{\xi})/\Kappa$. Thus to simulate the killed Brownian motion to time $t$ the first step is to simulate all events in the Poisson process up to time $t$. Then by considering the events in time-order, it is straightforward to simulate the Brownian motion at the first event-time and as a result determine whether death occurs. If death does not occur, the next event-time can be considered. This is repeated until either the process dies or the process has survived the last potential death event in $[0,t]$. If the latter occurs, Brownian motion can be simulated at time $t$ without any further conditions.\\
\\
This can be viewed as a rejection sampler to simulate from $\mu_t(\bx)$, the distribution of the Brownian motion at time $t$ conditional on it surviving to time $t$. Any realisation that has been killed is `rejected' and a realisation that is not killed is a draw from $\mu_t(\bx)$. It is easy to construct an importance sampling version of this rejection sampler. Assume there are $k$ events in the Poisson process before time $t$, and these occur at times $\xi_1,\ldots,\xi_k$.  The Brownian motion path is simulated at each event time and at time $t$. The output of the importance sampler is the realisation at time $t$, $\bx_t$, together with an importance sampling weight that is equal to the probability of the path surviving each potential death event,
\begin{align}
W_t := \prod_{i=1}^k \frac{\Kappa-\kappa(\bx_{\xi_i})}{\Kappa}. \nonumber
\end{align}
Given a positive lower bound on the killing rate, $\kappa(\bx)\geq \Kappa^\downarrow$ for all $\bx$, it is possible to improve the computational efficiency of the rejection sampler by splitting the death process into a death process of rate $\Kappa^\downarrow$ and one of rate $\kappa(\bx)-\Kappa^\downarrow$. Actual death occurs at the first event in either of these processes. The advantage of this construction is that the former death process is independent of the Brownian motion. Thus it is possible to first simulate whether or not death occurs in this process. If it does not we can then simulate, using thinning as above, a killed Brownian motion with rate $\kappa(\bx)-\Kappa^\downarrow$. The latter will have a lower intensity and thus be quicker to simulate. Using the importance sampling version instead, events in a Poisson process of rate $\Kappa-\Kappa^\downarrow$, $\xi_1,\ldots,\xi_k$ say, are simulated, and our realisation at time $t$ is assigned a weight
\begin{align}
W_t := \exp\{-\Kappa^\downarrow t\} \prod_{i=1}^k \frac{\Kappa- \kappa(\bx_{\xi_i})}{\Kappa-\Kappa^\downarrow}. \nonumber
\end{align}
This is particularly effective as the $\exp\{-\Kappa^\downarrow t\}$ is a constant which will cancel upon normalisation of the importance sampling weights.

%%%%%%%%%%%%%%%%%%%%%%%%%%%%%%%%%%%%%%%%%%

\subsection{Simulating Killed Brownian Motion using Local Bounds}\label{s:langevin} 

The approach in \secref{s:killedBM} is not applicable in the absence of an upper bound on the killing rate. Even in situations where a global upper bound does exist, the resulting algorithm may be inefficient if this bound is large. Both of these issues can be overcome using local bounds on the rate. For this section we will work with the specific form of the killing rate in \thmref{thm:qsd}, namely $\phi(\bx)-\Phi$. The bounds used will be expressed in terms of bounds on $\phi(\bx)$.\\
\\
Given an initial value for the Brownian motion, $\bx_0$, define a hypercube which contains $\bx_0$. In practice this cube is defined to be centred on $\bx_0$ with a user-chosen side length (which may depend on $\bx_0$). Denote the hypercube by $\mathcal{H}_1$, and assume that upper and lower bounds, $U_{\bX}^{(1)}$ and $L_{\bX}^{(1)}$ respectively, can be found for $\phi(\bx)$ with $\bx\in \mathcal{H}_1$.  The thinning idea of the previous section can be used to simulate the killed Brownian motion whilst the process stays within $\mathcal{H}_1$. Furthermore it is possible to simulate the time at which the Brownian motion first leaves $\mathcal{H}_1$ and the value of the process when this happens (see \apxref{s:pathspace layer}). Thus our approach is to use our local bounds on $\phi(\bx)$, and hence on the killing rate, to simulate the killing process while $\bx$ remains in $\mathcal{H}_1$. If the process leaves $\mathcal{H}_1$ before $t$ it is then necessary to define a new hypercube, $\mathcal{H}_2$ say, obtain new local bounds on $\phi(\bx)$ for $\bx \in \mathcal{H}_2$ and repeat simulating the killing process using these new bounds until the process either first leaves the hypercube or time $t$ is reached.\\
\\
The details of this approach are now given, describing the importance sampling version which is used later --- though a rejection sampler can be obtained using similar ideas. The first step is to calculate the hypercube, $\mathcal{H}_1$, and the bounds $L_{\bX}^{(1)}$, $U_{\bX}^{(1)}$.  We then simulate the time and position at which $\bx$ first leaves $\mathcal{H}_1$. We call this the {\em layer information}, and denote it as $R_{\bX}^{(1)}=(\tau_1,\bx_{\tau_1})$. The notion of a layer for diffusions was formalised in \cite{b:pjr15}, and we refer the interested reader there for further details. Next the possible killing events on $[0,t \wedge \tau_1]$ are generated by simulating events of a Poisson process of rate $U_{\bX}^{(1)}-L_{\bX}^{(1)}$: $\xi_1,\ldots,\xi_k$ say. The next step involves simulating the values of the Brownian motion at these event times (the simulation of which is conditional on $R_{\bX}^{(1)}$ --- see  \apxref{ss:fptinter} and \algref{alg:qWqsim} for a description of how this can be done). An incremental importance sampling weight for this segment of time is given as
\begin{align}
W^{(1)}:=
\exp\left\{ -\left(L_{\bX}^{(1)}-\Phi  \right)\cdot(t \wedge \tau_1) \right\} \prod_{i=1}^k \frac{
U_{\bX}^{(1)}  -
\phi(\bx_{\xi_i})
}{U_{\bX}^{(1)}-L_{\bX}^{(1)}}.
\end{align}
If $\tau_1<t$, then  this process is repeated with a hypercube centred on $\bx_{\tau_1}$ until simulation to time $t$ has been achieved. This gives successive incremental weights $W^{(2)}, W^{(3)}, \ldots $.  A simulated value for the  Brownian motion at time $t$ is given, again simulated conditional on the layer information for the current segment of time, and an importance sampling weight that is the product of the incremental weights associated with each segment of time. At time $t$,  $J(t)$ incremental weights  have been simulated leading to the cumulative weight
\begin{align}
W_t = \prod_{j=1}^{J(t)} W^{(j)}. \label{eq:incremental weight}
\end{align}
Full algorithmic detail of the description above are given in \algref{alg:is-kbm}. In practice every sample $\bX_t$ will have an importance weight that shares a common constant of $\exp\{ \Phi t \}$ in (\ref{eq:incremental weight}). As such it is omitted from \algref{alg:is-kbm} and the weights are asterisked to denote this. It is straightforward to prove that this approach gives valid importance sampling weights in the following sense.
\begin{theorem}
\label{unbiasedwts}
For each $t\le T$ 
\begin{align}
\mathbbm{E}[ W_t \mid \bX[0,T]
] = e^{-\int_0^t \phi (X_s) ds} \nonumber
\end{align}
\end{theorem}
 \proof
First note that by direct calculation of its Doob-Meyer decomposition conditional on $\bX[0,T]$, $W_t e^{\int_0^t \phi (X_s) ds}$ is a martingale, see for example \cite{bk:cmbm}. Therefore $\mathbbm{E}[W_t|\bX[0,T]]e^{\int_0^t \phi (X_s) ds}=1$ and the result follows.
 \qed 
\begin{algorithm}[h]
	\caption{Importance Sampling Killed Brownian Motion (IS-KBM) Algorithm} \label{alg:is-kbm}
	\begin{enumerate}
	\item Initialise: Input initial value $\mathbf{X}_0$, and time interval length $t$. Set $i=1$, $j=0$, $\tau_0=0$, $w^*_0=1$.
	\item $R$: Choose hypercube $\mathcal{H}_i$ and calculate $L^{(i)}_\mathbf{X}$, $U^{(i)}_\mathbf{X}$. 
	Simulate layer information $R^{(i)}_{\mathbf{X}}\sim\mathcal{R}$ as per \apxref{s:pathspace layer}, obtaining $\tau_i, \bx_{\tau_i}$. \label{alg:is-kbm:layer}
	\item $E$: Simulate $E \sim \text{Exp}(U^{(i)}_\mathbf{X}-L^{(i)}_\mathbf{X})$. \label{alg:is-kbm:loop} 
	\item $\xi_j$: Set $j=j+1$ and $\xi_j =  (\xi_{j-1}+E)\wedge \tau_i \wedge t$. \label{alg:qsmc:scalesim2}
	\item $w^*_{\xi_j}$: Set $w^*_{\xi_j}=w^*_{\xi_{j-1}}\cdot\exp\{-L^{(i)}_\mathbf{X}[ \xi_j-\xi_{j-1}]\}$.
	\item $\mathbf{X}_{\xi_j}$: Simulate $\mathbf{X}_{\xi_j}\sim \left.\text{MVN}(\mathbf{X}_{\xi_{j-1}},(\xi_j-\xi_{j-1}))|R^{(i)}_\mathbf{X} \right.$ as per \apxref{ss:fptinter} and \algref{alg:qWqsim}.
	\item $\tau_i$: If $\xi_j=t$ then output $\bx_t$ and $w^*_t$. Otherwise, if $\xi_j=\tau_i$, set $i=i+1$, and return to \stepref{alg:is-kbm:layer}. Else set $w^*_{\xi_j}=w^*_{\xi_j}\cdot (U^{(i)}_\mathbf{X}-\phi(\mathbf{X}_{\xi_j}) )/ (U^{(i)}_\mathbf{X} -  L^{(i)}_\mathbf{X}) $ and return to \stepref{alg:is-kbm:loop}.\label{alg:is-kbm:w}
	\end{enumerate}
\end{algorithm}

%%%%%%%%%%%%%%%%%%%%%%%%%%%%%%%%%%%%%%%%%%

\subsection{Simulating from the Quasi-stationary Distribution} \label{s:emcfd:smc} \label{s:convergence} \label{s:qsmc} 

In theory we can use our ability to simulate from $\mu_t(\bx)$, using either rejection sampling to simulate from the quasi-stationary distribution of our killed Brownian motion, or importance sampling to approximate this distribution. We would need to specify a `burn-in' period of length $t^*$ say, as in MCMC, and then simulate from $\mu_{t^*}(\bx)$. If $t^*$ is chosen appropriately these samples would essentially be draws from the quasi-stationary distribution. Furthermore we can propagate these samples forward in time to obtain samples from $\mu_t(\bx)$ for $t>t^*$, and these would, marginally, be draws from essentially the quasi-stationary distribution.\\
\\
However, in practice this simple idea is unlikely to work. We can see this most clearly with the rejection sampler, as the probability of survival will decrease exponentially with $t$ --- and thus the rejection probability will often be prohibitively large.\\
\\
There have been a number of suggested approaches to overcome the inefficiency of this na{\"i}ve approach to simulating from a quasi-stationary distribution (see for example  \cite{pre:dd05,mprf:gj13}, and the recent rebirth methodology of \cite{aap:bgz16}). Our approach is to use ideas from sequential Monte Carlo. In particular, we will discretise time into $m$ intervals of length $T/m$ for some chosen $T$ and $m$. Defining $t_i:=iT/m$ for $i=1,\ldots,m$, we use our importance sampler to obtain an $N$-sample approximation of $\mu_{t_1}(\bx)$; this will give us $N$ particles, that is realisations of $\bx_{t_1}$, and their associated importance sampling weights. We normalise the importance sampling weights, and calculate the empirical variance of these normalised weights at time $t_1$. If this is sufficiently large we resample the particles, by simulating $N$ times from the empirical distribution defined by the current set of weighted particles. If we resample, we assign each of the new particles a weight $1/N$.\\
\\ 
The set of weighted particles at time $t_1$ is then propagated to obtain a set of $N$ weighted particles at time $t_2$. The new importance sampling weights are just the weights at time $t_1$, prior to propagation, multiplied by the (incremental) importance sample weight calculated when propagating the particle from time $t_1$ to $t_2$. The above resampling procedure is applied, and this whole iteration is repeated until we have weighted particles at time $T$. This approach is presented as the \textit{Quasi-Stationary Monte Carlo (QSMC)} algorithm in \algref{alg:qsmc} in which ${N_{\text{eff}}}$ is the effective sample size of the weights \citep{jasa:klw94}, a standard way of monitoring the variance of the importance sampling weights within sequential Monte Carlo, and $N_{\text{th}}$ is a user chosen threshold which determines whether or not to resample. The algorithm outputs the weighted particles at the end of each iteration.
\begin{algorithm}[h]
	\caption{Quasi-Stationary Monte Carlo Algorithm (QSMC) Algorithm.} \label{alg:qsmc}
    \begin{enumerate}
	\item \textbf{Initialisation Step ($i=0$)}
	\begin{enumerate}
	\item Input: Starting distribution, $f_{\bx_0}$, number of particles, $N$, and  set of $m$ times $t_{1:m}$.
	\item $\mathbf{X}^{(\cdot)}_0$: For $k$ in $1$ to $N$ simulate $\mathbf{X}^{(1:N)}_{t_0}\sim f_{\bx_0}$ and set $w^{(1:N)}_{t_0}=1/N$.
	\end{enumerate}
	\item \textbf{Iterative Update Steps ($i=i+1$ while $i\leq m$)}
	\begin{enumerate}
	\item $N_{\text{eff}}$: If ${N_{\text{eff}}}\leq N_{\text{th}}$ then for $k$ in $1$ to $N$ resample $\mathbf{X}^{(k)}_{t_{i-1}} \sim \tilde{\pi}^N_{t_{i-1}}$, the empirical distribution defined by the current set of weighted particles, and set $w^{(k)}_{t_{i-1}} = 1/N$.
	\item For $k$ in $1$ to $N$,
	\begin{enumerate}
	\item $\mathbf{X}^{(\cdot)}_{t_i}$: Simulate $\mathbf{X}^{(k)}_{t_i} | \mathbf{X}^{(k)}_{t_{i-1}}$ along with un-normalised weight increment $w^*_{t_i-t_{i-1}}$ as per \algref{alg:is-kbm}.\label{alg:qsmc:scalesim1} 
	\item ${w'}^{(\cdot)}_{\!\!t_i}$: Calculate un-normalised weights, ${w'}^{(k)}_{\!\!t_i} = w^{(k)}_{t_{i-1}}\cdot w^{*}_{t_i-t_{i-1}}$. \label{alg:qsmc:accrej1}
	\end{enumerate}
	\item $w^{(\cdot)}_{t_i}$: For $k$ in $1$ to $N$ set $w^{(k)}_{t_i} = {w'}^{(k)}_{\!\!t_i}/\sum^N_{l=1} {w'}^{(l)}_{\!\!t_i}$.
	\item $\tilde{\pi}^N_{t_i}$: Set $\tilde{\pi}^N_{t_i}(\!\ud \mathbf{x}) := \sum^N_{k=1} w^{(k)}_{t_i}\cdot\delta_{\mathbf{X}^{(k)}_{t_i}}(\!\ud \mathbf{x})$.
	\end{enumerate}	
	\end{enumerate}
\end{algorithm}
\\\\ \noindent Given the output from \algref{alg:qsmc},  the target distribution $\pi$ can be estimated as follows. 
After choosing a burn-in time, $t^*(\in(t_0,\ldots{},t_m))$, sufficiently large to provide reasonable confidence that quasi-stationarity has been `reached'. The approximation to the law of the killed process is then simply the weighted occupation measures of the particle trajectories in the interval $[t^*,T]$. More precisely, using the output of the QSMC algorithm,
\begin{align}
\pi(\!\ud \mathbf{x}) \approx \hat{\pi}(\!\ud \mathbf{x}) 
& := \dfrac{1}{m(T-t^*)/T}\sum^m_{i=m(T-t^*)/T} \sum^N_{k=1}w^{(k)}_{t_i}\cdot \delta_{\mathbf{X}^{(k)}_{t_i}}(\!\ud \mathbf{x}). \label{eq:occupation2}
\end{align}
For concreteness, for a suitable $L^1(\pi )$ function, $g$, the Monte Carlo estimator can simply be set to,
\begin{equation}\label{eq:occupation3}
{\widehat {\pi (g)} }= {1\over m(T-t^*)/T} \sum_{i=m(T-t^*)/T}^m\sum_{k=1}^N 
w^{(k)}_{t_i} \cdot g\left({ \mathbf{X}^{(k)}_{t_i}}\right).
\end{equation}
The general ($g$-specific) theoretical effective sample size (ESS) is given by $\text{Var}_\pi\,g/\text{Var}\,{\widehat {\pi (g)} }$. Practical approximation of ESS is discussed in \apxref{app:ESS}.
%%%%%%%%%%%%%%%%%%%%%%%%%%%%%%%%%%%%%%%%%%

\FloatBarrier
\section{Sub-sampling} \label{s:scale} 

We now return to the problem of sampling from the posterior in a big data setting and will assume we can write the target posterior as
\begin{align}
\pi(\bx) (=\pi _n(\bx) ) \propto \prod_{i=0}^n f_i(\bx), 
\end{align}
where $f_0(\bx)$ is the prior and $f_1(\bx),\ldots,f_n(\bx)$ are likelihood terms. Note that to be consistent with our earlier notation $\bx$ refers to the parameters in our model. The assumption of this factorisation is quite weak and includes many classes of models exhibiting various types of conditional independence structure.\\
\\
It is possible to  sample from this posterior using \algref{alg:qsmc} by choosing $\phi(\bx)$, and hence $\kappa(\bx)$, which determines the death rate of the killed Brownian motion, as defined in (\ref{e:phidef}) and (\ref{e:kappadef}) respectively. In practice this will be computationally prohibitive as at every potential death event we determine acceptance by evaluating $\phi(\bx)$, which involves calculating derivatives of the log-posterior, and so requires accessing the full data set of size $n$. However, it is easy to  estimate $\phi(\bx)$ unbiasedly using sub-samples of the data as the log-posterior is a sum over the different data-points. Here we show that we can use such an unbiased estimator of $\phi(\bx)$ whilst still simulating the underlying killed Brownian motion exactly.

%%%%%%%%%%%%%%%%%%%%%%%%%%%%%%%%%%%%%%%%%%

\subsection{Simulating Killed Brownian Motion with an Unbiased Estimate of the Killing Rate} \label{s:poissonintro}

To introduce the proposed approach we begin by assuming we can simulate an auxiliary random variable $A\sim\mathcal{A}$, and (without loss of generality) construct a positive unbiased estimator, $\tilde{\kappa}_A(\cdot)$, such that
\begin{align} \label{eq:kappaunbiased}
 \mathbbm{E}_{\mathcal{A}}\left[\tilde{\kappa}_A(\cdot)\right] = \kappa(\cdot).
\end{align}
The approach relies on the following simple result which is stated in a general way as it is of independent interest for simulating from events of probability which are expensive to compute, but that admit a straightforward unbiased estimator. Its proof is trivial and will be omitted.
\begin{proposition} \label{prop: p-coin}
Let $0\le p\le 1$, and suppose that $P$ is a random variable with $\mathbbm{E}(P)=p$ and $0\le P\le 1$ almost surely. Then if $u\sim \U[0,1]$, the event
$\{ u\le P\}$ has probability $p$.
\end{proposition}
\noindent We now adapt this result to our setting, noting that the randomness obtained by direct simulation of a $p$-coin, and that using \propref{prop: p-coin}, is indistinguishable.\\
\\
Recall that in \secref{s:killedBM} in order to simulate a Poisson process of rate $\kappa$, Poisson thinning was used. The initial step is to first find, for the Brownian motion trajectory constrained to the hypercube $\mathcal{H}$, a constant $ \Kappa_\bX \in \mathbbm{R}_+$ such that $\forall \bx \in \mathcal{H}$, $\kappa(\bx) \leq  \Kappa_\bX$ holds. Then a dominating Poisson process of rate $\Kappa_\bX$ is simulated to obtain potential death events, and then in sequence each potential death event is accepted or rejected. A single such event, occurring at time $\xi$ will be accepted as a death with probability $\kappa(\bx_\xi)/\Kappa_\bX$.\\
\\
An equivalent formulation would simulate a Poisson process of rate $\kappa$ using a dominating Poisson process of higher rate $\tilde{\Kappa}_\bX\geq \Kappa_\bX$. This is achieved by simply substituting $\Kappa_\bX$ for $\tilde{\Kappa}_\bX$ in the argument above. However, the penalty for doing this is an increase in the expected computational cost by a factor of $\tilde{\Kappa}_\bX/\Kappa_\bX$ -- therefore it is reasonable to expect to have a larger number of potential death events, each of which will have a smaller acceptance probability.\\
\\
Now, suppose for our unbiased estimator $\tilde{\kappa}_A$ it is possible to identify some $ \tilde{\Kappa}_\bX \in \mathbbm{R}_+$ such that for $\mathcal{A}$-almost all  $A$, and all $\bx\in \mathcal{H}$, $0 \leq \tilde{\kappa}(\bx) \leq \tilde{\Kappa}_\bX$. Noting from (\ref{eq:kappaunbiased}) that we have an unbiased $[0,1]$-valued estimator of the probability of a death event in the above argument (i.e. $\mathbbm{E}_{\mathcal{A}}[\tilde{\kappa}_A(\bx)/ \tilde{\Kappa}] = \kappa(\bx)/\tilde{\Kappa}$), and by appealing to \propref{prop: p-coin}, another (entirely equivalent) formulation of the Poisson thinning argument above is to use a dominating Poisson process of rate $\tilde{\Kappa}_\bX$, and determine acceptance or rejection of each potential death event by simulating $A\sim \mathcal{A}$ and accepting with probability $\tilde{\kappa}_A(\bx_\xi)/\tilde{\Kappa}$ (instead of $\kappa(\bx_\xi)/\tilde{\Kappa}$).\\
\\
In the remainder of this section we exploit this extended construction of Poisson thinning (using an auxiliary random variable and unbiased estimator), to develop a scalable alternative to the QSMC approach introduced in \algref{alg:qsmc}. The key idea in doing so is to find an auxiliary random variable and unbiased estimator which can be simulated and evaluated without fully accessing the data set, while ensuring the increased number of evaluations necessitated by the ratio $\tilde{\Kappa}_\bX/\Kappa_\bX\geq 1$ does not grow too severely.

%%%%%%%%%%%%%%%%%%%%%%%%%%%%%%%%%%%%%%%%%%

\subsection{Constructing a scalable replacement estimator} \label{s:repest}

Noting from (\ref{e:phidef}) and (\ref{e:kappadef}) that the selection of $\kappa(\bx)$ required to sample from a posterior $\pi(\bx)$ is determined by $\phi(\bx)$, in this section we focus on finding a practical construction of a scalable unbiased estimator for $\phi(\bx)$. Recall that, 
\begin{align}
\phi(\mathbf{x}) := (\| \nabla\log\pi(\mathbf{x})\|^2 + \Delta
%\text{ div\,} \nabla
\log\pi(\mathbf{x}))/2, 
\end{align}
and that as in \algref{alg:qsmc}, whilst staying within hypercube $\mathcal{H}_i$, it is possible to find constants ${L}^{(i)}_\bX$ and ${U}^{(i)}_\bX$ such that ${L}^{(i)}_\bX\leq \phi(\bx) \leq {U}^{(i)}_\bX$. As motivated by \secref{s:poissonintro}, it is then possible to construct an auxiliary random variable $A\sim\mathcal{A}$, and an unbiased estimator $\phi_A$ such that 
\begin{align} \label{eq:phiunbiased}
 \mathbbm{E}_{\mathcal{A}}\left[\phi_A(\cdot)\right] = \phi(\cdot),
\end{align}
and to determine constants $\tilde{U}^{(i)}_\bX\geq U^{(i)}_\bX$ and $\tilde{L}^{(i)}_\bX\leq L^{(i)}_\bX$ such that within the same hypercube we have $\tilde{L}^{(i)}_\bX\leq \tilde{\phi}_A(\bx) \leq \tilde{U}^{(i)}_\bX$. To ensure the validity of our QSMC approach, as justified by \thmref{thm:qsd} in \secref{s:qsmc}, it is necessary to substitute \conref{cond:phi} with the following (similarly weak) condition:
\begin{cond} \label{cond:phitilde}
There exists a constant $\tilde{\Phi} > -\infty$ such that $\tilde{\Phi} \leq \tilde{\phi}_A(\mathbf{u})$ for $\mathcal{A}$-almost every $A$, $\forall\mathbf{u}\in\mathbbm{R}^d$.
\end{cond}
\noindent To ensure practicality and scalability it is crucial to focus on ensuring that the ratio
\begin{align}
\dfrac{\tilde{\lambda}}{\lambda}  = \dfrac{\tilde{U}^{(i)}_\bX-\tilde{L}^{(i)}_\bX}{{U}^{(i)}_\bX-{L}^{(i)}_\bX} \ ,%\geq 1, 
\label{eq:lamba def}
\end{align}
where $\tilde{\lambda}:= \tilde{U}^{(i)}_\bX-\tilde{L}^{(i)}_\bX$
does not grow too severely with the size of the data set (as this determines the multiplicative increase in the rate, and hence increased inefficiency, of the dominating Poisson process required within \algref{alg:qsmc}). To do this, our approach develops a tailored control variate, of a similar type to that which has since been successfully used within the concurrent work of two of the authors on MCMC (see \cite{arxiv:bfr16}).\\
\\
To implement the control variate estimator, it is first necessary to find a point close to a mode of the posterior distribution $\pi$, denoted by $\hat{\mathbf{x}}$. In fact for the scaling arguments to hold,  $\hat{\mathbf{x}}$ should be within $\mathcal{O}(n^{-1/2})$ of the true mode, and achieving this is a less demanding task than actually locating the mode. Moreover we note that this operation is only required to be done once, and not at each iteration, and so can be done fully in parallel. In practice it would be possible to use a stochastic gradient optimisation algorithm to find a value close to the posterior mode, and we recommend then starting the simulation of our killed Brownian motion from this value, or from some suitably chosen distribution centred at this value. Doing this substantially reduces the burn-in time of our algorithm. In the following section we describe a simpler method applicable when two passes of the full data set can be tolerated in the algorithm's initialisation.\\
\\
%=============================================
%TO BE MOVED TO MORE APPOPRIATE PLACE AND ALTERED TO BE MORE COHERENT 
%
%{\color{red}
Addressing scalability for multi-modal posteriors is a more challenging problem, and goes beyond what is addressed in this paper, but is of significant interest for future work. We do, however, make the following remarks.
In the presence of multi-modality, stochastic gradient optimisation schemes may converge to the wrong mode. This is still good enough
as long as possible modes are separated by a distance which is $\mathcal{O}(n^{-1/2})$; when separate modes which are separated by more than $\mathcal{O}(n^{-1/2})$ exist, an interesting option would be to adopt multiple control variates.
% When we have posterior contraction at rate $\mathcal{O}(n^{-1/2})$ (ie that the collection of posteriors $\{n^{1/2}(\mathbf{x} - \mathbf{x}_0), n=1, 2, \ldots \}$ is tight) the above scaling arguments 
%In the case where  ${\hat{\mathbf{x}}}$ is within $\mathcal{O}(n^{-1/2})$ of a local mode, but not $\mathcal{O}(n^{-1/2})$ of the true mode the usual situation of regular posterior contraction in finite-dimensional continuous parameter spaces 
%(see for example Section 10.2 of \cite{VdV}) does not occur. 
%This will be addressed further at the end of this section. In the case of multi-modality, but with the usual posterior contraction, then the $O(1)$ iteration cost scalability will still hold, but the computational cost may have additional constant factors. One possible solution for such a situation may be to note that there is no limit to the number of  control variate, and a successful implementation may use many, or indeed develop methodologies which refresh control variates as a by-product. }
%\newline
%==============================
\\
\\
Remembering that $\log \pi(\bx)=\sum_{i=0}^n \log f_i(\bx)$ and letting $\mathcal{A}$ be the law of $I\sim \U\{0, \ldots ,n\}$, our control variate estimator is constructed thus
\begin{align}
\mathbbm{E}_{\mathcal{A}}\big[\underbrace{(n+1)\cdot\left[\nabla\log f_I(\mathbf{x})-\nabla\log f_I(\hat{\mathbf{x}})\right]}_{=:\tilde{\alpha}_I(\mathbf{x})}\big] = \underbrace{\nabla\log\pi(\mathbf{x})-\nabla\log\pi(\hat{\mathbf{x}})}_{=:\alpha(\mathbf{x})}. \label{eq:alphatilde}
\end{align}
As such,  $\phi(\mathbf{x})$ can be re-expressed as
\begin{align}
\phi(\mathbf{x}) 
& = (\alpha(\mathbf{x})^{T}\left(2\nabla\log\pi(\hat{\mathbf{x}})
+ \alpha(\mathbf{x})\right) + \text{div\,}\alpha(\mathbf{x}))/2 + C,
\end{align}
where $C := \| \nabla\log\pi(\hat{\mathbf{x}})\|^2 /2
+  \Delta \log\pi(\hat{\mathbf{x}})/2$ is a constant. Letting $\mathcal{A}$ now be the law of $I,J\overset{\text{iid}}{\sim} \U\{0, \ldots ,n\}$ the following unbiased estimator of $\phi$ can be constructed,
\begin{align}
\mathbbm{E}_{\mathcal{A}}\big[\underbrace{(\tilde{\alpha}_I(\mathbf{x}))^{T}\left(2\nabla\log\pi(\hat{\mathbf{x}})+ \tilde{\alpha}_J(\mathbf{x})\right) + 
\text{div\,}\tilde{\alpha}_I(\mathbf{x}))/2 + C}_{=:\tilde{\phi}_A(\mathbf{x}) }\big] = \phi(\mathbf{x}). \label{eq:subsamplelambda}
\end{align}
The estimators $\tilde{\alpha}_I(\mathbf{x})$ and $\tilde{\phi}_A(\mathbf{x})$ are nothing more than classical {\em control variate} estimators, albeit in a fairly elaborate setting, and henceforth we shall refer to these accordingly.

The construction of the estimator requires evaluation of the constants $\nabla\log\pi(\hat{\mathbf{x}})$ and $\Delta
%\text{ div\,}\nabla
\log\pi(\hat{\mathbf{x}})$. Although both are $\mathcal{O}(n)$ evaluations they only have to be computed once, and furthermore, as mentioned above,  can be calculated entirely in parallel. \\
\\
The unbiased estimators $\tilde{\alpha}_I(\mathbf{x})$ and $\tilde{\phi}_A(\mathbf{x})$ use (respectively) single and double draws from $\{1,\ldots ,n\}$ although  it is possible to replace these by averaging over multiple draws (sampled with replacement), although this is not studied theoretically in the present paper and is exploited only in Section~\ref{s:examp:2} of the empirical study. \\
\\
Embedding our sub-sampling estimator described above within the QSMC algorithm of \secref{s:qsmc}, results in \algref{alg:scale} termed the \textit{Scalable Langevin Exact algorithm (ScaLE)}.  A similar modification could be made to the rejection sampling version, R-QSMC, which was discussed in \secref{s:qsmc} and detailed in \apxref{apx:rqsmc}. This variant  is termed the \textit{Rejection Scalable Langevin Exact algorithm (R-ScaLE)} and full algorithmic details are provided in \apxref{apx:r-scale}. 
\begin{algorithm}[h]
	\caption{The ScaLE Algorithm (as per \algref{alg:qsmc} unless stated otherwise).} \label{alg:subscale3} \label{alg:scale}
	\begin{enumerate}
    \item[0.] Choose $\hat{\mathbf{x}}$ and compute $\nabla\log\pi(\hat{\mathbf{x}})$, $%\text{ div\,}\nabla
    \Delta \log\pi(\hat{\mathbf{x}})$.
        \item[\ref{alg:qsmc:scalesim1}.] On calling \algref{alg:is-kbm},
    \begin{enumerate}
    \item Replace $L^{(i)}_{\mathbf{X}},U^{(i)}_{\mathbf{X}}$ in \stepref{alg:is-kbm:layer} with $\tilde{L}^{(i)}_{\mathbf{X}},\tilde{U}^{(i)}_{\mathbf{X}}$.
    \item Replace \stepref{alg:is-kbm:w} with: $\tau_i$: If $\xi_j=\tau_i$, set $i=i+1$, and return to \stepref{alg:is-kbm:layer}. Else simulate $A_j=(I_j,J_j)$, with $I_j,J_j\overset{\text{iid}}{\sim} \U\{0, \ldots ,n\}$, and set $w^*_{\xi_j}=w^*_{\xi_j}\cdot (\tilde{U}^{(i)}_\mathbf{X}-\tilde{\phi}_{A_j}(\mathbf{X}_{\xi_j}) )/ (\tilde{U}^{(i)}_\mathbf{X} -  \tilde{L}^{(i)}_\mathbf{X})$ (where $\tilde{\phi}_{A_j}$ is defined as in (\ref{eq:subsamplelambda})) and return to \algstref{alg:is-kbm}{alg:is-kbm:loop}.
    \end{enumerate}
	\end{enumerate}
\end{algorithm}

\subsection{Implementation Details} \label{s:initproc}
In this section we detail some simple choices of the various algorithmic parameters which lead to a concrete implementation of the ScaLE algorithm. These choices have been made on the bases of parsimony and convenience and are certainly not optimal.\\
\\
In practice, we are likely to want to employ a suitable preconditioning transformation: $\bX'=\precon^{-1}\bX$ before applying the algorithm in order to roughly equate scales for different components. If we did not do this, it is likely that some components would mix particularly slowly.
Obtaining a suitable $\hat{\mathbf{x}}$ and $\precon$ is important. %Our experience is that this can be done with at most two passes of the full data set, if necessary processing it in batches.
One concrete approach, and that used throughout our empirical study except where otherwise stated, is as follows. Divide a data set into a number of batches which are small enough to be processed using standard maximum likelihood estimation approaches and estimate the MLE and observed Fisher information for each batch; $\hat{\mathbf{x}}$ can then be chosen to be the mean of these MLEs and $\precon^{-1}$ to be a diagonal matrix with elements equal to the square root of the sum of the diagonal elements of the estimated information matrices. Better performance would generally be obtained using a non-diagonal matrix, but this serves to illustrate a degree of robustness to the specification of these parameters. The constants required within the control variate can then be evaluated. For a given hypercube, $\mathcal{H}$, a bound, $\widetilde{K}_{\mathbf{X}}$, on the dominating Poisson process intensity can then be obtained by simple analytic arguments facilitated by extending that hypercube to include $\hat{\mathbf{x}}$ and obtaining bounds on the modulus of continuity of $\widetilde{\phi}_A$. In total, two passes of the full dataset are required to obtain the necessary algorithmic parameters and to fully specify the control variate.\\
\\
As discussed in \secref{s:qsmc}, it is necessary to choose an execution time, $T$, for the algorithm and an auxiliary mesh ($t_0:=0,t_1,\ldots{},t_m:=T$) on which to evaluate $g$ in the computation of the QSMC estimator (\ref{eq:occupation3}). Note that within the algorithm the particle set is evolving according to killed Brownian motion with a preconditioning matrix $\precon^{-1}$ chosen to approximately match the square root of the information matrix of the target posterior. As such, $T$ should be chosen to match the time taken for preconditioned Brownian motion to explore such a space, which in the examples considered in this paper ranged from $T\approx 1$ to $T\approx 100$. The number of temporal mesh points, $m$, was chosen with computational considerations in mind --- increasing $m$ increases the cost of evaluating the estimator and leads to greater correlation between the particle set at consecutive mesh points, but ensures when running the algorithm on a multiple user cluster that the simulation is periodically saved and reduces the variance of the estimator. As the computational cost of the algorithm is entirely determined by the bounds on the discussed modulus of continuity of $\tilde{\phi}_A$, in each of the examples we later consider our mesh size was loosely determined by the inverse of this quantity and ranged from $(t_i-t_{i-1})\approx 10^{-3}$ to $(t_i-t_{i-1})\approx 10^{-6}$.\\
\\
The initial distribution $f_{\mathbf{x}_0}$ is not \emph{too} critical, provided that it is concentrated reasonably close (within a neighbourhood of size $\mathcal{O}(n^{-1/2})$) to the mode of the distribution. The stability properties of the SMC implementation ensure that the initial conditions will be forgotten (see Chapter 7 of \citet{bk:fk} for a detailed discussion). The empirical results presented below were obtained by choosing as $f_{\mathbf{x}}$, either a singular distribution concentrated at $\hat{\mathbf{x}}$ or a normal distribution centred at that location with a covariance matrix matching $\precon \precon^T$; results were found to be insensitive to the particular choice.

%%%%%%%%%%%%%%%%%%%%%%%%%%%%%%%%%%%%%%%%%%

\section{%Bounding the Sub-sampling Killing Rate and the 
Complexity of ScaLE} \label{s:scalecomplexity}

%===================================
%{\color{red} Experimental bit}

The computational cost of ScaLE will be determined by two factors: the speed at which $\mu _t$ approaches $\pi $ and the computational cost of running the algorithm per unit algorithm time. Throughout the exposition of this paper, the proposal process is simple Brownian motion. Due to posterior contraction, as $n$ grows this proposal Brownian motion moves increasingly rapidly through the support of $\pi $. On the other hand, as $n$ grows, killing rates will grow. In this subsection we shall explore in detail how the computational cost of ScaLE varies with $n$ (its complexity) while bringing out explicitly the delicate link to the rate of posterior contraction and the effect of the choice of ${\hat\bx}$.

We start by examining the speed of convergence of $\mu _t$ and in particular its dependence on posterior contraction. Being more explicit about posterior contraction, we say that $\{\pi _n\}$ are $\mathcal{O}(n^{-\eta/2})$ 
or have {\em contraction rate $\eta /2$} for some $\eta > 0$ to a limit $\bx _0$ if for all $\epsilon >0$ there exists $K>0$ such that when $\bX _n \sim \pi _n$,
$
{\Bbb P} (|\bX_n - \bx _0|>Kn^{-\eta /2})<\epsilon $.
It is necessary to extend the definition of $\mu_t$ to a setting where $n$ increases, hence define
\begin{equation}
\label{e:mute} 
\mu _t^{n,{\mathbf{u}}}(d\bx ) := \mathbbm{P}(\bX_t \in d\bx \mid \zeta >t , \bX_0=\bx_0+n^{-\eta /2} {\mathbf{u}}).
\end{equation}
Since we are dealing with Markov processes that are essentially never uniformly ergodic, it is impossible to control convergence times uniformly. The specification of the initial value as
$\bX_0=\bx_0+n^{-\eta /2} {\mathbf{u}}$, which, as $n$ increases, remains close to the centre of the posterior as specified through the contraction rate, goes as far as we can before incurring additional computational costs for bad starting values.

Set $$
T_{n,{\mathbf{u}},\epsilon} = \inf\{t\ge 0;\ 
\|  \mu _t^{n,{\mathbf{u}}} - \pi _n\| <\epsilon
\}
$$
where $\| \cdot  \| $ represents total variation distance.
It will be necessary to make the following technical assumption. For all $\epsilon ,K >0$ 
\begin{equation}
\label{e:unifqsconv}
\limsup_{n\to \infty } 
\sup_{|\mathbf{u} |<K }
n^{\eta }
T_{n,\mathbf{u},\epsilon} < \infty 
\end{equation}
At first sight, assumption (\ref{e:unifqsconv}) may seem %remarkably 
strong, but it is  very natural and is satisfied in reasonable situations.
For example suppose we have a contraction scaling limit:
\def\bY{{\mathbf{Y}}}
$\pi _n (dx) \approx h\left({\bx - \bx _0 \over n^{\eta /2}}\right)$. (A special case of this is the Bernstein--von Mises theorem with $\eta =1 $ and $h$ being Gaussian, but our set up is far broader.) If $\{ \bX _t ^n\}$ denotes ScaLE on $\pi _n$, then by simple scaling and time change properties of Brownian motion it is easily checked that if $\bY_t = \bX _{n^{-\eta } t}$ then $\bY $ is (approximately) ScaLE on $h$ which is clearly independent of $n$. Thus to obtain a process which converges in $\mathcal{O} (1)$ we need to {\em slow down} $\bX $ by a time scaling factor of 
\begin{equation}
\label{e:timefactor}
\hbox{time factor }=n^{\eta }\ . \end{equation}
Similar arguments have been used for scaling arguments of other Monte Carlo algorithms that use similar control variates, see for instance the concurrent work of \cite{arxiv:bfr16}.

While posterior contraction has a positive effect on computational cost, it is also the case that for large $n$ the rate at which a likelihood subsample needs to be calculated, as measured by $\tilde{\lambda }$, needs to increase. Since $\tilde{ \lambda }$ depends on the current location in the state space, where we need
to be precise we shall set $\tilde{ \lambda }_{n,K}$ to be an available bound which applies uniformly for $|\bx - \bx _0| <Kn^{-\eta /2}$.

\def\iterC{\mathcal{C}_{\hbox{\small iter}}}
\def\initC{\mathcal{C}_{\hbox{\small init}}}
The following notion of {\em convergence cost} will be required: setting
$$
\iterC = \iterC(n, K,\epsilon ) =  T_{n,K, \epsilon} \cdot  \tilde{\lambda }_{n,K}
$$
ScaLE is said to have {\em iteration complexity} $n^{\varpi}$ or, equivalently, is $\mathcal{O} (n^{\varpi})$ if $%{\mathcal{C}}
\iterC
(n, K,\epsilon )$ 
is $\mathcal{O} (n^{\varpi})$ for all $K, \epsilon >0$.

Therefore to understand iteration complexity of ScaLE it is necessary to understand the rate at which $\tilde{ \lambda }_{n,K}$ grows with $n$.
%To implement the ScaLE algorithm requires the calculation of upper and lower bounds on $\tilde{\phi}_A(\mathbf{x})$.  The major computational cost of the algorithm comes through re-weighting which occurs at times of a Poisson process which  has (varying) intensity ${\tilde \lambda }$ (as defined in \ref{eq:lamba def}). This section will demonstrate that ${\tilde \lambda }$ (and hence the ratio $\tilde{\lambda}/\lambda$) does in fact typically scale as $\mathcal{O}(1)$, so that the iterative cost of the algorithm is itself $\mathcal{O}(1)$.\\
%\\
A general way to do this is to use global, or local, bounds on the second-derivatives of the log-likelihood for each datum. To simplify the following exposition  a global bound is assumed, so that  
\begin{align}
 \rho (\nabla^2\log f_I(\mathbf{x}))\leq P_{n}, \label{eq:hessiancon}
\end{align}
for some $P_{n}>0$, where $\rho (\cdot)$ represents the spectral radius and $\nabla ^2$ is the Hessian matrix.  For smooth densities with Gaussian and heavier tails, the Hessian of the log-likelihood is typically uniformly bounded (in both data and parameter). In such cases such a global bound would be expected, and in fact  {$P_{n}$} would be constant in $n$.\\
\\
Recalling the layer construction of \secref{s:langevin} {for a single trajectory of killed Brownian motion,}  we can ensure that over any finite time interval we have $\mathbf{x} \in \mathcal{H}$, some hypercube. Let the centre of the hypercube be $\mathbf{x}^*$.\\
\\
In this section, eventually the assumption that the posterior contracts at a rate $n^{-{\eta /2}}$ will be made, i.e. that
$\{n^{\eta /2}(\mathbf{x} - \mathbf{x}_0), n=1, 2, \ldots \}$ is tight.
 The so-called {\em regular case} corresponds to the case where $\eta =1$, although there is no need to make any explicit assumptions about normality in the following. 
 The practitioner has complete freedom to choose $\mathcal{H}$, and it makes sense to choose this so that $\|\mathbf{x}-\mathbf{x}^*\|< C^*n^{-\eta/2}$ for some $C^*>0$ and for all $\mathbf{x} \in \mathcal{H}$.\\
\\
It is possible to bound $\tilde{\phi}_A(\mathbf{x})$ both above and below if we can bound  $|\tilde{\phi}_A(\mathbf{x})|$  over $\mathcal{A}$-almost all possible realisations of {$A$}. To bound $|\tilde{\phi}_A(\mathbf{x})|$, the approach here is to  first consider the elementary estimator in (\ref{eq:alphatilde}). By imposing the condition in (\ref{eq:hessiancon}) we can then obtain
\begin{align}
\max_{\mathbf{x} \in \mathcal{H},I\in\{0,\ldots{},n\}}|\tilde{\alpha}_I(\mathbf{x})| \leq (n+1)\cdot P_{{n}} \cdot \max_{\mathbf{x} \in \mathcal{H}}\|\mathbf{x}-\hat{\mathbf{x}}\|.
\end{align}
Thus it is possible to bound the estimator in (\ref{eq:subsamplelambda}) as follows %(recalling $C := \| \nabla\log\pi(\hat{\mathbf{x}})\|^2 
%+  \text{ div\,} \nabla\log\pi(\hat{\mathbf{x}}))/2
%$)
\begin{align}
& 2\max_{\mathbf{x} \in \mathcal{H},A\in\mathcal{A}}|\tilde{\phi}_A(\mathbf{x})-  C
%\| \nabla\log\pi(\hat{\mathbf{x}})\|^2 
| \leq \nonumber \\
&  
{(n+1)P_n 
\max_{\mathbf{x} \in \mathcal{H}} \|\mathbf{x}-\hat{\mathbf{x}}\| 
\left[
|2\nabla\log\pi(\hat{\mathbf{x}})| +P_{{n}} 
(n+1) \max_{\mathbf{x} \in \mathcal{H}} \|\mathbf{x}-\hat{\mathbf{x}}\|
\right]
 +P_{{n}} d  (n+1)
 %\max_{\mathbf{x} \in \mathcal{H}} \|\mathbf{x}-\hat{\mathbf{x}}\| 
  }.
\end{align}
%The last term on the right hand side can never be larger than the first
%one in order when $\eta \le 1$. 
%Finally, 
We can use the fact that $\max_{\mathbf{x} \in \mathcal{H}}\|\mathbf{x}-\hat{\mathbf{x}}\| \leq \|\mathbf{x}^*-\hat{\mathbf{x}}\|+C^*n^{-\eta /2}$ to bound the terms in this expression.\\
\\ 
We now directly consider the iteration complexity of ScaLE.
%, and how it depends on $n$.  
%Posterior contraction at rate $\eta /2$ is consistent with the decay of the posterior variance at a rate $1/n^{\eta }$. Recalling that Brownian motion at time $t$ has variance $t$, this in turn suggests it is natural to expect only having to run the killed Brownian motion for a time of $O(n^{-\eta })$ to have sufficient opportunity to mix through the support of the target. Similar arguments have been used for scaling arguments of other Monte Carlo algorithms that use similar control variates, see for instance the concurrent work \cite{arxiv:bfr16}.\\
%\\
%So for a fixed effective sample size (defined as discussed at the end of Subsection \ref{s:emcfd:smc})  it will be necessary to simulate a number of events in ScaLE (its complexity) that will be of order $\tilde{\lambda}/n^{\eta }$. 
We note %from the discussion around \ref{e:timefactor} 
that the appropriate killing rate to ensure mixing in time $\mathcal{O}(1)$ involves slowing down by the time factor given in \ref{e:timefactor},
and is therefore just ${n^{-\eta }}\tilde{\lambda} $.
Assuming $\eta\leq1$, and using the bound on $|\tilde{\phi}_A(\mathbf{x})-C|$ for the hypercube centred on $\mathbf{x}^*$, we have that whilst we remain within the hypercube,
\begin{align}
\label{e:lamtibd}
 \frac{1}{n^{\eta }}\tilde{\lambda} 
 & = \mathcal{O}\Big(
 P_{{n}} n^{1-3\eta /2}
\big(
  P_{{n}} n^{1-\eta /2}
  +|\nabla \log \pi (\hat{\mathbf{x}}) |\big)
  \Big).
  \end{align}
Here the assumption has been made that at stationarity $\mathbf{x}^*$ will be a draw from the support of the posterior, so that under the assumption of posterior contraction at the ${n^{-\eta /2 }}$ rate, then $\|\mathbf{x}^*-\hat{\mathbf{x}}\|=\mathcal{O}_p(n^{-\eta /2})$. This discussion is summarised in the following result.
\begin{theorem} \label{prop:contraction}
Suppose that (\ref{e:unifqsconv}) and (\ref{eq:hessiancon}) hold, posterior contraction occurs at rate $n^{-\eta/2}$ for $\eta\leq 1$, $P_n$ is $\mathcal{O}(1)$ and that $|\nabla \log \pi (\hat{\mathbf{x}}) | = 
\mathcal{O} (n^\iota )$ for some $\iota >0$. Then the iterative complexity of
ScaLE is $\mathcal{O} (n^{\varpi})$ where
$$
\varpi := \max(1-\eta/2,\iota) +1 - 3\eta /2.
$$
In particular, where $\iota \leq 1-\eta/2$, 
$
\varpi = 2- 2\eta.
$
If $\eta = 1$, then it follows that $\varpi =0$ and the iterative complexity of ScaLE is $\mathcal{O}$(1).
\end{theorem}
\noindent This result also illuminates the role played by $|\nabla \log \pi (\hat{\mathbf{x}}) |$ in the efficiency of the algorithm. In the following discussion it is assumed that $\eta =1$. It is worth noting that while a completely arbitrary starting value for  $\hat{\mathbf{x}} $ might make $|\nabla \log \pi (\hat{\mathbf{x}}) |$ an ${\mathcal O}(n)$ quantity leading to an iterative complexity of the algorithm which is ${\mathcal O}(n^{1/2})$. To obtain ${\mathcal O}(1)$ it is simply required that $|\nabla \log \pi (\hat{\mathbf{x}}) |$ be ${\mathcal O}(n^{1/2})$ which gives considerable leeway for any initial explorative algorithm to find a good value for ${\hat{\bx}}$.\\
\\
Note that given bounds on the third derivatives, (\ref{e:lamtibd}) can be improved by linearising the divergence term in (\ref{eq:subsamplelambda}). This idea is exploited later in a logistic regression example (see Sections \ref{s:examp:ulr}, \ref{s:examp:alr}, \ref{s:examp:1}).\\
\\
In the absence of a global bound on the second derivatives, it is possible to replace $P_n$ in the above arguments by any constant that bounds the second-derivatives for all $\mathbf{x}$ such that $\|\mathbf{x}-\hat{\mathbf{x}}\|\leq \max_{\mathbf{x} \in \mathcal{H}} \|\mathbf{x}-\hat{\mathbf{x}}\|$.  In this case, the {\em most extreme} rate at which ${\tilde \lambda}$  can grow is logarithmically with $n$, for instance for light-tailed models where the data really comes from the model being used. Where the tails are mis-specified and light-tailed models are being used, the algorithmic complexity can be considerably worse. There is considerable scope for more detailed analyses of these issues in future work.\\
\\
The above arguments give insight into the impact of our choice of $\hat{\mathbf{x}}$. It affects the bound on $\tilde{\lambda}$, and hence the computational efficiency of ScaLE, through the terms $\|\mathbf{x}^*-\hat{\mathbf{x}}\|$. Furthermore the main term in the order of $\tilde{\lambda}$ is the square of this distance. If $\hat{\mathbf{x}}$ is the posterior mean, then the square of this distance will, on average, be the posterior variance. By comparison, if $\hat{\mathbf{x}}$ is $k$ posterior standard deviations away from the posterior mean, then on average the square distance will be ${k^2+a}$ times the posterior variance (for some constant $a$), and the computational cost of ScaLE will be increased by a factor of roughly $k^2+a$. 

\subsection{Overall complexity}
\label{s:overall}

Here we will briefly discuss the overall complexity of ScaLE.  The general setup of Theorem \ref{prop:contraction} describes the iteration complexity of ScaLE on the assumption that $|\nabla \log \pi (\hat{\mathbf{x}}) |$ grows no worse than $\mathcal{O}(n^\iota )$. However there is a substantial initial computational cost in locating $\hat{\mathbf{x}}$ and calculating $\nabla \log \pi (\hat{\mathbf{x}}) $ which is likely to be $\mathcal{O}(n)$ as there are $n$ terms in the calculation of the latter. Therefore the {\em overall complexity}
of ScaLE can be described as
$$
\mathcal{C} =  \initC + \iterC = \mathcal{O}(n)+ \mathcal{O}(n^{\varpi } t) 
$$
where $t$ represents algorithm time. This is in contrast to an MCMC algorithm for which iteration cost would be $\mathcal{O}(n)$ leading to overall complexity
$tn$. A Laplace approximation will involve an initial cost that is (at very least) $\mathcal{O}(n)$ but no further computation.  
\\
\\
Since they both involve full likelihood calculations, finding the posterior mode and finding  ${\hat x}$ are both likely to be  $\mathcal{O}(n)$ calculations. This can be shown to be the case for strongly log-concave posterior densities \citep{nesterov2013introductory}, though the cost may be higher if the log-posterior is not concave.
On the other hand, the above discussion shows that in order to achieve $\mathcal{O}(1)$
scaling with data we typically only need to find ${\hat x}$ within $\mathcal{O}(n^{-1/2})$ of the posterior model. Thus
finding ${\hat x}$ is certainly no harder than finding the posterior mode, as we can use the same mode-finding algorithm, e.g. \cite{bottou2010large,nesterov2013introductory,jin2017accelera}, but have the option of stopping earlier. 
%In practice algorithms to find a suitable ${\hat x}$ might for instance be early stopping version of mode-finding algorithms (eg Nesterov acceleration in the non-log-concave setting \cite{Nesterov,Jordan}).
%GARETH So much as I hate to play politics....
\\
\\
 If $n$ is sufficiently large that the initialisation cost dominates the iteration cost, ScaLE will computationally be no more expensive to implement than the Laplace approximation. In this case we obtain an {\em exact approximate}  algorithm (ScaLE) for at most the computational complexity of an approximate method (Laplace). These complexity considerations are summarised in Table \ref{t:complex}.
\\
%While it is not possible to be precise about the comparison between the initial costs of Laplace approximation and an algorithm which might be used in ScaLE to find a candidate ${\hat \bx }$, it should be intuitively clear that finding an appropriate ${\hat \bx }$ will likely be computationally cheaper, even if both procedures take 
%$\mathcal{O}(n)$ time.  Therefore where $n$ is sufficiently large that the initialisation cost dominates the iteration cost, as well as avoiding the error intrinsic to any approximation, ScaLE will in fact be computationally cheaper to implement.  These complexity considerations are summarised in Table \ref{t:complex}.
\begin{table}
\caption{Complexity of algorithms for big data. This is split into the complexity of initiation, $\initC$, and the cost of the iterative algorithm, $\iterC$. Here $n$ denotes sample size, $t$ denotes algorithm time, and $\varpi $ and $\eta $ are as given in Theorem \ref{prop:contraction}.
%AND VARPI AND ETA
}
\label{t:complex}
\begin{tabular}{| c | c | c | c |}
\hline
&  $\initC$ & $\iterC$ & $\mathcal{C}$ \\
\hline
MCMC &  $0$ & $tn$ &  $tn$  \\
Laplace approximation & $n$ & 0  &  $n$ \\
ScaLE & $n$ & $tn^{\varpi}$ & $n+tn^{\varpi}$ \\
ScaLE when $\eta =1$ & $n$ & $t$ & $n+t$ \\
\hline
\end{tabular}
\end{table}
\\
%%%%%%%%%%%%%%%%%%%%%%%%%%%%%%%%%%%%%%%%%%

\section{Theoretical Properties} \label{s:smctheory}
SMC algorithms in both discrete and continuous time have been studied extensively in the literature (for related theory for approximating a fixed-point distribution, including for algorithms with resampling implemented in continuous-time, see \cite{esaim:dm03,DelMoral/Miclo:2000,rousset2006}; a discrete-time algorithm to approximate a fixed-point distribution in a different context was considered by \cite{whiteley2017}). In order to avoid a lengthy technical diversion, we restrict ourselves here to studying a slightly simplified version of the problem in order to obtain the simplest and most interpretable possible form of results. The technical details of this construction are deferred to \apxref{app:smc} and give here only a qualitative description intended to guide intuition and the key result: that the resulting estimator satisfies a Gaussian central limit theorem with the usual Monte Carlo rate.\\
\\
Consider a variant of the algorithm in which (multinomial) resampling occurs at times $kh$ for $k\in \mathbb{N}$ where $h$ is a time step resolution specified in advance and consider the behaviour of estimates obtained at these times. Extension to resampling at a random subset of these resampling times would be possible using the approach of \cite{b:ddj12}, considering precisely the QSMC algorithm presented in \algref{alg:qsmc} and the ScaLE algorithm in \algref{alg:scale} would require additional technical work somewhat beyond the scope of this paper; no substantial difference in behaviour was observed.\\
\\
In order to employ standard results for SMC algorithms it is convenient to consider a discrete time embedding of the algorithms described. We consider an abstract formalism in which between the specified resampling times the trajectory of the Brownian motion is sampled, together with such auxiliary random variables as are required in any particular variant of the algorithm. Provided the potential function employed to weight each particle prior to resampling has conditional expectation (given the path) proportional to the exact killing rate integrated over these discrete time intervals  a valid version of the ScaLE algorithm is recovered.\\
\\
This discrete time formalism allows for results on more standard SMC algorithms to be applied directly to the ScaLE algorithm. We provide  in the following proposition a straightforward corollary to a result in Chapter 9 of \citet{bk:fk}, which demonstrates that estimates obtained from a single algorithmic time slice of the ScaLE algorithm satisfy a central limit theorem. 
\begin{proposition}[Central Limit Theorem]
\label{prop:3}
  In the context described, under mild regularity conditions (see references given in \apxref{app:smc}):
\begin{align*}
\lim_{N \to \infty} \sqrt{N} \left[\frac{1}{N} \sum_{i=1}^N \varphi(X^i_{hk}) -
  \mathbb{E}_{\mathbb{K}_{hk}^x}\left[\varphi(X^i_{hk})\right] \right] \Rightarrow \sigma_k(\varphi) Z
\end{align*}
where, $\varphi:\mathbb{R}^d \to\mathbb{R}$,  $Z$ is a standard normal random variable, $\Rightarrow$ denotes convergence in distribution, and $\sigma_k(\varphi)$ depends upon the precise choice of sub-sampling scheme as well as the test function of interest and is specified in \apxref{app:smc} along with the law $\mathbb{K}_{hk}^x$. 
\end{proposition}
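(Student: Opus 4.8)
The plan is to reduce the claimed central limit theorem to a standard result for Feynman--Kac particle models, namely \cite[Chapter 9]{smc:theory:Del04}, by identifying the discrete-time embedding of ScaLE described in the preceding paragraphs as precisely such a particle system. First I would make the abstract formalism concrete: the ``extended particle'' at resampling time $kh$ consists of the Brownian path increment on $[(k-1)h,kh]$ together with the path-space layer information $R$, the exponentially-spaced evaluation times $\{\xi_j\}$, and the subsampling indices $A_j=(I_j,J_j)$ — all of these are the auxiliary randomness needed to produce the unnormalised weight increment $w^*_{t_k - t_{k-1}}$. This defines a Markov chain on an enlarged state space $\widetilde{E}$, with a Markov transition kernel $M_k$ given by the simulation in \algref{alg:is-kbm} (with the ScaLE modifications of \algref{alg:scale}), and a potential function $G_k$ on $\widetilde{E}$ equal to that weight increment.

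The crucial structural point to verify is that this genuinely is a Feynman--Kac model whose time-$kh$ marginal, after normalisation, is $\mathbb{K}^x_{kh}$. This is where \thmref{thm:acceptprob2} and the identity in \eqref{eq:Px2} do the work: the conditional expectation of $G_k$ given the Brownian path equals $\exp\{-\int_{(k-1)h}^{kh}\phi(\mathbf{X}_s)\ud s\}$ (up to the common constant $e^{\Phi h}$ which cancels on normalisation), because the exponential-waiting/subsampling construction is an unbiased estimator of that quantity — and this holds for \emph{any} valid choice of bounds $\tilde L^{(i)}_{\mathbf{X}},\tilde U^{(i)}_{\mathbf{X}}$, in particular the ones used by ScaLE. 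Hence, telescoping over $k$, the product $\prod_{\ell\le k} G_\ell$ has conditional expectation given the full path equal to $\exp\{-\int_0^{kh}\phi(\mathbf{X}_s)\ud s\}$, which is exactly the Radon--Nikod\'ym derivative \eqref{eq:k killing} defining $\mathbb{K}^x_{kh}$ relative to $\mathbb{W}^x$. Once this Feynman--Kac representation is in place, the normalised empirical measure $\frac1N\sum_i\delta_{X^i_{hk}}$ is precisely the particle approximation $\eta^N_{hk}$ of the Feynman--Kac flow $\eta_{hk}$, and $\eta_{hk}(\varphi)=\mathbb{E}_{\mathbb{K}^x_{hk}}[\varphi(X_{hk})]$.

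With the model thus identified, I would invoke the central limit theorem of \cite[Chapter 9]{smc:theory:Del04} (or equivalently the continuous-time version in \cite{smc:theory:Del04}) for multinomial-resampling Feynman--Kac particle systems: under the stated mild regularity conditions — essentially boundedness of the potentials $G_k$ (guaranteed here because $G_k\in[0,e^{\Phi h}]$ by the bound construction, i.e. \eqref{eq:pois ap} with the tilded bounds) and square-integrability of $\varphi$ against $\eta_{hk}$ — one obtains $\sqrt N(\eta^N_{hk}(\varphi)-\eta_{hk}(\varphi))\Rightarrow \sigma_k(\varphi)Z$, with the asymptotic variance $\sigma_k^2(\varphi)$ given by the usual sum over resampling times of the Feynman--Kac variance terms; since the kernels $M_k$ and potentials $G_k$ depend on the chosen subsampling scheme, so does $\sigma_k(\varphi)$, which is why it is only specified fully in \apxref{app:smc}.

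**Main obstacle.**
The genuinely delicate step is not the invocation of Del Moral's CLT, which is essentially black-box, but the verification that ScaLE — with the \emph{nested} randomness of path-space layers, Poisson-type event times \emph{and} i.i.d.\ subsampling indices drawn afresh at each event — fits the Feynman--Kac template with the correct conditional expectation. One must be careful that the auxiliary variables are resampled (i.e.\ regenerated independently) along with the particles at each resampling time, so that the enlarged process really is Markov and the potential really is a bounded measurable function of the current enlarged state; and one must check that the unbiasedness in \eqref{eq:phiunbiased}--\eqref{eq:subsamplelambda} composes correctly through the product structure of \eqref{eq:Px2} without introducing correlation across layers or across resampling periods. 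Getting this bookkeeping exactly right — which is precisely what is deferred to \apxref{app:smc} — is the heart of the argument; everything downstream is standard.
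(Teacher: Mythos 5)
Your proposal is correct and follows essentially the same route as the paper: the appendix likewise embeds ScaLE as a discrete-time Feynman--Kac model on an enlarged space $(C(h(k-1),hk],\mathcal{Z}_k)$ with an (initially unspecified) potential $G_k$ required only to satisfy the conditional-expectation identity $\frac{d\mathbb{K}^x_{0,hk}}{d\mathbb{W}^x_{0,hk}}(X_{0:hk}) \propto \mathbb{E}[\prod_{i\le k} G_i \mid X_{0:hk}]$, invokes the CLT of \cite{smc:theory:Del04,smc:clt:Cho04} (via the argument of \cite{amj26:JD08}) to get the asymptotic variance, and then verifies in a corollary, exactly as you anticipate, that the layer/Poisson-event/subsampling-index construction of ScaLE fits this template with the unbiasedness of $\tilde{\phi}_A$ supplying the required identity. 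The bookkeeping you flag as the main obstacle is indeed what the paper's appendix makes explicit, so no gap.
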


%%%%%%%%%%%%%%%%%%%%%%%%%%%%%%%%%%%%%%%%%%
\section{Examples} \label{s:examp}

In this section we present five example applications of the methodology developed in this paper, each highlighting a different aspect of ScaLE and contrasted with appropriate competing algorithms. In particular: in \secref{s:examp:skewed} we consider a simple pedagogical example which has a skewed target distribution, contrasted with MCMC; \secref{s:examp:ulr} considers the performance of a logistic regression model in which significantly less information is available from the data about one of the covariates than the others; in \secref{s:examp:alr} we apply both ScaLE and SGLD to a regression problem based upon the ASA Data Expo Airline On-time Performance data, which is of moderately large size ($\approx 10^8$); \secref{s:examp:1} considers ScaLE applied to a very large logistic regression problem, with a data set of size $n=2^{34}\approx 10^{10.2}$, along with consideration of scalability with respect to data size; Finally, in \secref{s:examp:2} parameter inference for a contaminated regression example is given, motivated by a big data application with $n=2^{27}\approx 10^{8.1}$, and illustrating the potential of an \emph{approximate} implementation of ScaLE even when mis-initialised.\\
\\
All simulations were conducted in R on an Xeon X5660 CPU running at 2.8 GHz. Note that for the purposes of presenting the ScaLE methodology as cleanly as possible, in each example no prior has been specified. In practice, a prior can be simply included within the methodology as described in \secref{s:scale}.\\
\\
Details of data and code used to produce the output within this section can be found in Journal of the Royal Statistical Society: Series B (Statistical Methodology) Datasets.
% The reference implementation of the software used to produce these results can be obtained from \url{https://github.com/mpoll/scale}. The raw data sets, along with both the pre-processing steps and algorithmic specification of the raw data sets can be obtained from \url{http://www.warwick.ac.uk/mpollock/scale}.

%%%%%%%%%%%%%%%%%%%%%%%%%%%%%%%%%%%%%%%%%
\subsection{Skewed Target Distribution}\label{s:examp:skewed}

In order to illustrate ScaLE applied to a simple non-Gaussian target distribution, we constructed a small data set of size $n=10$, to which we applied a logistic regression model
\begin{equation}
\label{eq:logistic}
y_i = \left\lbrace \begin{array}{ll} 
1 & \quad\text{with probability } \dfrac{\exp\{\mathbf{x}^T_i \mathbf{\beta}\}}{1+\exp\{\mathbf{x}^T_i \mathbf{\beta}\}},
\\*[10pt] 0 & \quad\text{otherwise}.  \end{array}  \right.
\end{equation}
The data was chosen to induce a skewed target, with $\mathbf{y}^T=(1,1,0,\ldots{},0)$ and $\mathbf{x}^T_{i}=\left(1,(-1)^i/i\right)$.\\
\\
We used the glm R package to obtain the MLE ($\beta^*\approx (-1.5598, -1.3971)$) and observed Fisher information, in order to (mis-)initialise the particles in the ScaLE algorithm. In total $N=2^{10}$ particles were used, along with a sub-sampling mechanism of size $2$ and a control variate computed as in \secref{s:repest} by setting $\hat{\mathbf{x}}=\beta^*$. For comparison we ran random walk Metropolis on the same example initialised at $\beta^*$ using the MCMClogit function provided by MCMCpack \citep{adam_mcmcrun}, computed the posterior marginals based on 1,000,000 iterations thinned to 100,000 and after discarding a burn-in of 10,000 iterations, and overlaid them with together with those estimated by ScaLE in \figref{fig:skewedExample}. These are accompanied by the glm fit used to mis-initialise ScaLE.\\
\\
It is clear from \figref{fig:skewedExample} that the posterior obtained by simulating ScaLE matches that of MCMC, and both identify the skew which would be over-looked by a simple normal approximation. The particle set in ScaLE quickly recovers from its mis-initialisation, and only a modest burn-in period is required. In practice, we would of course not advocate using ScaLE for such a small data setting --- the computational and implementational complexity of ScaLE does not compete with MCMC in this example. However, as indicated in \secref{s:scalecomplexity} and the subsequent examples, ScaLE is robust to increasing data size whereas simple MCMC will scale at best linearly.
\begin{figure}
    \centering
    \includegraphics[width=1\textwidth]{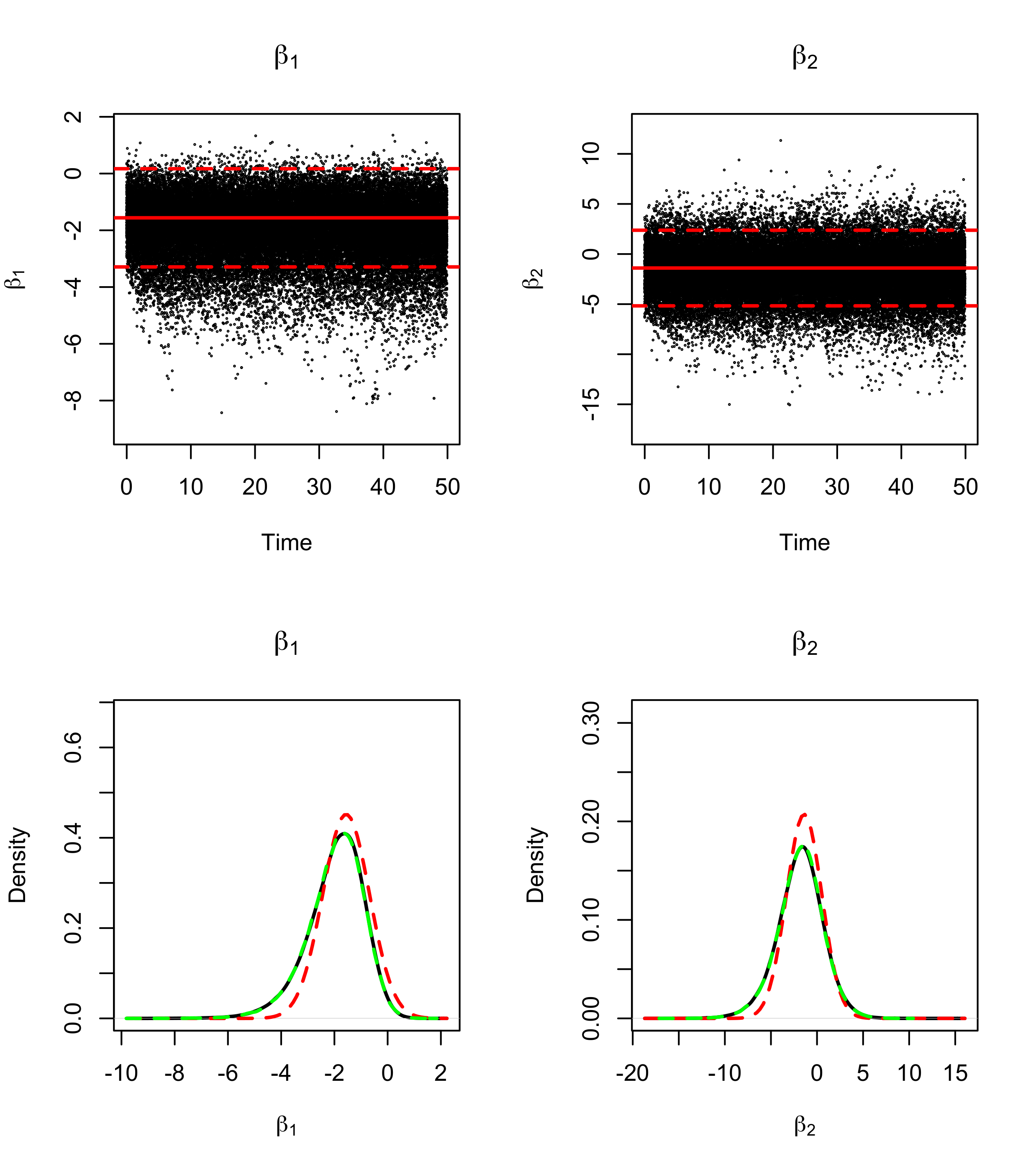}
    \caption{Upper plots are trace trajectories of ScaLE applied to the skewed target distribution example of \secref{s:examp:skewed}. Solid red lines on the trace plots mark the parameter values fitted using the glm R package, and dashed red lines are 95\% confidence intervals imputed using the covariance matrix estimated from the glm package. Lower plots are marginal densities obtained by ScaLE (black lines). Overlaid on the marginal plots are the normal approximation from the glm R package (red dashed line), and from MCMC run (green dashed lines).}
    \label{fig:skewedExample}
\end{figure}

%%%%%%%%%%%%%%%%%%%%%%%%%%%%%%%%%%%%%%%%%%
\subsection{Heterogeneous Logistic Regression}\label{s:examp:ulr}
For this example a synthetic data set of size $n=10^7$ was produced from the logistic regression model in (\ref{eq:logistic}). Each record contained three covariates, in addition to an intercept. The covariates were simulated independently from a three-dimensional normal distribution with identity covariance truncated to $[-0.001,0.001] \times [-1,+1]\times[-1,+1]$, and with the true $\mathbf{\beta} = (0,2,-2,2)$ (where the first coordinate corresponds to the intercept). The specification of this data set is such that significantly less information is available from the data about the second covariate than the others. Data was then generated from (\ref{eq:logistic}) using the simulated covariates.\\
\\
As before, the glm R package was used to obtain the MLE and observed Fisher information, which was used within ScaLE to set  $\mathbf{\beta}^\star = \hat{\mathbf{x}}\approx (2.3581\times 10^{-4},\allowbreak 2.3407,\allowbreak -2.0009,\allowbreak 1.9995)$ and  $\precon \approx \textrm{diag}(7.6238\times 10^{-4}, \allowbreak 1.3202,\allowbreak 1.5137\times10^{-3},\allowbreak 1.5138\times10^{-3})$ respectively. For the control variate $\nabla\log\pi(\hat{\mathbf{x}}) \approx (2.0287\times10^{-9},\allowbreak 2.2681 \times 10^{-9},\allowbreak -2.3809 \times 10^{-6},\allowbreak -2.3808 \times 10^{-6})$ was calculated using the full data set, and as expected (and required for computational considerations) is extremely small, along with $\Delta\log\pi(\hat{\mathbf{x}})$.\\
\\
 ScaLE was then applied to this example using $N=2^{10}$ particles initialised using a normal approximation given by the computed $\hat{\mathbf{x}}$ and $\precon$, and a subsampling mechanism of size 2.  The simulation was run for 20 hours, in which time 84,935,484 individual records of the data set were accessed (equivalent to roughly 8.5 full data evaluations). Trace plots for the simulation can be found in \figref{fig:ulr}, along with posterior marginals given by the output (after discarding as burn-in a tenth of the simulation). The posterior marginals are overlaid with the normal approximation given by the R glm fit.\\
 \\
 The estimated means and standard deviations for the regression parameters were ${\bf\mathbf{x}} \approx (-2.3194\times 10^{-4},\allowbreak 2.3197,\allowbreak -2.0009,\allowbreak 1.9995)$, and $\sigma_{\bf x} \approx (7.6703 \times 10^{-4},\allowbreak 1.3296,\allowbreak 1.6386 \times 10^{-4},\allowbreak 1.6217 \times 10^{-4})$ respectively. This is in contrast with $\beta^*$ and  standard deviations of $\approx (7.6238 \times 10^{-4},\allowbreak 1.3203,\allowbreak 1.6237 \times 10^{-4},\allowbreak 1.6233 \times 10^{-4})$ from the glm output.\\
\\
To assess the quality of the output we adopted a standard method for estimating effective sample size (ESS) for a single parameter. In particular, we first estimated a marginal ESS associated with the particles from ScaLE at a single time-point, with this defined as the average of the ratio of the variance of the estimator of the parameter using these particles to the posterior variance of the parameter \citep{iee:ccf99}. To calculate the overall ESS, the dependence of these estimators over-time is accounted for by modelling this dependence as an AR(1) process. Full details of this approach are given in Appendix \ref{app:ESS}. The resulting average ESS per parameter using this approach was found to be 352.\\
\\
The ScaLE output is highly stable and demonstrates that despite the heterogeneity in the information for different parameters, the Bernstein--von Mises limit (Laplace approximation) proves here to be an excellent fit. Although the GLM fit is therefore excellent in this case, ScaLE can be effectively used to verify this. This is in contrast to the example in Section 7.1 where ScaLE demonstrates that the GLM-Laplace approximation is a poor approximation of the posterior distribution.
\begin{sidewaysfigure}
    \centering
    \includegraphics[width=1\textwidth]{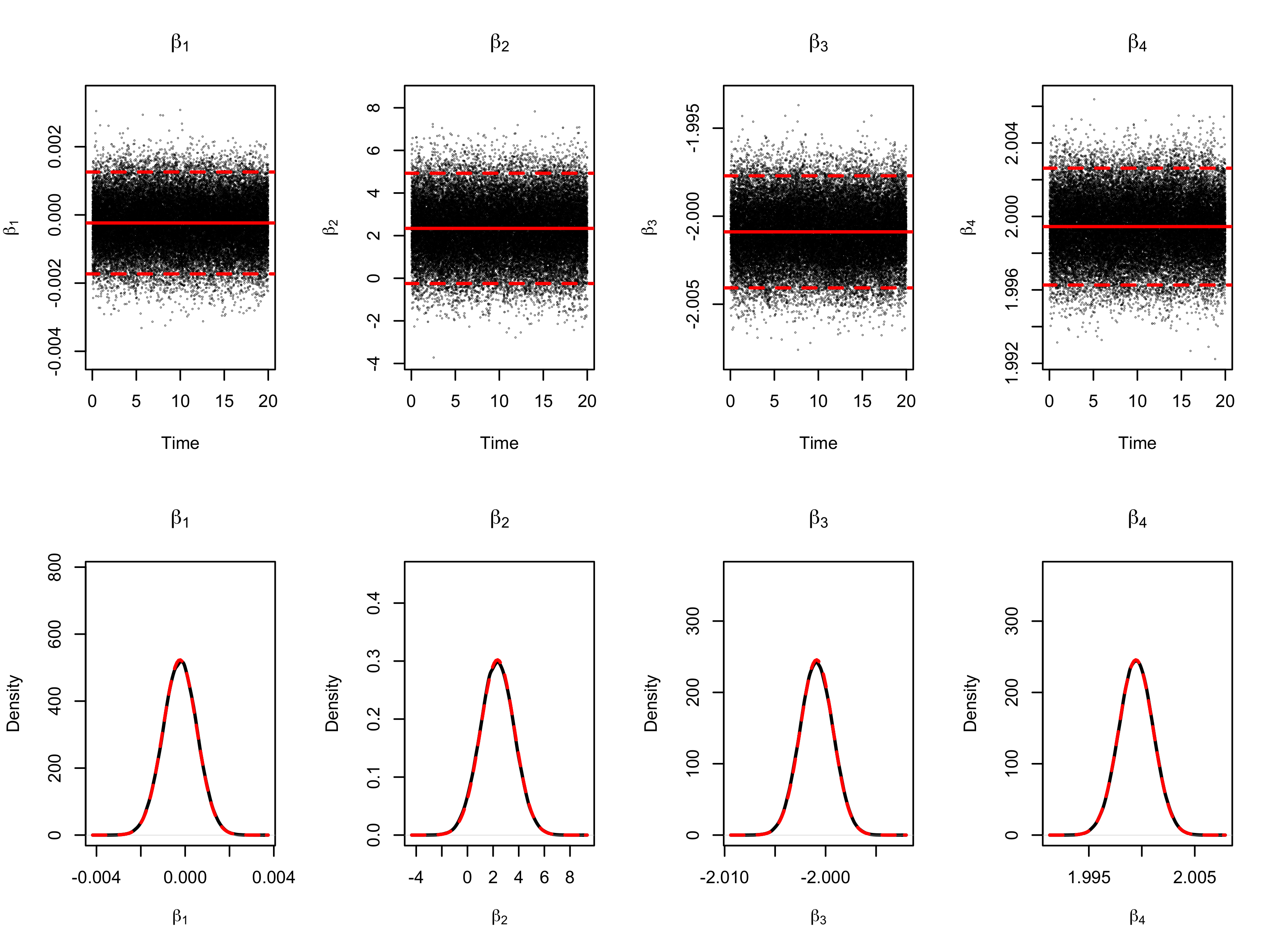}
    \caption{Upper plots are trace plots of ScaLE applied to the heterogeneous logistic regression example of \secref{s:examp:ulr}. Solid red lines on the trace plots mark the parameter values fitted using the glm R package, and dashed red lines are 95\% confidence intervals imputed using the covariance matrix estimated from the glm package. Lower plots are marginal densities obtained by ScaLE (black lines). Overlaid on the marginal plots are the normal approximation from the glm R package (red dotted lines).}
    \label{fig:ulr}
\end{sidewaysfigure}

%%%%%%%%%%%%%%%%%%%%%%%%%%%%%%%%%%%%%%%%%% 
\subsection{Airline Dataset}\label{s:examp:alr}
To demonstrate our methodology applied to a real (and moderately large) dataset we consider the `Airline on-time performance' dataset which was used for the 2009 American Statistical Association (ASA) Data Expo, and can be obtained from \url{http://stat-computing.org/dataexpo/2009/}. The `Airline' data set consists in its entirety of a record of all flight arrival and departure details for all commercial flights within the USA from October 1987 to April 2008. In total the data set comprises 123,534,969 such flights together with 29 covariates.\\
\\
For the purposes of this example we selected a number of covariates to investigate what effect (if any) they may have on whether a flight is delayed. The Federal Aviation Administration (FAA) considers an arriving flight to be late if it arrives more than 15 minutes later than its scheduled arrival time. As such we take the \textit{flight arrival delay} as our observed data (given by \textbf{ArrDelay} in the Airline data) and treat it as a binary taking a value of one for any flight delayed in excess of the FAA definition.\\
\\
In addition to an intercept, we determine three further covariates which may reasonably affect flight arrival: a \textit{weekend} covariate, which we obtain by treating  \textbf{DayOfWeek} as a binary taking a value of one if the flight operated on a Saturday or Sunday; a \textit{night flight} covariate, which we obtain by taking \textbf{DepTime} (Departure Time) and treating it as a binary taking a value of one if the departure is between 8pm and 5am; and flight \textit{distance}, which we obtain by taking \textbf{Distance} and normalising by subtracting the minimum distance and dividing by the range.\\
\\
The resulting data set obtained by the above process contained a number of missing entries, and so all such flights were omitted from the data set (in total 2,786,730 rows), leaving $n=$120,748,238 rows.  We performed logistic regression taking the \textit{flight arrival delay} variable as the response and treating all other variables as covariates.\\
\\
To allow computation of $\hat{\mathbf{x}}$ and $\precon$ as required by ScaLE the data was first divided into 13 subsets each of size 9,288,326 and the MLE and observed information matrix for each was obtained using the R glm package. It should be noted that the Airline data set is highly structured, and so for robustness the order of the flight records was first permuted before applying glm to the data subsets. An estimate for the MLE and observed information matrix for the full data set was obtained by simply taking the mean for each coefficient of the subset MLE fits, and summing the subset information matrices. The centring point $\hat{\mathbf{x}} \approx (-1.5609,\allowbreak -0.1698,\allowbreak 0.2823,\allowbreak 0.9865)$ was chosen to be the computed MLE fit, and for simplicity $\precon^{-1}$ was chosen to be the square root of the diagonal of the computed information matrix ($\precon \approx \text{diag}(2.309470\times 10^{-4},\allowbreak 4.632830\times 10^{-4},\allowbreak  6.484359\times 10^{-4},\allowbreak 1.2231\times 10^{-5}))$. As before, and as detailed in \secref{s:repest}, we use the full data set in order to compute $\nabla\log\pi(\hat{\mathbf{x}}) \approx (0.00249,\allowbreak 0.0018,\allowbreak 0.0021,\allowbreak 0.0029)$ (which again is small as suggested by the theory, and required for efficient implementation of ScaLE) and $\Delta\log\pi(\hat{\mathbf{x}}) \approx -3.999$. \\
\\
The ScaLE algorithm was initialised using the normal approximation available from the glm fit. In total $N=2^{12}$ particles were used in the simulation, and for the purposes of computing the unbiased estimator $\tilde{\phi}_A(\mathbf{x})$ we used a sub-sample of size $2$. The algorithm was executed so that $n$ individual records of the data set were accessed (i.e. a single access to the full data set), which took 36 hours of computational time. The first tenth of the simulation trajectories were discarded as burn-in, and the remainder used to estimate the posterior density. The trace plots and posterior densities for each marginal for the simulation can be found in \figref{fig:airline_analysis}.\\
\\
For comparison, we also ran stochastic gradient Langevin diffusion (SGLD;  \citet{icml:wt11}). This algorithm approximately simulates from a Langevin diffusion which has the posterior distribution as its stationary distribution. The approximation comes from both simulating an Euler discretised version of the Langevin diffusion and from approximating gradients of the log posterior at each iteration. The approximation error can be controlled by tuning the step-size of the Euler discretisation --- with smaller step-sizes meaning less approximation but slower mixing. We implemented SGLD using a decreasing step-size, as recommended by the theoretical results of \cite{jmlr:ttv16}; and used pilot runs to choose the smallest scale for the step-size schedule which still led to a well-mixing algorithm. As such, the pre-processing expenditure matched that of ScaLE. The accuracy of the estimate of the gradient is also crucial to the performance of SGLD \citep{Dalalyan:2017}, and we employed an estimator that used control variates (similar to that developed in ScaLE) and a mini-batch size of $1000$, following the guidance of \cite{sc:bffn19,arxiv:ndhv17,brosse2018promises}. For comparable results we ensured that SGLD had the same number of log-likelihood evaluations as ScaLE (i.e. equivalent to one single access to the full data set), and initiated SGLD from the centring value used for the control variates. In total the SGLD simulation took 4 hours to execute. The first tenth was discarded as burn-in and the remainder was used to estimate the marginal posteriors, which are overlaid with those estimated by ScaLE in \figref{fig:airline_analysis}.\\
\\
As can be seen in \figref{fig:airline_analysis}, SGLD estimates seem to be unstable here, with the algorithm struggling to mix effectively under the decreasing step size constraint, particularly for the fourth covariate. Indeed, the marginal posteriors should be convex and SGLD deviates strongly from this. This unstable behaviour was confirmed in replicate SGLD runs, and indeed it would be difficult to separate out bias from Monte Carlo error for SGLD without much longer runs. This is in contrast with ScaLE which produces far more stable output in this example.
\begin{sidewaysfigure}
    \centering
    \includegraphics[width=1\textwidth]{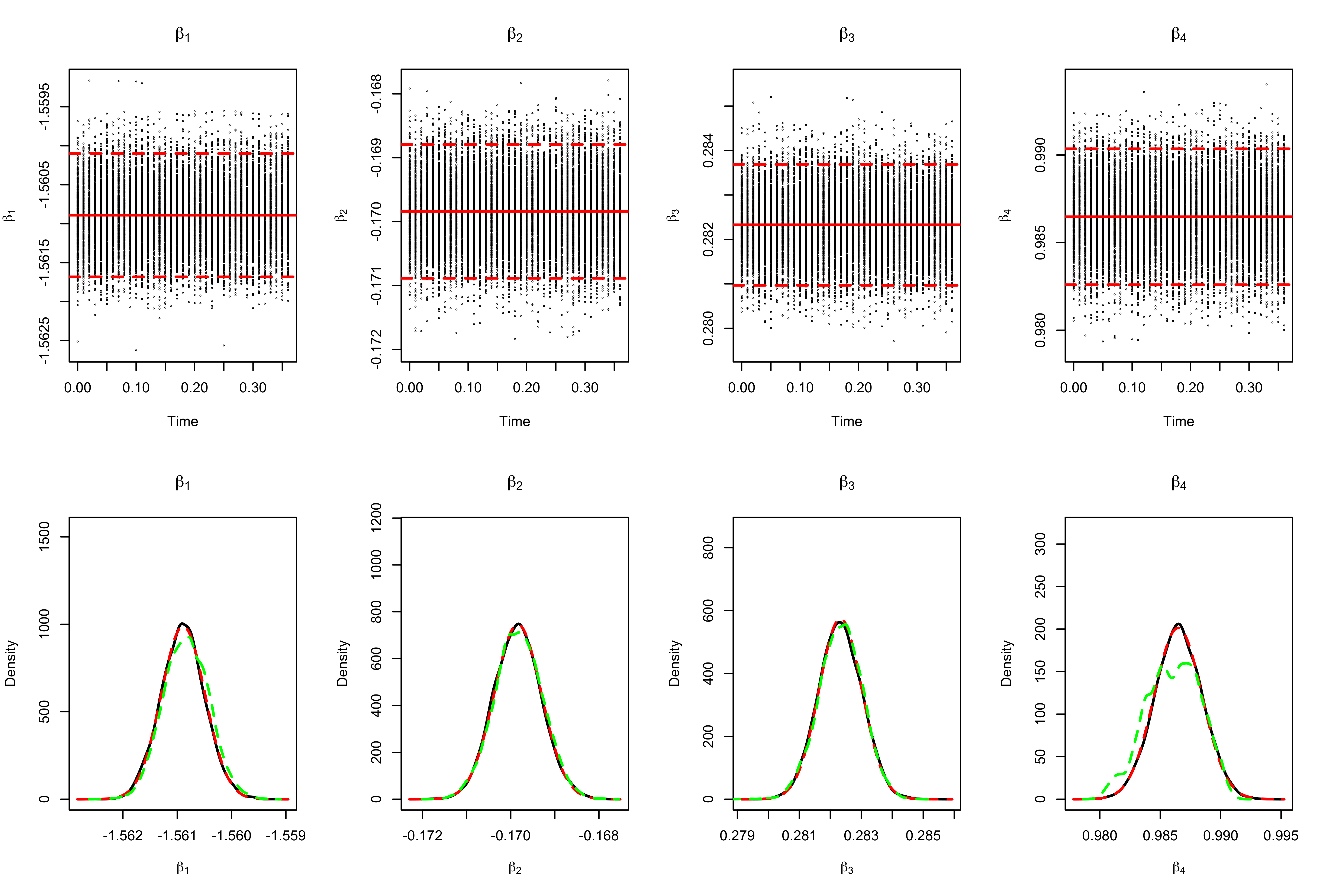}
    \caption{Upper plots are trace plots of ScaLE applied to the Airline data set. Solid red lines on the trace plots mark the parameter values fitted using the glm R package, and dashed red lines are 95\% confidence intervals imputed using the covariance matrix estimated from the glm package. Lower plots are marginal densities obtained by ScaLE (black lines). Overlaid on the marginal plots are the normal approximation from the glm R package (red dotted lines), and from the comparable SGLD run (green dotted lines).}
    \label{fig:airline_analysis}
\end{sidewaysfigure}

%%%%%%%%%%%%%%%%%%%%%%%%%%%%%%%%%%%%%%%%%%

\subsection{Large Data Scenario} \label{s:examp:1}

In this subsection we consider an application of ScaLE to a $5$-dimensional logistic regression model, considering data sets of up to size $n=2^{34}\approx 10^{10.2}$. Logistic regression is a model frequently employed within big data settings \citep{ijmsem:sbb16}, and here the scalability of ScaLE is illustrated for this canonical model. In this example, we generate a data set of size $2^{34}$ from this model (\ref{eq:logistic}) by first constructing a design matrix in which the $i^{\text{th}}$ entry ${\bx}_{i} := [1,\zeta_{i,1},\ldots{},\zeta_{i,4}]^T$, where $\zeta_{1,1},\ldots{},\zeta_{n,4}$ are i.i.d. truncated normal random variables with support $[-1,1]$. In the big data setting it is natural to assume such control on the extreme entries of the design matrix, either through construction or physical limitation. Upon simulating the design matrix, binary observations are obtained by simulation using the parameters ${\bf \beta} = [1,1,-1,2,-2]^T$. Due to the extreme size of the data we realised observations only as they were required to avoid storing the entire data set; see code provided for implementation details.\\
\\
First considering the data set of size $n=2^{34}$, then following the approach  outlined in \secref{s:examp:alr}, $\hat{\mathbf{x}}$ and $\precon$ were chosen by breaking the data into a large number of subsets, fitting the R glm package to each subset, then appropriately pooling the fitted MLE and observed Fisher information matrices. In total the full data set was broken into $2^{13}$ subsets of size $2^{21}$, and the glm fitting and pooling was conducted entirely in parallel on a network of $100$ cores. Consequently, $\hat{\mathbf{x}} = \beta^* \approx (0.9999943, \allowbreak 0.9999501,\allowbreak -0.9999813,\allowbreak 1.999987,\allowbreak -1.999982)$ and $\precon \approx \textrm{diag}(1.9710\times 10^{-5},\allowbreak 3.6921\times 10^{-5},\allowbreak 3.6910\times 10^{-5},\allowbreak 3.8339\times 10^{-5},\allowbreak 3.8311\times 10^{-5})$. Upon computing $\hat{\mathbf{x}}$ an additional pass of the $2^{13}$ subsets of the data of size $2^{21}$ was conducted in parallel in order to compute $\nabla\log\pi(\hat{\mathbf{x}}) \approx (-0.0735,\allowbreak -0.0408,\allowbreak 0.0428,\allowbreak -0.09495,\allowbreak 0.0987)$ and $\Delta\log\pi(\hat{\mathbf{x}}) \approx -5.006$ for construction of the control variate. Fully utilising the 100 cores available the  full suite of pre-processing steps required for executing ScaLE  (i.e. the computation of both the glm fit and control variate) took 27 hours of wall-clock time.  \\
\\
ScaLE was applied to this example using $N=2^{10}$ particles initialised using a normal approximation given by the available glm fit, and a subsampling mechanism of size 2.  The simulation was run for 70 hours, in which time 49,665,450 individual records of the data set were accessed (equivalent to roughly 0.0029 full data evaluations). Trace plots for the simulation can be found in \figref{fig:largeexample}. The first tenth of the simulation trajectories were discarded as burn-in, and the remainder used to estimate the posterior density of each marginal. These can also be found in \figref{fig:largeexample}, together with the normal approximation to the posterior marginals given by the R glm fit, is again very accurate here, agreeing closely with the ScaLE output.  Using the ESS approach described in \secref{s:examp:ulr} and Appendix \ref{app:ESS}, the average ESS per parameter was found to be 553.
\begin{sidewaysfigure}
    \centering
    \includegraphics[width=1\textwidth]{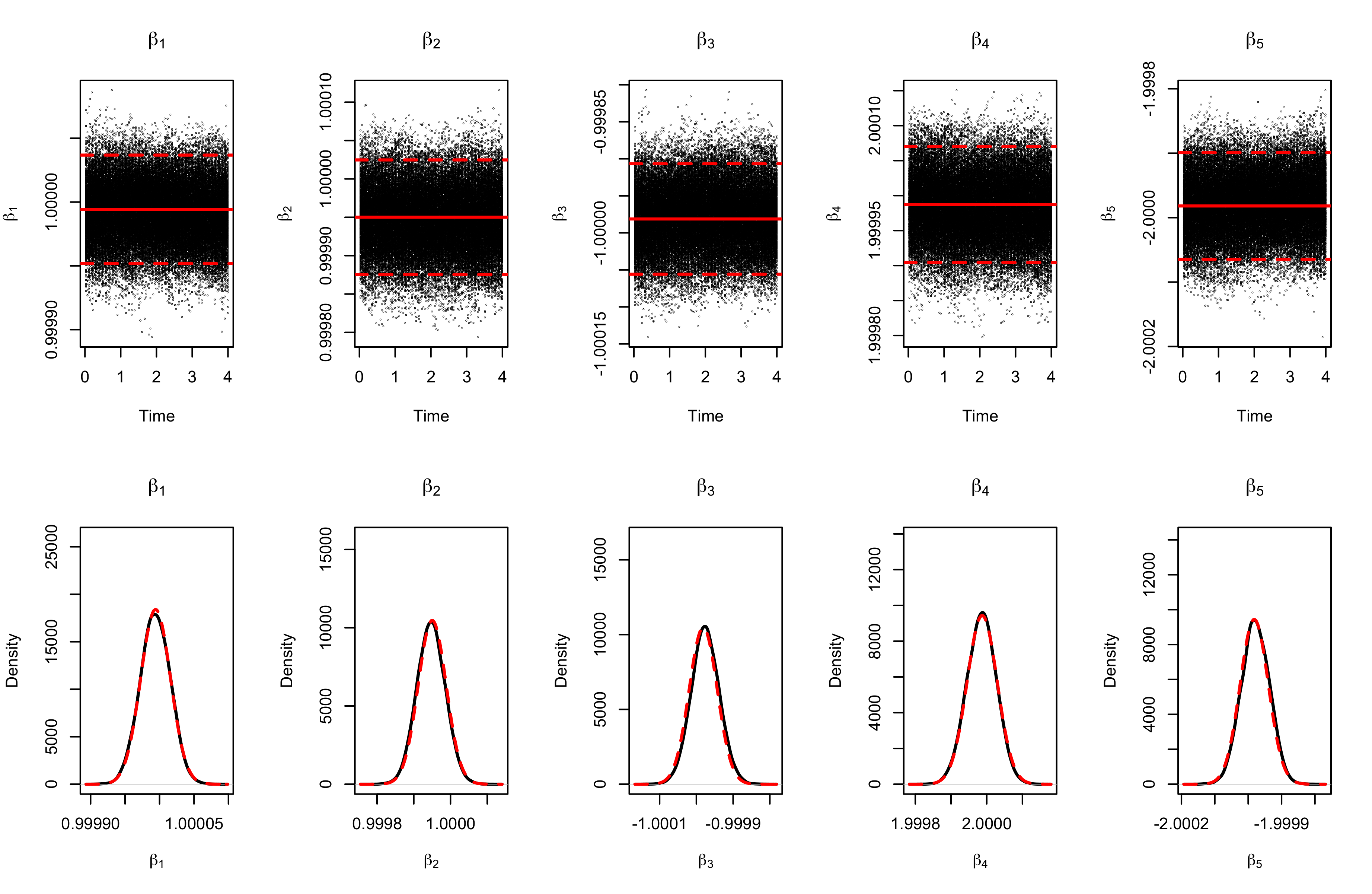}
    \caption{Upper plots are trace plots of ScaLE applied to the large data set of size $2^{34}$ example of \secref{s:examp:1}. Solid red lines on the trace plots mark the parameter values fitted using the glm R package, and dashed red lines are 95\% confidence intervals imputed using the covariance matrix estimated from the glm package. Lower plots are marginal densities obtained by ScaLE (black lines). Overlaid on the marginal plots are the normal approximation from the glm R package (red dotted lines).}
    \label{fig:largeexample}
\end{sidewaysfigure}
\\\\\noindent
To investigate scaling with data size for this example, we considered the same model using the same process as outlined above with data sets varying in size by a factor of 2 from $n=2^{21}$ to $n=2^{33}$.  Computing explicitly the dominating intensity $\tilde{\lambda}_{n,K}$ over the support of the density the relative cost of ScaLE for each data set with respect to the data set of size $n=2^{34}$ can be inferred. This is shown in \figref{fig:largeexample2}.
\begin{figure}
    \centering
    \includegraphics[width=1\textwidth]{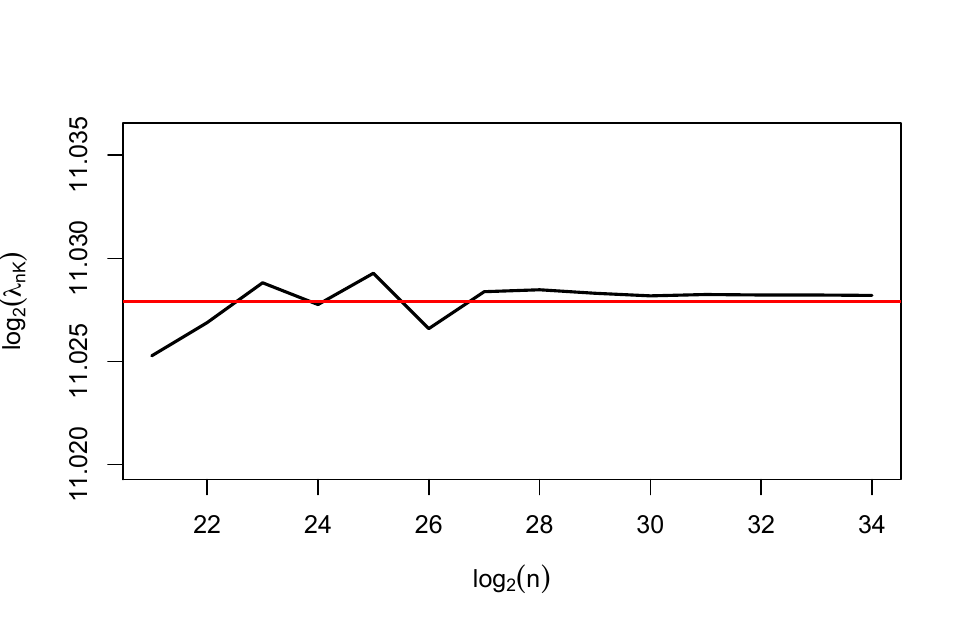}
    \caption{Comparison of the bounding intensities and comparative cost for executing Scale for increasing data set sizes in the large data example of \secref{s:examp:1}.}
    \label{fig:largeexample2}
\end{figure}

%%%%%%%%%%%%%%%%%%%%%%%%%%%%%%%%%%%%%%%%%%

\FloatBarrier
\subsection{Contaminated Mixture} \label{s:examp:2}

In this subsection we consider parameter inference for a contaminated mixture model. This is motivated by big data sets obtained from internet applications, in which the large data sets are readily available, but the data is of low quality and corrupted with noisy observations. In particular, in our example each datum comprises two features and a model is fitted in which the likelihood of an individual observation ($y_i$) is,
\begin{align}
F_i := \dfrac{1-p}{\sqrt{2\pi\sigma^2}}\exp\left\{-\dfrac{1}{2\sigma^2}\left(\alpha\cdot x_{i,1} + \beta \cdot x_{i,2} -y_i\right)^2\right\} + \dfrac{p}{\sqrt{2\pi\phi^2}}\exp\left\{-\dfrac{1}{2\phi^2}y_i^2\right\}. \label{eq:contamlike}
\end{align}
In this model $p$ represents the level of corruption and $\phi$ the variance of the corruption. A common approach uses MCMC with data augmentation \citep{jasa:tw87}. However, for large data sets this is not feasible as the dimensionality of the auxiliary variable vector will be $\mathcal{O}(n)$. For convenience a transformation of the likelihood was made so that each transformed parameter is on $\mathbbm{R}$. The details are omitted, and the results presented are given under the original parametrisation.\\
\\
A data set of size $n=2^{27}\approx 10^{8.1}$ was generated from the model with parameters $\mathbf{\mu}=[\alpha,\beta,\sigma,\phi,p]=[2,5,1,10,0.05]$. %This choice of data size was made as it was the largest size that could fit in memory of the computing architecture used, avoiding the additional computational complications which we encountered and addressed in \secref{s:examp:1}.\\
To illustrate a natural future direction for the ScaLE methodology, in this example we instead implemented an approximate version of ScaLE (as opposed to the exact version illustrated in the other examples of \secref{s:examp}). In particular, the primary implementational and computational bottleneck in ScaLE is the formal `localization procedure' to obtain almost sure bounds on the killing rate by constraining Brownian motion to a hypercube (as fully detailed in \secref{s:langevin} and \apxref{s:pathspace layer}). Removing the localization procedure results in the Brownian motion trajectories being unconstrained, and the resulting  dominating intensity $\tilde{\lambda}$ being infinite. However, in practice the probability of such an excursion by Brownian motion outside a suitably chosen hypercube can be made vanishingly small (along with the consequent impact on the Monte Carlo output) by simply adjusting the temporal resolution at which the ergodic average is obtained from the algorithm (noting Brownian motion scaling is $\mathcal{O}(\sqrt{t}$), and inflating the bounds on the Hessian for computing the intensity. The resulting `approximate' algorithm is approximate in a different (more controllable and monitorable) sense than for instance SGLD, but results in substantial (10x-50x) computational speed-ups over the available (but expensive) `exact' ScaLE.\\
\\
In contrast with the other examples of \secref{s:examp}, rather fitting an approximate model in order to initialise the algorithm, instead in this example a single point mass to initialise the algorithm was chosen ($\mu = [2.00045,\allowbreak 5.00025,\allowbreak 0.999875,\allowbreak 10.005\,\allowbreak 0.0499675]$), and this was also used as the point to compute our control variate (described in \secref{s:repest}). The pre-processing for executing ScaLE took approximately $6$ hours of computational time (and is broadly indicative of the length of time a single iteration of an alternative MCMC scheme such as MALA would require). Note that as discussed in \secref{s:scalecomplexity}, this `mis-initialisation' impacts the efficiency of the algorithm by a constant factor, but is however representative of what one in practice may conceivably be able to do (i.e. find by means of an optimisation scheme a point within the support of the target posterior close to some mode, and conduct a single $\mathcal{O}(n)$ calculation). The forgetting of this initialisation is shown in \figref{fig:contam_analysis}.\\
\\
Applying ScaLE for this application we used a particle set of size $N=2^{11}$, and run the algorithm for diffusion time of $T=200$, with observations of each trajectory at a resolution of $t_i-t_{i-1}=0.1$. Again, the choice of $N$ was made as in \secref{s:examp:1} as it provided the required stability. The choice of $T$ was made as it corresponded approximately to a computational budget of one week. \\
\\
Each particle trajectory at each time $t\in[0,T]$ was associated with a sub-sample of the full data set of size $32$, rather than 2, with the resulting likelihood estimates combined by simple averaging. This size was chosen as it provided balance with other components of the algorithm, but allowed stabilisation of the importance weights which was beneficial for the approximate algorithm. In total the entire run required accessing $500$ million individual data points, which corresponds to approximately $4$ full evaluations of the data set.\\
\\
An example of a typical run can be found in \figref{fig:contam_analysis}.  A burn-in period of $100$ was chosen, and alongside the trace plots in \figref{fig:contam_analysis}  an estimate of the marginal density of the parameters is provided using the occupation measure of the trajectories in the interval $t\in[100,200]$.\\
\\
To assess the quality of the simulation, the same batch mean method is employed to estimate the marginal ESS for the run post burn-in as detailed in \secref{s:examp:1}. The mean ESS per dimension for this run was around $930$. An analysis of MALA (for a necessarily much smaller run), indicated it is possible to achieve an ESS of around $T/3$, where $T$ corresponds to the run length subsequent to burn-in. As indicated above, and neglecting burn-in, this would mean an achievable ESS for a comparable computational budget for MALA would be around $10$-$15$.
\begin{sidewaysfigure}
    \centering
    \includegraphics[width=1\textwidth]{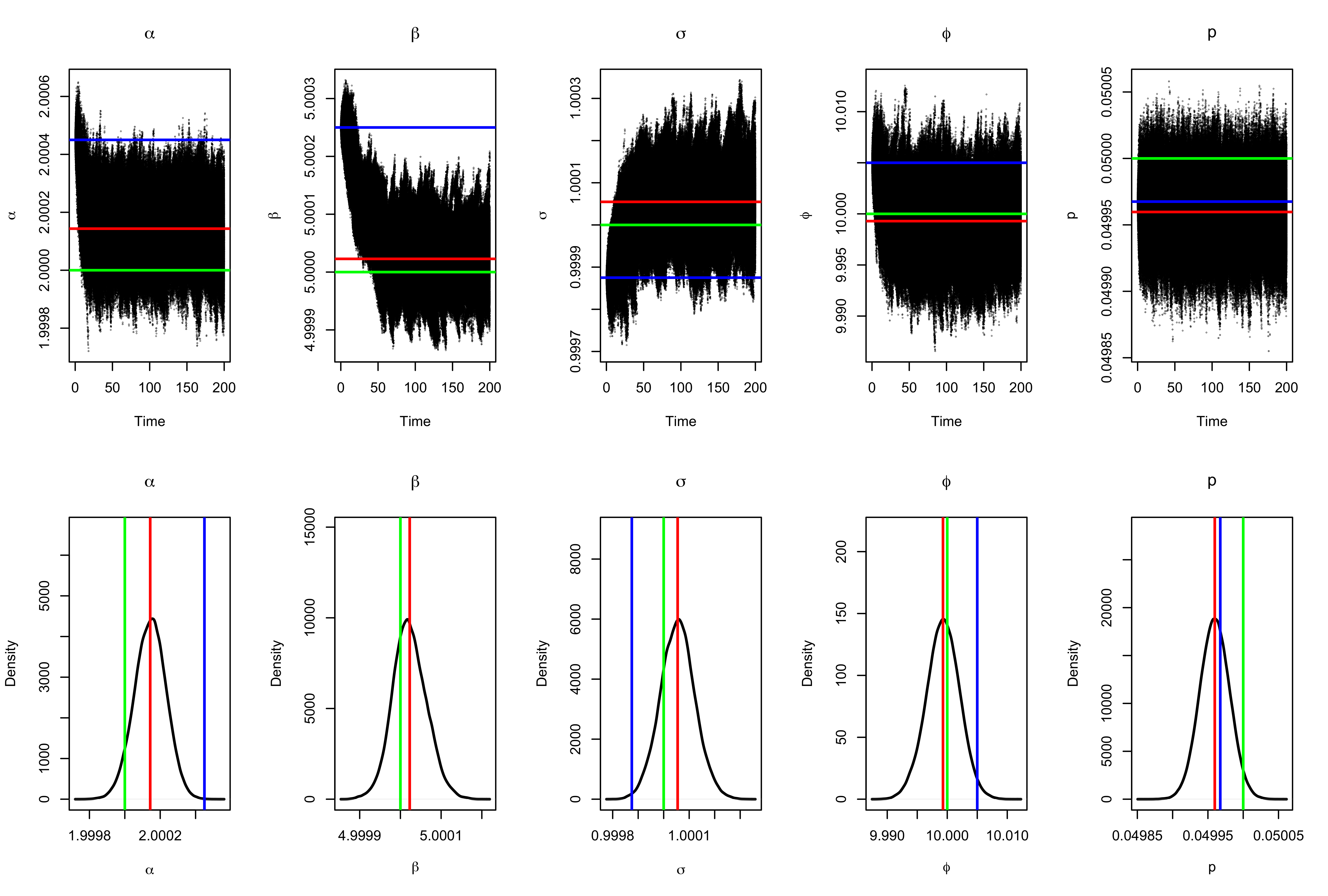}
    \caption{Upper plots are trace trajectories of ScaLE applied to the contaminated mixture data set, lower plots are marginal densities. In both sets of plots, blue lines mark the parameter values used to initialise the algorithm, green lines mark the parameter values the associated data set was generated from, and red lines mark the mean of the marginal densities.}
    \label{fig:contam_analysis}
\end{sidewaysfigure}

%%%%%%%%%%%%%%%%%%%%%%%%%%%%%%%%%%%%%%%%%%

%\FloatBarrier
\section{Conclusions} \label{s:conclusions}

In this paper we have introduced a new class of \textit{Quasi-Stationary Monte Carlo (QSMC)} methods which are genuinely continuous-time algorithms for simulating from complex target distributions. We have emphasised its particular effectiveness in the context of {\em big data} by developing novel sub-sampling approaches and the \textit{Scalable Langevin Exact (ScaLE)} algorithm. Unlike its immediate competitors, our sub-sampling approach within ScaLE is essentially computationally free and does not result in any approximation to the target distribution. Our methodology is embedded within an SMC framework, supported by underpinning theoretical results. In addition, examples to which ScaLE is applied demonstrate its robust scaling properties for large data sets.\\
\\
We show through theory and examples that computational cost of ScaLE is more stable to data set size than {\em gold standard} MCMC approaches. Moreover we have seen it substantially outperform other approaches such as SGLD which are designed to be robust to data-size at the cost of bias and serial correlation. ScaLE can both confirm that simpler approaches such as Laplace approximation are accurate, and identify when such approximations are poor (as we see in the examples). We see this as a first step in a fruitful new direction for Computational Statistics. Many ideas for variations and extensions to our implementation exist and will stimulate further investigation.\\
\\
Firstly, the need to simulate a quasi-stationary distribution creates particular challenges. Although quasi-stationarity is underpinned by an elegant mathematical theory, the development of numerical methods for quasi-stationarity is understudied.  We have presented an SMC methodology for this problem, but alternatives exist. For instance, \cite{aap:bgz16} suggest alternative approaches.\\
\\
Even within an SMC framework for extracting the quasi-stationary distribution, there are interesting alternatives we have not explored. For example, by a modification of our re-weighting mechanism it is possible to relate the target distribution of interest to the limiting {\em smoothing} distribution of the process, as opposed to the filtering distribution as we do here. Within the quasi-stationary literature this is often termed the type II quasi-stationary distribution. As such, the rich SMC literature offers many other variations on the procedures adopted here.\\
\\
Using SMC  benefits from the rich theory it possesses. However the use of quasi-stationary Monte Carlo actually demands new questions of SMC. \thmref{thm:qsd} gives convergence as $T\to \infty $, while \propref{prop:3} gives a precise description of the limit as the number of particles $N$ increases. There are theoretical and practical questions associated with letting both $N$ and $T$ tend to $\infty $ together.  Within the examples in this paper {\em ad hoc} rules are used to assign computational effort to certain values of $N$ and $T$. However the general question of how to choose these parameters seems completely open. \\
\\
Throughout the paper, we have concentrated on so-called {\em exact approximate} quasi-stationary Monte Carlo methods. Of course in many cases good approximations are good enough and frequently computationally less demanding. However, for many approximate methods it will be difficult to quantify the systematic error being created by the approximation. Moreover, we emphasise that there are different strategies for creating effective approximations that emanate directly from exact approximate methods, and where the approximation error can be well-understood. We have given an example of this in  \ref{s:examp:2}
but other options are possible also.\\
\\
There are interesting options for parallel implementation of SMC algorithms in conjunction with ScaLE. For instance an appealing option would be to  implement the island particle filter \citep{ss:dmov16} which could have substantial effects on the efficiency of our algorithms where large numbers of particles are required. Alternatively one could attempt to embed our scheme in other divide and conquer schemes as described in the introduction.\\
\\
The approach in this paper has concentrated solely on killed (or re-weighted) Brownian motion, and this strategy has been demonstrated to possess robust convergence properties. However, given existing methodology for the exact simulation of diffusions in \cite{aap:br05,b:bpr06,mcap:bpr08,phd:p13,b:pjr15,wsc:p15}, there is scope to develop methods which use proposal measures which much better {\em mimic} the shape of the posterior distribution.\\
\\
The sub-sampling and control variate approaches developed here offer dramatic computational savings for tall data as we see from the examples and from the theory of results like Theorem \ref{prop:contraction}. We have not presented the ScaLE algorithm as a method for high-dimensional inference, and the problem of large $n$ and $d$ will inevitably lead to additional challenges. 
 However there may be scope to extend the ideas of ScaLE still further in this direction. For instance, it might be possible to sub-sample dimensions and thus reduce the dimensional complexity for implementing each iteration.\\
\\
We conclude by noting that as a by-product, the theory behind our methodology offers new insights into problems concerning the existence of quasi-stationary distributions for diffusions killed according to a state-dependent hazard rate, complementing and extending current state-of-the-art literature \citep{tams:se07}.

%%%%%%%%%%%%%%%%%%%%%%%%%%%%%%%%%%%%%%%%%%

\section*{Acknowledgments}
This work was supported by the EPSRC (grant numbers EP/D002060/1, EP/K014463/1, EP/N031938/1, EP/R018561/1, EP/R034710/1) and two Alan Turing Institute programmes; the Lloyd's Register Foundation Programme on Data-Centric Engineering; and the UK Government's Defence \& Security Programme.
% MP was supported by the EPSRC [grant number EP/K014463/1]. PF was supported by [grant numbers EP/N031938/1, EP/K014463/1]. GOR was supported by [grant numbers EP/K034154/1, EP/K014463/1, EP/D002060/1].
The authors would like to thank Louis Aslett, Tigran Nagapetyan and Andi Q. Wang for useful discussions on aspects of this paper. In addition, we would like to thank the referees for a number of very helpful suggestions and questions which have improved the exposition of this paper.
\FloatBarrier
\bibliographystyle{imsart-nameyear}
\bibliography{Scale}
\FloatBarrier

%%%%%%%%%%%%%%%%%%%%%%%%%%%%%%%%%%%%%%%%%%

\appendix

%%%%%%%%%%%%%%%%%%%%%%%%%%%%%%%%%%%%%%%%%%
\FloatBarrier

\section{Proof of \thmref{thm:qsd}} \label{apx:proofthm1}

Here we present a proof of \thmref{thm:qsd}. However, we first formally state the required regularity conditions. We suppose that
\begin{equation}
\label{eq:Q0}
\pi(x)\hbox{ is bounded,}
\end{equation}
and defining $\nu(x)=\pi^2(x)$, we further assume that, for some $\gamma > 0$, 
\begin{align}
\label{eq:Q1}
\liminf_{\mathbf{x} \to \infty } \left(
{\Delta \nu (\bx) \over \nu ^{\gamma + 1/2}(\bx ) } - { \gamma \| \nabla \nu  (\bx)\|^2
\over \nu ^{\gamma + 3/2}(\bx)}
\right) >0, 
\end{align}
where $\Delta $ represents the Laplacian.

\begin{proof}[Proof (Theorem \ref{thm:qsd})] 
Consider the diffusion with generator given by
$$
 \mathfrak{A} f (\bx )= {1\over 2} \Delta f (\bx) +  {1\over 2}  \nabla \log \nu (\bx ) \cdot \nabla f (\bx ).
$$
As $\nu $ is bounded, we assume without loss of generality that its upper bound is $1$.  Our proof shall proceed by checking the conditions of Corollary 6  of \cite{aap:fr05}, which establishes the result. In particular, we need to check that the following are satisfied:
\begin{enumerate}
\item For all $\delta >0$, the discrete time chain $\{X_{n\delta }, n=0, 1, 2,\ldots \}$
is irreducible;
\item
All closed bounded sets are petite;
\item
We can find  a drift function $V(\bx ) =  \nu(\bx )^{-\gamma}$ for some $\gamma >0$, that satisfies the condition
\begin{align}
\label{eq:drift}
{\mathfrak{A}} V^{\eta }(\bx ) \le - c_\eta V(\bx)^{\eta - \alpha} 
\end{align}
for $\bx $ outside some bounded set, for each $\eta \in [\alpha ,1]$ with associated positive constant $c_{\eta }$, and where $\alpha = 1-(2\gamma )^{-1}$.
\end{enumerate}
The first condition holds for any regular diffusion since the diffusion possesses positive continuous transition densities over time intervals $t>0$; and positivity and continuity of the density also implies the second condition. For the final condition we require that
\begin{align}
\label{eq:ratiodrift}
\limsup_{|\bx | \to \infty }
{{\mathfrak{A}} V^{\eta }(\bx )
\over
V^{\eta - \alpha } (\bx)}
<0.
\end{align}
Now by direct calculation
\begin{align}
{\mathfrak{A}} V^{\eta }(\bx ) = {\eta \gamma \over 2}
  \nu  (\bx )^{-\gamma \eta -2} 
\left[
{\eta \gamma \|\nabla \nu (\bx ) \|^2  }
-
\nu (\bx ) \Delta \nu (\bx )
\right],
\end{align}
 so that
\begin{align}
{{\mathfrak{A}} V^{\eta }(\bx )\over V(\bx )^{\eta - \alpha}}
=
{\eta  \gamma \nu  (\bx )^{-3/2-\gamma } \over 2}
\left[
{\eta \gamma \|\nabla \nu (\bx ) \|^2 }
-
\nu (\bx ) \Delta \nu (\bx )
\right].
\end{align}
Therefore (\ref{eq:ratiodrift}) will hold whenever (\ref{eq:Q1}) is true since we have the constraint that $\eta \le 1$ and $\|\nabla \nu (\bx ) \|^2$ is clearly non-negative. As such the result holds as required.
\end{proof}
\noindent 
Note that the condition in (\ref{eq:Q1}) is essentially a condition on the tail of $\nu$. This will hold even for heavy-tailed distributions, and we show this is the case for a class of 1-dimension target densities in \apxref{apx:polynomial}.

%%%%%%%%%%%%%%%%%%%%%%%%%%%%%%%%%%%%%%%%%%

\section{Polynomial tails} \label{apx:polynomial}

In this appendix we examine condition (\ref{eq:Q1}) which we use within \thmref{thm:qsd}. This is essentially a condition on the tail of $\nu$, and so we examine the critical case in which the tails of $\nu$ are heavy. More precisely, we demonstrate that for polynomial tailed densities in one-dimension that (\ref{eq:Q1}) essentially amounts to requiring that $\nu^{1/2}$ is integrable. Recall that by construction $\nu^{1/2}$ will be integrable as  
we have chosen $\nu^{1/2}=\pi$.\\
\\
For simplicity, suppose that $\nu $ is a density on $[1,\infty )$ such that $\nu (x) = x^{-p}$. In this case we can easily compute that for $p>1$,
\begin{align}
\nabla \nu (x) & = -px^{-p-1} \nonumber\\
 \Delta \nu (x) & = p(p+1) x^{-p-2} \nonumber
\end{align}
from which we can easily compute the quantity whose limit is taken in (\ref{eq:Q1}) as
\begin{align}
x^{p(\gamma - 1/2) -2} [p(p+1) - \gamma p^2]. \nonumber
\end{align}
As such, we have that condition (\ref{eq:Q1}) holds if and only if
\begin{equation}
\label{eq:pol1}
p+1 - \gamma p >0\end{equation}
and 
\begin{align}
\label{eq:pol2}
p(\gamma - 1/2) - 2 \ge 0.
\end{align}
\noindent Now we shall demonstrate that we can find such $\gamma $ for all $p>2$. For instance, suppose that $p=2 + \epsilon $. The case $\epsilon \ge 2$ can be handled by just setting $\gamma =1$, so suppose otherwise and set $\gamma = 3/2-\epsilon/4$. In this case, (\ref{eq:pol2}) just gives $\epsilon/2 - \epsilon^2/4\ge 0$. Moreover the expression in (\ref{eq:pol1}) becomes $3\epsilon/2 + \epsilon^2>0$, completing our argument.

%%%%%%%%%%%%%%%%%%%%%%%%%%%%%%%%%%%%%%%%%%

\section{Simulation of a Path-Space Layer and Intermediate Points} \label{s:pathspace layer} \label{apx:fpttheory}

In this appendix we present the methodology and algorithms required for simulating an individual proposal trajectory of (layered) killed multivariate Brownian motion, which is what is required in \secref{s:emcfd}. Our exposition is as follows: In \apxref{ss:fpt} we present the work of \cite{spl:d09}, in which a highly efficient rejection sampler is developed (based on the earlier work of \cite{macis:bj08}) for simulating the first passage time for univariate standard Brownian motion for a given symmetric boundary, extending it to consider the case of the univariate first passage times of $d$-dimensional standard Brownian motion with non-symmetric boundaries. This construction allows us to determine an interval (given by the first, first passage time) and layer (a hypercube inscribed by the user specified univariate boundaries) in which the sample path is almost-surely constrained, and by application of the strong Markov property can be applied iteratively to find, for any interval of time, a layer (a concatenation of hypercubes) which almost-surely constrains the sample path; In \apxref{ss:fptinter} we present a rejection sampler enabling the simulation of constrained univariate standard Brownian motion as developed in \secref{ss:fpt}, at any desired intermediate point. As motivated in \secref{s:emcfd} these intermediate points may be at some random time (corresponding to a proposed killing point of the proposed sample path), or a deterministic time (in which the sample path is extracted for inclusion within the desired Monte Carlo estimator of QSMC (\ref{eq:occupation2})); Finally, in \apxref{ss:fptfinal} we present the full methodology required in Sections \ref{s:emcfd} and \ref{s:scale} in which we simulate multivariate Brownian motion at any desired time marginal, with $d$-dimensional hypercubes inscribing intervals of the state space in which the sample path almost surely lies.

%%%%%%%%%%%%%%%%%%%%%%%%%%%%%%%%%%%%%%%%%%

\subsection{Simulating the first passage times of univariate and multivariate standard Brownian motion} \label{ss:fpt}

To begin with we restrict our attention to the ($i^{\text{th}}$) dimension of multivariate standard Brownian motion initialised at $0$, and the first passage time of the level $\theta^{(i)}$ (which is specified by the user). In particular we denote,
\begin{align}
\tau^{(i)} := \inf\{t\in\mathbbm{R}+\,:\,|W^{(i)}_t-W^{(i)}_0| \geq \theta^{(i)}\}.
\end{align}
Recalling the self similarity properties of Brownian motion (\cite[Section 2.9]{bk:bmsc}), we can further restrict our attention to the simulation of the first passage time of univariate Brownian motion of the level $1$, noting that $\tau^{(i)}\overset{\mathcal{D}}{=}\left(\theta^{(i)}\right)^2\bar{\tau}$ where,
\begin{align}
\bar{\tau} := \inf\{t\in\mathbbm{R}+\,:\,|W_t-W_0| \geq 1\},
\end{align}
noting that at this level,
\begin{align}
\mathbbm{P}(W_\tau=W_0+1)=\mathbbm{P}(W_\tau=W_0-1)=\dfrac{1}{2}.
\end{align}
Denoting by $f_{\bar{\tau}}$ the density of $\bar{\tau}$ (which cannot be evaluated point-wise), the approach outlined in \cite{spl:d09} for drawing random samples from $f_{\bar{\tau}}$ is a series sampler. In particular, an accessible dominating density of $f_{\bar{\tau}}$ is found (denoted $g_{\bar{\tau}}$) from which exact proposals can be made, then upper and lower monotonically convergent bounding functions are constructed ($\lim_{n\to\infty} f^{\uparrow}_{\bar{\tau},n} \to f_{\bar{\tau}}$ and $\lim_{n\to\infty} f^{\downarrow}_{\bar{\tau},n} \to f_{\bar{\tau}}$ such that for any $t\in\mathbbm{R}_+$ and $\epsilon>0$ $\exists\,n^*(t,\epsilon)$ such that $\forall\,n \geq n^*(t,\epsilon)$ we have $f^{\uparrow}_{\bar{\tau},n}(t) - f^{\downarrow}_{\bar{\tau},n}(t) < \epsilon$), and then evaluated to sufficient precision such that acceptance or rejection can be made while retaining exactness. A minor complication arises in that no known, tractable dominating density is uniformly efficient on $\mathbbm{R}_+$, and furthermore no single representation of the bounding function converge monotonically to the target density point-wise on $\mathbbm{R}_+$. As such, the strategy deployed by \cite{spl:d09} is to exploit a dual representation of $f_{\bar{\tau}}$ given by \cite{ams:ct62} in order to construct a hybrid series sampler, using one representation of $f_{\bar{\tau}}$ for the construction of a series sampler on the interval $(0,t_1]$ and the other representation for the interval $[t_2,\infty)$ (fortunately we have $t_1>t_2$, and so we have freedom to choose a threshold $t^*\in[t_2,t_1]$ in which to splice the series samplers). In particular, as shown in \cite{ams:ct62} $f_{\bar{\tau}} (t) =  \pi\sum_{k=0}^{\infty} (-1)^k a_k(t)$ where, the elements of the two expansions are given by
\begin{equation}
a_k(t) = \left\lbrace \begin{array}{ll} 
\left(\dfrac{2}{\pi t} \right)^{{3}/{2}} \left(k+\frac{1}{2} \right) \exp \left\{-\dfrac{2}{t}(k+\frac{1}{2})^2\right\}, & \quad\quad(1)
\\*[10pt] \left(k+\frac{1}{2} \right)\exp \left\{-\dfrac{1}{2}(k+\frac{1}{2})^2 \pi^2 t\right\}, & \quad\quad (2) \label{eq:alt1and2} \end{array}  \right.
\end{equation}
and so by consequence upper and lower bounding sequences can be constructed by simply taking either representation and truncating the infinite sum to have an odd or even number of terms respectively (and thresholding to between zero and the proposal, $g_{\bar{\tau}}$, introduced below). More precisely, 
\begin{align}
f^{\downarrow}_{\bar{\tau},n}(t)
:=\left(\pi\sum_{k=0}^{2n+1} (-1)^k a_k(t)\right)_+,
& \quad \text{}\quad f^{\uparrow}_{\bar{\tau},n}(t)
:=\left[\pi\sum_{k=0}^{2n} (-1)^k a_k(t)\right]\wedge g_{\bar{\tau}}(t).
\end{align}
As shown in Lemma 1 of \citet{spl:d09}, the bounding sequences based on the representation of $f_{\bar{\tau}} (t)$ in (\ref{eq:alt1and2}.1) are monotonically converging for $t \in (0,4/\log(3)]$, and for (\ref{eq:alt1and2}.2) monotonically converging for $t \in [\log(3)/\pi^2,\infty)$. After choosing a suitable threshold $t^*\in[4/\log(3),\log(3)/\pi^2]$ for which to splice the series samplers, then by simply taking the first term in each representation of $f_{\bar{\tau}} (t)$ a dominating density can be constructed as follows,
\begin{align}
f_{\bar{\tau}} (t) \leq g_{\bar{\tau}} (t) \propto\underbrace{\dfrac{2}{\pi t^{3/2}}\exp \left\{-\dfrac{1}{2t} \right\}\cdot \mathbbm{1}_{t \leq t^*}}_{\propto g^{(1)}_{\bar{\tau}}(t)} + \underbrace{\dfrac{\pi}{2} \exp \left\{-\dfrac{\pi^2 t}{8} \right\} \cdot \mathbbm{1}_{t \geq t^*}}_{\propto g^{(2)}_{\bar{\tau}}(t)}. \label{eq:gtau}
\end{align}
\cite{spl:d09} empirically optimises the choice of $t^*=0.64$ so as to minimise the normalising constant of (\ref{eq:gtau}). With this choice $M_1:=\int  g^{(1)}_{\bar{\tau}}(t) \ud t \approx 0.422599$ (to 6 d.p.) and $M_2:=\int  g^{(2)}_{\bar{\tau}}(t) \ud t \approx 0.578103$ (to 6 d.p.), and so we have a normalising constant $M=M_1+M_2\approx 1.000702$ (to 6 d.p.) which equates to the expected number of proposal random samples drawn from $g_{\bar{\tau}}$ before one would expect an accepted draw (the algorithmic `outer loop'). Now considering the iterative algorithmic `inner loop' -- in which the bounding sequences are evaluated to precision sufficient to determine acceptance or rejection -- as shown in \cite{spl:d09}, the exponential convergence of the sequences ensures that the number of iterations required is uniformly bounded in expectation by $3$.\\
\\
Simulation from $g_{\bar{\tau}}$ is possible by either simulating $\bar{\tau}\sim g^{(1)}_{\bar{\tau}}$ with probability $M_1/M$, else $\bar{\tau}\sim g^{(2)}_{\bar{\tau}}$. Simulating $\bar{\tau}\sim g^{(1)}_{\bar{\tau}}$ can be achieved by noting that, for $t \sim g_{\bar\tau}^1$, $t\overset{\mathcal{D}}{=}t^*+8X/\pi^2$, where $X\sim\Exp(1)$. Simulating $\bar{\tau}\sim g^{(2)}_{\bar{\tau}}$ can be achieved by noting that as outlined in \cite[IX.1.2]{bk:nurvg}, for $t \sim g_{\bar\tau}^2$, $t\overset{\mathcal{D}}{=}t^*/(1+t^*X)^2$, where $X:=\inf_{i}\{\left\{X_i\right\}^\infty_{i=1}\overset{\text{iid}}{\sim}\Exp(1):(X_i)^2\leq 2X_{i+1}/t^*,(i-1)/2\in\mathbbm{Z}\}$.\\
\\
A summary of the above for simulating jointly the first passage time and location of the $i^{\text{th}}$ dimension of Brownian motion of the threshold level $\theta^{(i)}$ is provided in \algref{alg:devroyefpg}.\\
\\
\begin{algorithm}[h]
	\caption{Simulating $(\tau,W^{(i)}_{\tau})$, where $\tau:= \inf\{t\in\mathbbm{R}+:|W^{(i)}_t-W^{(i)}_0| \geq \theta^{(i)}\}$ \citep{spl:d09}.} \label{alg:devroyefpg}
	\begin{enumerate}
	\item Input $W^{(i)}_0$ and $\theta^{(i)}$.
	\item $g_{\bar{\tau}}$: Simulate $u\sim\U[0,1]$, \label{alg:st:proposal}
	\begin{enumerate}
	\item $g^{(1)}_{\bar{\tau}}$: If $u\leq M_1/M$, then simulate $X\sim\Exp(1)$ and set $\bar{\tau}:= t^*+8X/\pi^2$.
		\item $g^{(2)}_{\bar{\tau}}$: If $u> M_1/M$, then set $X:=\inf_{i}\{\left\{X_i\right\}^\infty_{i=1}\overset{\text{iid}}{\sim}\Exp(1):(X_i)^2\leq 2X_{i+1}/t^*,(i-1)/2\in\mathbbm{Z}\}$ and set $\bar{\tau}:=t^*/(1+t^*X)^2$.
	\end{enumerate}
	\item $u$: Simulate $u\sim\U[0,1]$ and set $n=0$.
	\item $f^{\cdot}_{\bar{\tau},n}$: While $u\cdot g_{\bar{\tau}}(\bar{\tau})\in(f^{\downarrow}_{\bar{\tau},n}(\bar{\tau}),f^{\uparrow}_{\bar{\tau},n}(\bar{\tau}))$, set $n=n+1$.
	\item $f_{\bar{\tau}}$: If $u\cdot g_{\bar{\tau}}(\bar{\tau})\leq f^{\downarrow}_{\bar{\tau},n}(\bar{\tau})$ accept, else reject and return to \stepref{alg:st:proposal}.
	\item $\tau$: Set $\tau:=(\theta^{(i)})^2\bar{\tau}$.
    \item $W^{(i)}_\tau$: With probability $1/2$ set $W^{(i)}_\tau=W^{(i)}_0+\theta^{(i)}$, else set $W_\tau^{(i)}=W^{(i)}_0-\theta^{(i)}$.
    \item Return $(\tau,W^{(i)}_\tau)$.
	\end{enumerate}
\end{algorithm}
\noindent \!\!\!Note that generalising to the case where we are interested in the first passage time of Brownian motion of a non-symmetric barrier, in particular for $\ell^{(i)},\upsilon^{(i)}\in\mathbbm{R}_+$,
\begin{align}
\tau^{(i)} := \inf\{t\in\mathbbm{R}+\,:\,W^{(i)}_t-W^{(i)}_0 \notin (W^{(i)}_0-\ell^{(i)},W^{(i)}_0+\upsilon^{(i)})\},
\end{align}
is trivial algorithmically. In particular, using the strong Markov property we can iteratively apply \algref{alg:devroyefpg} setting $\theta^{(i)}:=\min(\ell^{(i)},\upsilon^{(i)})$ and simulating intermediate first passage times of lesser barriers, halting whenever the desired barrier is attained. We suppress this (desirable) flexibility in the remainder of the paper to avoid the resulting notational complexity.

%%%%%%%%%%%%%%%%%%%%%%%%%%%%%%%%%%%%%%%%%%

\subsection{Simulating intermediate points of multivariate standard Brownian motion conditioned on univariate first passage times} \label{ss:fptinter}

Clearly in addition to being able to simulate the first passage times of a single dimension of Brownian motion, we want to be able simulate the remainder of the dimensions of Brownian motion at that time, or indeed the sample path at times other than its first passage times. As the dimensions of standard Brownian motion are independent (and so Brownian motion can be composed by considering each dimension separately), we can restrict our attention to simulating a single dimension of the sample path for an intermediate time $q\in[s,\tau]$  given $W_s$, the extremal value $W_\tau$, and constrained such that $\forall u\in[s,\tau], W_u\in[W_s-\theta,W_s+\theta]$. Furthermore, as we are interested in only the forward simulation of Brownian motion, by application of the strong Markov property we need consider only the simulation of a single intermediate point (although by application of \cite[Section 7]{b:pjr15} simulation at times conditional on future information is possible).\\
\\
To proceed, note that (as outlined in \cite[Prop. 2]{aap:agp95}) the law of a univariate Brownian motion sample path in the interval $[s,\tau]$ (where $s<\tau$) initialised at $(s,W_s)$ and constrained to attain its extremal value at $(\tau,W_\tau)$, is simply the law of a three-dimensional Bessel bridge. We require the additional constraint that $\forall u\in[s,\tau], W_u\in[W_s-\theta,W_s+\theta]$, which can be imposed in simulation by deploying a rejection sampling scheme in which a Bessel bridge sample path is simulated at a single required point (as above) and accepted if it meets the imposed constraint at either side of the simulated point, and rejected otherwise.\\
\\
As presented in \cite{b:bpr06,phd:p13}, the law of a Bessel bridge sample path (parametrised as above) coincides with that of an appropriate time rescaling of three independent Brownian bridge sample paths of unit length conditioned to start and end at the origin (denoted by $\{b^{(i)}\}^3_{i=1}$). Supposing we require the realisation of a Bessel bridge sample path at some time $q\in[s,\tau]$, then by simply realising three independent Brownian bridge sample paths at that time marginal ($\{b^{(i)}_q\}^3_{i=1}$), we have,
\begin{align}
W_q & = W_s + (-1)^{\mathbbm{1}(W_\tau<W_s)} \sqrt{(\tau-s)\left[\left(\dfrac{\theta(\tau-q)}{(\tau-s)^{3/2}} + b^{(1)}_q\right)^2 + (b^{(2)}_q)^2 +(b^{(3)}_q)^2\right]}.
\end{align}
The method by which the proposed Bessel bridge intermediate point is accepted or rejected (recall, to impose the constraint that $\forall u\in[s,\tau], W_u\in[W_s-\theta,W_s+\theta]$) is non-trivial as there does not exist a closed form representation of the required probability (which we will denote in this appendix by $p$). Instead, as shown in \thmref{thm:besselaccprob}, a representation for $p$ can be found as the product of two infinite series, which as a consequence of this form cannot be evaluated directly in order to make the typical acceptance-rejection comparison (i.e. determining whether $u\leq p$ or $u>p$, where $u\sim\U[0,1]$). The strategy we deploy to retain exactness and accept with the correct probability $p$ is that of a retrospective Bernoulli sampler \cite[Sec. 6.0]{b:pjr15}. In particular, in \corrolref{cor:besselacrej} we construct monotonically convergent upper and lower bounding probabilities ($p^{\uparrow}_n$ and $ p^{\downarrow}_n$ respectively) with the property that $\lim_{n\to\infty} p^{\uparrow}_n \to p$ and $\lim_{n\to\infty} p^{\downarrow}_n \to p$ such that for any $u\in[0,1]$ and $\epsilon>0$ $\exists\,n^*(t)$ such that $\forall\,n \geq n^*(t)$ we have $p^{\uparrow}_{n} - p^{\downarrow}_{n} < \epsilon$, which are then evaluated to sufficient precision to make the acceptance-rejection decision, taking almost surely finite computational time.
\FloatBarrier
\begin{theorem}  \label{thm:besselaccprob}
The probability that a three-dimensional Bessel bridge sample path $W\sim \left.\mathbbm{W}^{W_s,W_\tau}_{s,\tau}\,\mvbar\,(W_\tau, W_q)\right.$ for $s<q<\tau$ attaining its boundary value at $(\tau,W_\tau)$, remains in the interval $[W_s-\theta,W_s+\theta]$, can be represented by the following product of infinite series (where we denote by $m:=\mathbbm{1}(W_\tau>W_s)-\mathbbm{1}(W_\tau<W_s)$),
\begin{align}
& \mathbbm{P}\left(W_{[s,\tau]}\in [W_s\!-\!\theta,W_s\!+\!\theta]|W_s,W_q,W_\tau\right)
 \nonumber\\
& \quad\quad\quad = \underbrace{\left(\dfrac{1 -\sum^{\infty}_{j=1}\big[\varsigma_{q-s}(j;W_s-W_q,\theta)-\varphi_{q-s}(j;W_s-W_q,\theta)\big]}{1-\exp\left\{-2\theta[m(W_s-W_q)+\theta]/(q-s)\right\}}  \right)}_{=:p_1} \nonumber\\
& \quad\quad\quad\quad\quad\quad \cdot \underbrace{\left(1 + \sum^{\infty}_{j=1}\big[ \psi_{\tau-q}(j;W_q-W_\tau,\theta,m) + \chi_{\tau-q}(j;W_q-W_\tau,\theta,m)\big]  \right)}_{=:p_2}, 
\end{align}
where,
\begin{align}
\varsigma_{\Delta}(j;\delta,\theta)
& := 2\cdot\exp\left\{-\dfrac{2\theta^2(2j-1)^2}{\Delta}\right\}\cdot\cosh\left(\dfrac{2(2j-1)\theta \delta}{\Delta}\right),
\end{align}
\begin{align}
\varphi_{\Delta}(j;\delta,\theta)
& := 2\cdot\exp\left\{-\dfrac{8\theta^2j^2}{\Delta}\right\}\cdot\cosh\left\{\dfrac{4\theta\delta j}{\Delta}\right\},
\end{align}
\begin{align}
\psi_{\Delta}(j;\delta,\theta,m) := \chi_{\Delta}(j;\delta,\theta,-m) 
& := \dfrac{(4\theta j +m\delta)}{m\delta}\cdot\exp\left\{-\dfrac{4\theta j}{\Delta}(2\theta j +m\delta)\right\}.
\end{align}
\proof Begin by noting that that the strong Markov property allows us to decompose our required probability as follows,
\begin{align}
&\mathbbm{P}\left(W_{[s,\tau]}\in[W_s-\theta,W_s+\theta]|W_s,W_q,W_\tau\right) \nonumber\\
&\,\, = \underbrace{\mathbbm{P}\left(W_{[s,q]}\in[W_s-\theta,W_s+\theta]|W_s,W_q\right)}_{p_1}\cdot \underbrace{\mathbbm{P}\left(W_{[q,\tau]}\in[W_s-\theta,W_s+\theta]|W_q,W_\tau\right)}_{p_2}.
\end{align}
Relating the decomposition to the statement of the theorem, $p_1$ follows directly from the parametrisation given and the representation in \cite[Thm. 6.1.2]{phd:p13} of the result in \cite[Prop. 3]{mcap:bpr08}. $p_2$ similarly follows from the representation found in \cite[Thm. 5]{b:pjr15}.
\end{theorem}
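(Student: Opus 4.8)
The plan is to reduce the statement to two known facts about constrained Brownian and Bessel bridges by splitting the path at the intermediate time $q$. First I would invoke the strong Markov property of the conditioned proposal process $W\sim\mathbbm{W}^{W_s,W_\tau}_{s,\tau}\mid(W_\tau,W_q)$: conditioning on the value $W_q$ renders the restrictions $W_{[s,q]}$ and $W_{[q,\tau]}$ conditionally independent, and the event $\{W_{[s,\tau]}\in[W_s-\theta,W_s+\theta]\}$ is exactly the intersection of $\{W_{[s,q]}\in[W_s-\theta,W_s+\theta]\}$ and $\{W_{[q,\tau]}\in[W_s-\theta,W_s+\theta]\}$. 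Hence the target probability factorises as $p_1\cdot p_2$ with $p_1,p_2$ as displayed. The only genuinely probabilistic content beyond this decomposition is identifying the laws of the two pieces: on $[s,q]$ the ``extremum attained at $\tau$'' constraint degenerates to (at most) a one-sided barrier at level $W_\tau$, so $W_{[s,q]}$ is a Brownian bridge from $(s,W_s)$ to $(q,W_q)$ conditioned to remain on the side of $W_\tau$ dictated by the orientation $m$; on $[q,\tau]$ that constraint is active, so $W_{[q,\tau]}$ is a three-dimensional Bessel bridge from $(q,W_q)$ attaining its extremum at $(\tau,W_\tau)$, precisely the object constructed in \secref{ss:fptinter} (cf.\ \cite[Prop. 2]{AGP:AAP95}, \cite{B:BPR06,PhD:P13}).

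For $p_1$, the probability that a Brownian bridge whose endpoints lie inside a strip stays within that strip is classical and admits a method-of-images / reflection series; I would quote it in the form recorded in \cite[Thm. 6.1.2]{PhD:P13} of the result of \cite[Prop. 3]{MCAP:BPR08}, and then substitute the present parameters ($\Delta = q-s$, displacement $\delta = W_s - W_q$, half-width $\theta$). The alternating image series collapses to the $\cosh$-terms $\varsigma_{q-s}$ and $\varphi_{q-s}$ once reflections across the two barriers are paired and symmetrised in $\pm\delta$, and the normalising denominator $1-\exp\{-2\theta[m(W_s-W_q)+\theta]/(q-s)\}$ is exactly the single-barrier survival probability (the probability that the bridge stays on the side of $W_\tau$ imposed by the orientation), so that $p_1$ is the conditional strip-survival probability. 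For $p_2$, I would invoke the corresponding statement for three-dimensional Bessel bridges --- a Brownian bridge conditioned to attain its extremum at the terminal endpoint, further constrained to avoid the far barrier of the strip --- in the form of \cite[Thm. 5]{B:PJR15}, and again substitute $\Delta=\tau-q$, $\delta=W_q-W_\tau$, half-width $\theta$ and orientation $m$ to obtain the series in $\psi_{\tau-q}$ and $\chi_{\tau-q}$, noting that $\psi$ and $\chi$ are related by $m\mapsto -m$, which corresponds to interchanging the roles of the two barriers. Multiplying the two factors gives the claimed identity.

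I expect the main obstacle to be bookkeeping rather than anything deep. The two points that need genuine care are: (i) justifying that conditioning the Bessel bridge on $W_q$ yields on $[s,q]$ a \emph{plain} Brownian bridge carrying only the one-sided constraint --- i.e.\ that the ``extremum at $\tau$'' conditioning factorises through $W_q$, which is where the strong Markov property is really used; and (ii) matching the generic image-series expressions of \cite{MCAP:BPR08,PhD:P13,B:PJR15} to the explicit closed forms in the statement while keeping every sign and every occurrence of the orientation indicator $m$ consistent across the two cases $W_\tau>W_s$ and $W_\tau<W_s$. Once the two marginal laws are correctly identified and the cited representations are applied with the stated parametrisation, the theorem follows.
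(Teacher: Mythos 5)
Your proposal follows essentially the same route as the paper: a strong Markov/conditional-independence split at the intermediate time $q$, with $p_1$ identified as the one-sided-conditioned Brownian bridge strip probability via \cite[Prop. 3]{MCAP:BPR08} (in the form of \cite[Thm. 6.1.2]{PhD:P13}) and $p_2$ as the Bessel bridge strip probability via \cite[Thm. 5]{B:PJR15}. Your additional remarks correctly identify the conditional laws on the two subintervals and the role of the single-barrier denominator, which is exactly the content the paper's brief proof delegates to the cited results.
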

\begin{corollary} \label{cor:besselacrej}
Letting $p:=\mathbbm{P}\left(W_{[s,\tau]}\in [W_s\!-\!\theta,W_s\!+\!\theta]\right)$, monotonically convergent upper and lower bounding probabilities ($p^{\uparrow}_n$ and $ p^{\downarrow}_n$ respectively) with the property that $\lim_{n\to\infty} p^{\uparrow}_n \to p$ and $\lim_{n\to\infty} p^{\downarrow}_n \to p$ can be found (where $n_0:=\lceil\sqrt{(\tau-q)+4\theta^2}/4\theta\rceil$),
\begin{align}
& p^{\downarrow}_n
 := \left(\dfrac{1 -\sum^{n}_{j=1}\varsigma_{q-s}(j;W_s-W_q,\theta)+\sum^{n-1}_{j=1}\varphi_{q-s}(j;W_s-W_q,\theta)}{1-\exp\left\{-2\theta[m(W_s-W_q)+\theta]/(q-s)\right\}} \right)\nonumber\\
& \quad\quad\quad \cdot \left(1 + \sum^{n_0+n}_{j=1}\psi_{\tau-q}(j;W_q-W_\tau,\theta,m) + \sum^{n_0+n-1}_{j=1}\chi_{\tau-q}(j;W_q-W_\tau,\theta,m)\big]  \right), \label{eq:pdownarrow}
\end{align}
\begin{align}
& p^{\uparrow}_n
 := \left(\dfrac{1 -\sum^{n}_{j=1}\varsigma_{q-s}(j;W_s-W_q,\theta)+\sum^{n}_{j=1}\varphi_{q-s}(j;W_s-W_q,\theta) }{1-\exp\left\{-2\theta[m(W_s-W_q)+\theta]/(q-s)\right\}}\right)\nonumber\\
& \quad\quad\quad \cdot \left(1 + \sum^{n_0+n}_{j=1}\psi_{\tau-q}(j;W_q-W_\tau,\theta,m) + \sum^{n_0+n}_{j=1}\chi_{\tau-q}(j;W_q-W_\tau,\theta,m)\big]  \right). \label{eq:puparrow}
\end{align}
Furthermore we have
\begin{align}
\dfrac{p^\uparrow_n-p^\downarrow_n}{p^\uparrow_{n-1}-p^\downarrow_{n-1}} =: r_n \leq r\in (0,1), \label{eq:geobound}
\end{align}
and so,
\begin{align}
\bar{K} := \sum^\infty_{i=1} |p^\uparrow_i-p^\downarrow_i| = (p^\uparrow_1-p^\downarrow_1) + \sum^\infty_{i=2} \prod_{j=2}^i r_j \leq \sum^\infty_{i=0} r^i = \dfrac{1}{1-r} <\infty. \label{eq:geosum}
\end{align}
\proof
The summations in the left hand brackets of the sequences (\ref{eq:pdownarrow}) and (\ref{eq:puparrow}) follows from \thmref{thm:besselaccprob} and \cite[Prop. 3]{mcap:bpr08}. The summations in the right hand brackets of the sequences (\ref{eq:pdownarrow}) and (\ref{eq:puparrow}), and the necessary condition on $n_0$, follows from \cite[Corollary 5]{b:pjr15}. The validity of the product form of (\ref{eq:pdownarrow}) and (\ref{eq:puparrow}) follows from \cite[Corollary 1]{b:pjr15}. The bound on the ratio of subsequent bound ranges of $p$ in (\ref{eq:geobound}) follows from the exponential decay in $n$ of $\varsigma(n)$, $\varphi(n)$, $\psi(n)$ and $\chi(n)$ of \thmref{thm:besselaccprob}, and as shown in the proof of \cite[Thm. 6.1.1]{phd:p13} and \cite[Corollary 6.1.3]{phd:p13}. (\ref{eq:geosum}) follows directly from (\ref{eq:geobound}).
\endproof
\end{corollary}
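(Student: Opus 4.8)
The plan is to build the bounding sequences factor-by-factor starting from the product representation $p=p_1\cdot p_2$ supplied by \thmref{thm:besselaccprob}, and then to combine the two interval estimates using the elementary product-of-intervals principle (the content of \cite[Corollary 1]{B:PJR15}): if $a^{\downarrow}_n\le a\le a^{\uparrow}_n$ and $b^{\downarrow}_n\le b\le b^{\uparrow}_n$ with all four sequences eventually non-negative and convergent, then $a^{\downarrow}_n b^{\downarrow}_n\le ab\le a^{\uparrow}_n b^{\uparrow}_n$ and the products converge to $ab$. Thus it suffices to exhibit monotone sandwiching sequences $p_1^{\downarrow,n},p_1^{\uparrow,n}$ for $p_1$ and $p_2^{\downarrow,n},p_2^{\uparrow,n}$ for $p_2$ --- these will be the left- and right-hand bracketed expressions appearing in (\ref{eq:pdownarrow})--(\ref{eq:puparrow}) --- and then to set $p^{\downarrow}_n:=p_1^{\downarrow,n}p_2^{\downarrow,n}$ and $p^{\uparrow}_n:=p_1^{\uparrow,n}p_2^{\uparrow,n}$.

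For $p_1$, the numerator is the series $1-\varsigma_{q-s}(1)+\varphi_{q-s}(1)-\varsigma_{q-s}(2)+\varphi_{q-s}(2)-\cdots$, so truncating after a $-\varsigma$ term gives a lower bound and after a $+\varphi$ term an upper bound, once one knows the magnitudes interleave monotonically, $\varsigma_{q-s}(1)\ge\varphi_{q-s}(1)\ge\varsigma_{q-s}(2)\ge\cdots$, tending to $0$. This interleaving is precisely the reflection-series lemma for Brownian motion confined to a strip, and I would quote it from \cite[Prop.\ 3]{MCAP:BPR08} (as transcribed in \cite[Thm.\ 6.1.2]{PhD:P13}) rather than re-derive it; dividing through by the fixed positive denominator then gives the $p_1$-bounds. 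For $p_2=1+\sum_{j\ge1}[\psi_{\tau-q}(j)+\chi_{\tau-q}(j)]$ the same type of partial-sum argument applies, the only extra feature being that the admissible truncation only begins past the index $n_0=\lceil\sqrt{(\tau-q)+4\theta^2}/4\theta\rceil$; this threshold, and the fact that the partial sums with $n_0+n$ and $n_0+n-1$ terms bracket $p_2$, are exactly \cite[Corollary 5]{B:PJR15}. Applying the product principle gives $p^{\downarrow}_n\le p\le p^{\uparrow}_n$ for all $n$ large enough that every factor is non-negative, together with $p^{\downarrow}_n,p^{\uparrow}_n\to p$, which is the first part of the statement.

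For the geometric estimate (\ref{eq:geobound}) I would write $p^{\uparrow}_n-p^{\downarrow}_n=(p_1^{\uparrow,n}-p_1^{\downarrow,n})\,p_2^{\uparrow,n}+p_1^{\downarrow,n}\,(p_2^{\uparrow,n}-p_2^{\downarrow,n})$ and observe that each factor-gap is a fixed finite combination of the quantities $\varsigma_{q-s}(n),\varphi_{q-s}(n),\psi_{\tau-q}(n_0+n),\chi_{\tau-q}(n_0+n)$, each of which is a polynomial in $n$ times a Gaussian in $n$ and hence decays faster than any geometric rate, while the prefactors $p_2^{\uparrow,n},p_1^{\downarrow,n}$ stay bounded; consequently $r_n\to0$, and the uniform bound $r:=\sup_n r_n\in(0,1)$ then follows as shown in \cite[Thm.\ 6.1.1]{PhD:P13} and \cite[Corollary 6.1.3]{PhD:P13}. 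Finally (\ref{eq:geosum}) is immediate: $p^{\uparrow}_i-p^{\downarrow}_i=(p^{\uparrow}_1-p^{\downarrow}_1)\prod_{j=2}^{i}r_j\le\prod_{j=2}^{i}r_j\le r^{i-1}$, whence $\bar{K}\le\sum_{i\ge1}r^{i-1}=1/(1-r)<\infty$.

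The genuinely non-routine content --- and the main obstacle if one attempted this from scratch --- is verifying the monotone-interleaving behaviour that licenses the specific truncation patterns in (\ref{eq:pdownarrow})--(\ref{eq:puparrow}) and pinning down the threshold $n_0$ for the $p_2$-series; these are the delicate strip-exit estimates, which I would import wholesale from \cite{MCAP:BPR08} and \cite{B:PJR15}. Everything else reduces to the product-of-intervals principle and an exponential-tail summation.
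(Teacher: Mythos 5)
Your proposal is correct and follows essentially the same route as the paper: the truncation bounds for the two bracketed series are imported from \cite[Prop.~3]{MCAP:BPR08} and \cite[Corollary 5]{B:PJR15} (including the threshold $n_0$), the product form is justified via \cite[Corollary 1]{B:PJR15}, and the geometric ratio bound (\ref{eq:geobound}) rests on the super-geometric decay of $\varsigma,\varphi,\psi,\chi$ together with \cite[Thm.~6.1.1, Cor.~6.1.3]{PhD:P13}, with (\ref{eq:geosum}) then following as a geometric-series summation. The extra glue you supply (the alternating-series intuition for $p_1$ and the explicit decomposition of $p^{\uparrow}_n-p^{\downarrow}_n$ into factor gaps) is consistent with, and slightly more detailed than, the paper's own argument.
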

\noindent Having established \thmref{thm:besselaccprob} and \corrolref{cor:besselacrej} we can now construct a (retrospective) rejection sampler in which we simulate $W_q$ (as per the law of a Bessel bridge) and, by means of an algorithmic loop in which the bounding sequences of the acceptance probability are evaluated to  sufficient precision, we make the determination of acceptance or rejection. This is summarised in \algref{alg:qWqsim}, further noting that although the embedded loop is of random length, by \corrolref{cor:besselacrej} we know that it halts in finite expected time ($\bar{K}$ can be interpreted as the expected computational cost of the nested loop, noting that $\mathbbm{E}[\text{iterations}] := \sum^\infty_{i=0} i\mathbbm{P}(\text{halt at step i}) =  \sum^\infty_{i=0} \mathbbm{P}(\text{halt at step i or later}) =\bar{K} $).
\begin{algorithm}[h]
	\caption{Simulating $W_q\sim\mathbbm{W}^{W_s,W_\tau}_{s,\tau}|(W_s,W_\tau,\theta)$, given $q\in[s,\tau]$, the end points ($W_s$ and the extremal value $W_\tau$), and constrained such that $\forall u\in[s,\tau], W_u\in[W_s-\theta,W_s+\theta]$.} \label{alg:qWqsim}
	\begin{enumerate}
	\item $\{b^{(i)}_q\}^3_{i=1}$: Simulate $b^{(1)}_q,b^{(2)}_q,b^{(3)}_q \overset{\text{iid}}{\sim} \N\left(0,\dfrac{|\tau-q|\cdot|q-s|}{(\tau-s)^2}\right)$.\label{alg:qWqsim:proposal}
	\item $W_q$: Set $W_{q}:=W_\tau + (-1)^{\mathbbm{1}(W_\tau<W_s)} \sqrt{(\tau-s)\left[\left(\dfrac{\theta(\tau-q)}{(\tau-s)^{3/2}} + b^{(1)}_q\right)^2 + (b^{(2)}_q)^2 +(b^{(3)}_q)^2\right]}$.
	\item $u$: Simulate $u\sim\U[0,1]$ and set $n=1$.
	\item $p^\downarrow_\cdot,p^\uparrow_\cdot$: While $u\notin [p^\downarrow_n,p^\uparrow_n]$, set $n=n+1$.
	\item $p$: If $u\leq p^\downarrow_n$ accept, else reject and return to \stepref{alg:qWqsim:proposal}.
    \item Return $(q,W_q)$.
	\end{enumerate}
\end{algorithm}

%%%%%%%%%%%%%%%%%%%%%%%%%%%%%%%%%%%%%%%%%%

\subsection{Simulation of a single trajectory of constrained Brownian motion}\label{ss:fptfinal}

We now have the constituent elements for \secref{s:emcfd}, in which we simulate multivariate Brownian motion at any desired time marginal, with $d$-dimensional hypercubes inscribing intervals of the state space in which the sample path almost surely lies (layers, more formally defined in \cite{b:pjr15}). Recall from \secref{s:emcfd} that the killing times are determined by a random variable whose distribution depends upon the inscribed layers, and so the presentation of \algref{alg:scaleiterate} necessitates a loop in which the determination of whether the stopping time occurs in the interval is required.\\
\\
%In the preceding subsections of \apxref{s:pathspace layer},
We require the user-specified vector $\mathbf{\theta}$ in order to determine the default hypercube inscription size. In practice, as with other MCMC methods, we might often apply a preconditioning matrix to the state space before applying the algorithm.\\
\\
Further note that, due to the strong Markov property, it is user preference whether this algorithm is run in its entirety for every required time marginal, or whether it resets layer information whenever any one component breaches its boundary and re-initialises from that time on according to \algstref{alg:scaleiterate}{alg:restart}.
\begin{algorithm}[h]
	\caption{Simulating constrained Brownian motion at a desired time marginal $(t,W_t)$.} \label{alg:scaleiterate}
	\begin{enumerate}
	\item Input $\mathbf{W}_s$ and $\mathbf{\theta}$.
	\item $\mathbf{\tau}$: For $i\in\{1,\ldots{},d\}$, simulate $(\tau^{(i)},W^{(i)}_{\tau})$ as per \algref{alg:devroyefpg}.
    \item $\hat{\tau}$: Set $\hat{\tau}:=\inf_i\{\tau^{(i)}\}$, set $j:=\{i\in\{1,\ldots{},d\}:\tau^{(i)}=\hat{\tau}\}$. \label{algtauhat}
    \item[*] $t$: If required, simulate $t$ as outlined in \secref{s:emcfd}.
    \item $t$: If $t\notin[s,\hat{\tau}]$, 
    \begin{enumerate}
    \item $(\hat{\tau},W^{(\cdot)}_{\hat{\tau}})$: For $i\in\{1,\ldots{},d\}\setminus j$, simulate $(\hat{\tau},W^{(i)}_{\hat{\tau}})$ as per \algref{alg:qWqsim}.
    \item $(\tau^{(j)},W^{(j)}_{\tau})$: Simulate $(\tau^{(j)},W^{(j)}_{\tau})$ as per \algref{alg:devroyefpg}. \label{alg:restart}
    \item $s$: Set $s:=\hat{\tau}$, and return to Step \ref{algtauhat}.
    \end{enumerate}
    \item $(t,W^{(\cdot)}_{t})$: For $i\in\{1,\ldots{},d\}$, simulate $(t,W^{(i)}_{t})$ as per \algref{alg:qWqsim}.
    \item Return $(t,W_t)$.
	\end{enumerate}
\end{algorithm}

%%%%%%%%%%%%%%%%%%%%%%%%%%%%%%%%%%%%%%%%%%

\section{Path-space Rejection Sampler (PRS) for $\mu_T$}\label{apx:prs}

A path-space rejection sampler for $\mu_T$ can be constructed by drawing from Brownian motion measure, $\mathbf{X}\sim\mathbbm{W}^{\mathbf{x}}_T$, accepting with probability $P(\mathbf{X})$ given by
\begin{align}
{P}(\mathbf{X}) 
& = \underbrace{\exp\left\{\Phi T -\sum^{n_R}_{i=1} L^{(i)}_{\mathbf{X}}\cdot[(\tau_i\wedge T)\!-\!\tau_{i-1}]\right\}}_{=:P^{(1)}(\mathbf{X}) \in [0,1]} \cdot \prod^{n_R}_{i=1}\Bigg[\underbrace{\exp\left\{-\!\!\int^{\tau_i\wedge T}_{\tau_{i-1}}\!\left(\phi(\mathbf{X}_s)\!-\!L^{(i)}_{\mathbf{X}}\right)\ud s\right\}}_{=:P^{(2,i)}(\mathbf{X})}\Bigg] \label{eq:p1decomp} \\
& = \prod^{n_R}_{i=1} \Bigg[\underbrace{\exp\left\{(\Phi - L^{(i)}_{\mathbf{X}})\cdot[(\tau_i\wedge T)\!-\!\tau_{i-1}]\right\}}_{=:P^{(1,i)}(\mathbf{X}) \in [0,1]} \cdot \underbrace{\exp\left\{-\int^{\tau_i\wedge T}_{\tau_{i-1}}\!\left(\phi(\mathbf{X}_s)\!-\!L^{(i)}_{\mathbf{X}}\right)\ud s\right\}}_{=:P^{(2,i)}(\mathbf{X})}\Bigg]. \label{eq:p1decomp2}
\end{align}
The algorithmic pseudo-code for this approach is thus presented in \algref{alg:altuea}.
\begin{algorithm}[h]
	\caption{Path-space Rejection Sampler (PRS) for $
\mu_T$ Algorithm} \label{alg:altuea}
	\begin{enumerate}
	\item Input: $\mathbf{X}_0$.
	\item $R$: Simulate layer information $R\sim\mathcal{R}$ as per \apxref{s:pathspace layer}. \label{alg:altuea:start}\label{alg:cuea:layer}
	\item $P^{(1)}$: With probability $1- \exp\{\Phi T -\sum^{n_R}_{i=1} L^{(i)}_\mathbf{X}\cdot[(\tau_i\wedge T)\!-\!\tau_{i-1}]\}$ reject and return to \stepref{alg:altuea:start}. \label{alg:cuea:pre}
	\item $n_R$: For $i$ in $1\to n_R$,
	\begin{enumerate}
	\item $\mathbbm{U}^{(i)}_R$: Set $j=0$, $\kappa_i=0$, $\xi^{(i)}_0:=\tau_{i-1}$ and $E^{(i)}_1 \sim \text{Exp}(U^{(i)}_\mathbf{X}-L^{(i)}_\mathbf{X})$. While $\sum_j E^{(i)}_j < [(\tau_i\wedge T)\!-\!\tau_{i-1}]$,
	\begin{enumerate}
	\item $\xi^{(i)}_j$: Set $j=j+1$ and $\xi^{(i)}_j = \xi^{(i)}_{j-1}+E^{(i)}_j$.
	\item $\mathbf{X}_{\xi^{(i)}_j}$: Simulate $\mathbf{X}_{\xi^{(i)}_j}\sim \left.\text{MVN}(\mathbf{X}_{\xi^{(i)}_{j-1}},(\xi^{(i)}_j-\xi^{(i)}_{j-1}))|R^{(i)}_\mathbf{X} \right.$.
	\item $P^{(2,i,j)}$: With probability $1-[U^{(i)}_\mathbf{X}\!-\!\phi\big(\mathbf{X}_{\xi^{(i)}_j}\big)]/[U^{(i)}_\mathbf{X}\!-\!L^{(i)}_\mathbf{X}]$, reject path and return to \stepref{alg:altuea:start}.
	\item $E^{(i)}_{j+1}$: Simulate $E^{(i)}_{j+1} \sim \text{Exp}(U^{(i)}_\mathbf{X}-L^{(i)}_\mathbf{X})$.
	\end{enumerate}
		\item $\mathbf{X}_{\tau_i\wedge T}$: Simulate $\mathbf{X}_{\tau_i\wedge T}\sim \text{MVN}\left.(\mathbf{X}_{\xi^{(i)}_{j}},[(\tau_i\wedge T)-\xi^{(i)}_j])|R^{(i)}_\mathbf{X} \right.$.
	\end{enumerate}
	\end{enumerate}
\end{algorithm}
\\\\ 
Crucially, determination of acceptance is made using only a path \textit{skeleton} (as introduced in \cite{b:pjr15}, a  path \textit{skeleton} is a finite-dimensional realisation of the sample path, including a \textit{layer} constraining the sample path, sufficient to recover the sample path at any other finite collection of time points without error as desired). The PRS for $\mu_T$ outputs the skeleton composed of all intermediate simulations,
\begin{align}
\mathcal{S}_\text{PRS}\left(\mathbf{X}\right) :=\left\{\mathbf{X}_0,\left(\left(\xi^{(i)}_j,\mathbf{X}_{\xi^{(i)}_j} \right)^{\kappa_i}_{j=1},R^{(i)}_\mathbf{X}\right)^{n_R}_{i=1}\right\}, \label{eq:altueaskel}
\end{align}
which is sufficient to simulate any finite-dimensional subset of the remainder of the sample path (denoted by $\mathbf{X}^\text{rem}$) as desired without error (as outlined in \cite[\textsection 3.1]{b:pjr15} and \apxref{apx:fpttheory}),
\begin{align}
\rem{\mathbf{X}}_{(0,T)} & \sim\left.\otimes^{n_R}_{i=1}\left(\otimes^{\kappa_i}_{j=1} \mathbbm{W}^{\mathbf{X}[\xi^{(i)}_{j-1},\xi^{(i)}_j]}_{\xi^{(i)}_{j-1},\xi^{(i)}_j}\right)\mvbar R^{(i)}_\mathbf{X}\right..
\end{align}

%%%%%%%%%%%%%%%%%%%%%%%%%%%%%%%%%%%%%%%%%%

\section{Killed Brownian Motion (KBM)}\label{apx:kbm}

In \algref{alg:devroyefpg} we detailed an approach to simulate the killing time and location, ($\bar{\tau},\mathbf{X}_{\bar{\tau}}$), for killed Brownian motion. To avoid unnecessary algorithmic complexity, note that we can recover the pair ($\bar{\tau},\mathbf{X}_{\bar{\tau}}$) by a simple modification of \algref{alg:altuea} in which we set $\forall i\, L^{(i)}_\mathbf{X}:=\Phi$ and return the first rejection time. This is presented in \algref{alg:kbm}. A variant in which $L^{(i)}_\mathbf{X}$ is incorporated would achieve greater efficiency, but is omitted for notational clarity.
\begin{algorithm}[h]
	\caption{Killed Brownian Motion (KBM) Algorithm} \label{alg:kbm}
	\begin{enumerate}
	\item Initialise: Set $i=1$, $j=0$, $\tau_0=0$. Input initial value $\mathbf{X}_0$.
	\item $R$: Simulate layer information $R^{(i)}_{\mathbf{X}}\sim\mathcal{R}$ as per \apxref{s:pathspace layer}, obtaining $\tau_i$, $U^{(i)}_\mathbf{X}$. \label{alg:kbm:layer}
	\item $E$: Simulate $E \sim \text{Exp}(U^{(i)}_\mathbf{X}-\Phi)$. \label{alg:kbm:loop}
	\item $\xi_j$: Set $j=j+1$ and $\xi_j = (\xi_{j-1}+E)\wedge \tau_i$.
	\item $\mathbf{X}_{\xi_j}$: Simulate $\mathbf{X}_{\xi_j}\sim \left.\text{MVN}(\mathbf{X}_{\xi_{j-1}},(\xi_j-\xi_{j-1}))|R^{(i)}_\mathbf{X} \right.$.
	\item $\tau_i$: If $\xi_j=\tau_i$, set $i=i+1$ and return to \stepref{alg:kbm:layer}.
    \item $P$: With probability $[U^{(i)}_\mathbf{X}\!-\!\phi\big(\mathbf{X}_{\xi_j}\big)]/[U^{(i)}_\mathbf{X}-\Phi]$ return to  \stepref{alg:kbm:loop}. \label{alg:omit1} 
    \item $(\bar{\tau},\mathbf{X}_{\bar{\tau}})$: Return $(\bar{\tau},\mathbf{X}_{\bar{\tau}})= ({\xi_j},\mathbf{X}_{\xi_j})$, $i_{\bar{\tau}}=i$, $j_{\bar{\tau}}=j$. \label{alg:omit2}
	\end{enumerate}
\end{algorithm}
\\\\
As in the PRS for $\mu_T$ presented in \apxref{apx:prs}, in KBM (\algref{alg:kbm}) we can recover in the interval $[0,\bar{\tau})$ the remainder of the sample path as desired without error as follows (where for clarity we have suppressed full notation, but can be conducted as described in \apxref{apx:fpttheory}),
\begin{align}
\mathcal{S}_\text{KBM}\left(\mathbf{X}\right) :=\left\{\mathbf{X}_0,(\xi_j,\mathbf{X}_{\xi_j})^{j_{\bar{\tau}}}_{j=1},(R^{(i)}_\mathbf{X})^{i_{\bar{\tau}}}_{i=1}\right\}, & \quad\quad\quad \rem{\mathbf{X}}_{(0,T)} \sim\mathbbm{W} | \mathcal{S}_\text{KBM}. \label{eq:kbmskel}
\end{align}

%%%%%%%%%%%%%%%%%%%%%%%%%%%%%%%%%%%%%%%%%%

\section{Rejection Sampling based QSMC Algorithm} \label{apx:rqsmc}

In \secref{s:qsmc} we considered the embedding of IS-KBM of 
\algref{alg:is-kbm} within SMC. A similar embedding for the rejection sampling variant (KBM) of \algref{alg:omit2} is considered here as the probability of the killed Brownian motion trajectory of \algref{alg:kbm} remaining alive becomes arbitrarily small as diffusion time increases. As such, if one wanted to approximate the law of the process conditioned to remain alive until large $T$ it would have prohibitive computational cost.\\
\\
Considering the KBM algorithm presented in \apxref{apx:kbm}, in which we simulate trajectories of killed Brownian motion, the most natural embedding of this within an SMC framework is to assign each particle constant un-normalised weight while alive, and zero weight when killed. Resampling in this framework simply consists of sampling killed particles uniformly at random from the remaining alive particle set. The manner in which we have constructed \algref{alg:kbm} allows us to conduct this resampling in continuous time, and so we avoid the possibility of at any time having an alive particle set of size zero. We term this approach (Continuous Time) Rejection Quasi-Stationary Monte Carlo (R-QSMC), and present it in \algref{alg:ctr-qsmc}. In \algref{alg:ctr-qsmc} we denote by $m(k)$ as a count of the number of killing events of particle trajectory $k$ in the time elapsed until the $m^{\text{th}}$ iteration of the algorithm.
\begin{algorithm}[h]
	\caption{(Continuous Time) Rejection Quasi-Stationary Monte Carlo Algorithm (R-QSMC) Algorithm.} \label{alg:ctr-qsmc}
    \begin{enumerate}
	\item \textbf{Initialisation Step ($m=0$)}
	\begin{enumerate}
	\item Input: Starting value, $\hat{\mathbf{x}}$, number of particles, $N$.
	\item $\mathbf{X}^{(\cdot)}_0$: For $k$ in $1$ to $N$ set $\mathbf{X}^{(1:N)}_{t_0}=\hat{\mathbf{x}}$ and $w^{(1:N)}_{t_0}=1/N$.
	\item $\bar{\tau}^{(\cdot)}_1$: For $k$ in $1$ to $N$, simulate $\!\left.\left(\bar{\tau}^{(k)}_1,\mathbf{X}^{(k)}_{\bar{\tau}_1}\right)\mvbar\left(t_0^{(k)},\mathbf{X}^{(k)}_{t_0}\right)\right.\!$ as per \algref{alg:kbm}. \label{r-scale:sub1}
	\end{enumerate}
	\item \textbf{Iterative Update Steps ($m=m+1$)}
	\begin{enumerate}
	\item $\bar{\bar{\tau}}_m$: Set $\bar{\bar{\tau}}_m:=\inf\{\{\bar{\tau}^{(k)}_{m(k)}\}^N_{k=1}\}$, $\bar{k}:=\{k:\bar{\bar{\tau}}_m=\bar{\tau}^{(k)}_{m(k)}\}$.
	\item $K$: Simulate $K \sim \U\{\{1,\ldots{},n\}\setminus \bar{k}\}$.
	\item $\mathbf{X}^{(\cdot)}_{\bar{\bar{\tau}}_m}$: Simulate $\mathbf{X}^{(\bar{k})}_{\bar{\bar{\tau}}_m}\sim \mathbbm{W}| \mathcal{S}^{(K)}_\text{KBM}$ as given by (\ref{eq:kbmskel}) and as per \algref{alg:qWqsim}.
	\item $\bar{\bar{\tau}}_{m+1}$: Simulate $\!\left.\left(\bar{\tau}^{(\bar{k})}_{m(\bar{k})+1},\mathbf{X}^{(\bar{k})}_{\bar{\tau}_{m(\bar{k})+1}}\right)\mvbar\left(\bar{\bar{\tau}}_m,\mathbf{X}^{(\bar{k})}_{\bar{\bar{\tau}}_m}\right)\right.\!$ as per \algref{alg:kbm}. \label{r-scale:sub2}
	\end{enumerate}	
	\end{enumerate}
\end{algorithm}
\\\\
\noindent Iterating the R-QSMC algorithm beyond some time $t^*$ at which point we believe we have obtained convergence, and halting at time $T>t^*$, we can approximate the law of the killed process by the weighted occupation measures of the trajectories (where $\forall t\, w^{(\cdot)}_{t}=1/N$),
\begin{align}
\pi(\!\ud \mathbf{x}) \approx \hat{\pi}(\!\ud \mathbf{x}) 
& := \dfrac{1}{T-t^*}\int^T_{t^*} \sum^N_{k=1}w^{(k)}_t\cdot \delta_{\mathbf{X}^{(k)}_{t}}(\!\ud \mathbf{x}) \ud t. \label{eq:occupation1}
\end{align}
In some instances the tractable nature of Brownian motion will admit an explicit representation of (\ref{eq:occupation1}). If not, one can simply sample the trajectories exactly at equally spaced points to find an unbiased approximation of (\ref{eq:occupation1}), by means detailed in \apxref{ss:fptinter} and \algref{alg:devroyefpg}. In particular, if we let $t_0:=0< t_1<\ldots{}<t_m:=T$ such that $t_i-t_{i-1}:=T/m$, then we can approximate the law of the killed process as we did in (\ref{eq:occupation2}), where $w^{(1:N)}_{t*:T}=1/N$.

%%%%%%%%%%%%%%%%%%%%%%%%%%%%%%%%%%%%%%%%%%

\section{Rejection sampling Scalable Langevin Exact (R-ScaLE) algorithm} \label{apx:r-scale}

In \secref{s:scale} we noted that the survival probability of a proposal Brownian motion sample path was related to the estimator $P(\mathbf{X})$ of \apxref{apx:prs} and in (\ref{s:repest}) where we develop a replacement estimator. The construction of control variates in \secref{s:repest} allows us to construct the replacement estimator such that it has good scalability properties. In a similar fashion to the embedding of this estimator within QSMC (\algref{alg:qsmc}) resulting in ScaLE (\algref{alg:scale}), we can embed this estimator with the rejection sampling variant R-QSMC (\algref{alg:ctr-qsmc}) resulting in the \textit{Rejection Scalable Langevin Exact algorithm (R-ScaLE)} which we present in \algref{alg:r-scale}.\\
\\
Note as presented in \algref{alg:r-scale} we may also be concerned with the absolute growth of $\tilde{\Phi}$ (relative to $\Phi$) as a function of $n$ in order to study its computational complexity. Note however, as remarked upon in \apxref{apx:kbm}, if this growth is not favourable one can modify \algref{alg:kbm} to incorporate the additional path-space bound $\tilde{L}^{(i)}_\mathbf{X}$ for each layer. Details of this modification are omitted for notational clarity.
\begin{algorithm}[h]
	\caption{The R-ScaLE Algorithm (as per \algref{alg:ctr-qsmc} unless stated otherwise).} \label{alg:subscale3} \label{alg:r-scale}
	\begin{enumerate}
    \item[0.] Choose $\hat{\mathbf{x}}$ and compute $\nabla\log\pi(\hat{\mathbf{x}})$, $%\text{ div\,}\nabla
    \Delta \log\pi(\hat{\mathbf{x}})$. $\tilde{\Phi}$.
    \item[\ref{r-scale:sub1}.] On calling \algref{alg:kbm}
    \begin{enumerate}
    \item Replace $\Phi$ with $\tilde{\Phi}$.
    \item Replace $U^{(i)}_{\mathbf{X}}$ in \stepref{alg:kbm:layer} with $\tilde{U}^{(i)}_{\mathbf{X}}$.
    \item Replace \stepref{alg:omit1} with: Simulate $I,J\overset{\text{iid}}{\sim} \U\{0, \ldots ,n\}$, and with probability $[\tilde{U}^{(i)}_\mathbf{X}\!-\!\tilde{\phi}\big(\mathbf{X}_{\xi_j}\big)]/[\tilde{U}^{(i)}_\mathbf{X}-\Phi]$ return to  \stepref{alg:kbm:loop}.
    \end{enumerate}
    \item[\ref{r-scale:sub2}] As Step \ref{r-scale:sub1}.
	\end{enumerate}
\end{algorithm}

%%%%%%%%%%%%%%%%%%%%%%%%%%%%%%%%%%%%%%%%%%

\section{Discrete Time Sequential Monte Carlo Construction}\label{app:smc}

Considering the discrete time system with state space $\dtss[k] = ({C}(h(k-1),hk],\mathcal{Z}_k)$ at discrete time $k$, with the process denoted $\dtproc[k] = (X_{(h(k-1),hk]},\dtaux[k])$ in which the auxiliary variables $\dtaux[k]$ take values in some space $\mathcal{Z}_k$.\\
\\
The ScaLE Algorithm, with resampling conducted deterministically at times $h,2h,\ldots$ coincides exactly with the mean field particle approximation of a discrete time Feynman-Kac flow, in the sense and notation of \cite{bk:fk}, with transition kernel \[M_k(\dtproc[k-1],d\dtproc[k]) = \mathbb{W}^{X_{h(k-1)}}_{h(k-1),hk}(dX_{(h(k-1),hk]}) Q_k(X_{(h(k-1),hk]},d\dtaux[k])\] and a potential function $G_k(\dtproc[k])$, which is left intentionally unspecified to allow a broad range of variants of the algorithm to be included, the property which it must possess to lead to a valid form of ScaLE Algorithm is specified below. Allowing 
\[
\overline{\mathbb{W}}^{x}_{0,hk}(\dtproc[1:k]) = \mathbb{W}_x^{0,hk}(dX_{0:hk})
\prod_{i=1}^k Q_i(X_{(h(i-1),hi]},d\dtaux[i])
\]
and specifying an extended version of the killed process via
\[
\frac{d\overline{\mathbb{K}}^x_{0,hk}}{d\overline{\mathbb{W}}^x_{0,hk}}
(\dtproc[1:k]) \propto \prod_{i=1}^k G(\dtproc[i]).
\]
The validity of such a ScaLE Algorithm depends upon the following identity holding:
\[
\frac{d\mathbb{K}^x_{0,hk}}{d\mathbb{W}^x_{0,hk}}(X_{0:hk}) \propto
\mathbb{E}_{\mathbb{W}^x_{0,hk}}\left[\left.\prod_{i=1}^k G_i(\dtproc[i])\right|X_{0:hk}\right].
\]
It is convenient to define some simplifying notation. We define the law of a discrete time process (in which is embedded a continuous time process taking values in $C[0,\infty)$):
\[ \overline{\mathbb{W}}^x(d\dtproc[]) = \overline{\mathbb{W}}^x_{0,h}(d\dtproc[1]) \prod_{k=1}^\infty  \overline{\mathbb{W}}^{X_{h(k-1)}}_{h(k-1),hk}(d\dtproc[k])\]
and of a family of processes indexed by $k$, $\overline{\mathbb{K}}^x_k$, again incorporating a continuous time process taking values in $C[0,\infty)$, via:
\[
\frac{d\overline{\mathbb{K}}^x_{k}}{d\overline{\mathbb{W}}_x}(\dtproc[]) \propto
\prod_{i=1}^k G_i(\dtproc[i]).
\]
With a slight abuse of notation, we use the same symbol to refer to the associated finite-dimensional distributions, with the intended distribution being indicated by the argument. We also define the \emph{marginal} laws, $\mathbb{W}^x$ and $\mathbb{K}_k^x$ via:
\begin{align*}
\mathbb{W}^x(dX) =& \overline{\mathbb{W}}^x(dX \times (\otimes_{p=1}^{\infty} \mathcal{Z}_p))\\
\mathbb{K}^x_k(dX) =& \overline{\mathbb{K}}^x_k(dX \times (\otimes_{p=1}^{\infty} \mathcal{Z}_p)).
\end{align*}

\begin{proposition}
Under mild regularity conditions (cf. \cite{bk:fk,as:c04}), for any $\varphi : \mathbb{R}^d \to \mathbb{R}$, any algorithm within the framework described admits a central limit in that:
\begin{align*}
\lim_{N \to \infty} \sqrt{N} \left[\frac{1}{N} \sum_{i=1}^N \varphi(X^i_{hk}) -
  \mathbb{K}_k^x(\varphi(X^i_{hk})) \right] \Rightarrow \sigma_{k,G}(\varphi) Z
\end{align*}
where, $Z$ is a standard normal random variable, $\Rightarrow$ denotes convergence in distribution, and:
{\small
\begin{align*}
\sigma_k^2(\varphi) =& \mathbb{E}_{\overline{\mathbb{W}}} 
\left[
  \left( 
\frac{G_1(\dtproc[1]) \mathbb{E}_{\overline{\mathbb{W}}^x}\left[\prod_{i=2}^k
    G(\dtproc[i])|\dtproc[1] \right]}{\overline{\mathbb{W}}^x(\prod_{i=1}^k G(\dtproc[i]))}
  \right)^2
  \mathbb{E}_{{\mathbb{K}}_{k}^x}
  \left[\left. 
      \left(
        \varphi(X_{hk}) - \mathbb{K}_{k}^x(\varphi(X_{hk})) 
      \right)^2
      \right\vert 
      X_{h}
  \right] 
\right]+\\
& \sum\limits_{p=2}^{k-1}
\mathbb{E}_{\overline{\mathbb{K}}_{p-1}^x} \left[
\left( 
\frac{\overline{\mathbb{W}}^x(\prod_{i=0}^{p-1} G(\dtproc[i]))}{\overline{\mathbb{W}}^x(\prod_{i=0}^k G(\dtproc[i]))} G(\dtproc[p]) \mathbb{E}_{\mathbb{W}^x}\left[\left.\prod_{i={p+1}}^k G(\dtproc[i]) \right| X_{hp}\right]
\right)^2 
\mathbb{E}_{\mathbb{K}_{k}^x}
  \left[\left. \left(\varphi(X_{hk}) - \mathbb{K}_{k}^x(\varphi(X_{k})) \right)^2
  \right| X_{hp}\right] 
\right]+\\
& 
\mathbb{E}_{\overline{\mathbb{K}}_{k-1}^x} \left[
\left( 
\frac{\overline{\mathbb{W}}^x(\prod_{i=0}^{k-1} G(\dtproc[i]))}{\overline{\mathbb{W}}^x(\prod_{i=0}^k G(\dtproc[i]))} G(\dtproc[k])
\right)^2 
\left(\varphi(X_{hk}) - \overline{\mathbb{K}}_{k}^x(\varphi(X_{hk})) \right)^2
\right]
\end{align*}}
\end{proposition}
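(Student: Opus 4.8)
The plan is to recognise the particle system underlying ScaLE (with deterministic resampling at times $h,2h,\ldots$) as precisely the mean-field interacting particle approximation of a discrete-time Feynman--Kac flow on the path-valued state spaces $\dtss[k]$, and then to invoke the central limit theorem for such approximations (\cite[Chapter 9]{smc:theory:Del04}; see also \cite{smc:clt:Cho04}). Concretely, the flow is driven by the Markov kernel $M_k(\dtproc[k-1],d\dtproc[k])$ and the potential $G_k(\dtproc[k])$ displayed above: the selection step is multinomial resampling with weights proportional to $G_k$, and the mutation step is simulation from $M_k$, i.e.\ a Brownian increment over $(h(k-1),hk]$ together with whatever auxiliary variables (event times, sub-sampling indices) the chosen variant requires. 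First I would make this correspondence fully explicit, checking that the weighted empirical measure $\tilde{\pi}^N_{t_k}$ produced by \algref{alg:qsmc}/\algref{alg:scale} coincides with the Del Moral particle measure for this flow, so that the left-hand side of the claimed limit is the particle estimate of $\mathbb{K}_k^x(\varphi(X_{hk}))$.

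Next I would establish that the flow has the correct limit, i.e.\ that the $X_{hk}$-marginal of the normalised Feynman--Kac measure $\overline{\mathbb{K}}_k^x$ is $\mathbb{K}_k^x$. This is exactly the consistency identity
\[
\frac{d\mathbb{K}^x_{0,hk}}{d\mathbb{W}^x_{0,hk}}(X_{0:hk}) \propto \mathbb{E}_{\mathbb{W}^x_{0,hk}}\Big[\textstyle\prod_{i=1}^k G_i(\dtproc[i]) \,\Big|\, X_{0:hk}\Big],
\]
which is the sole property the (deliberately unspecified) potential must satisfy; it holds for the IS-KBM weight increment $w^*_{t_i-t_{i-1}}$ by \thmref{thm:acceptprob2} and, in the sub-sampled case, by the unbiasedness relations (\ref{eq:phiunbiased})--(\ref{eq:subsamplelambda}) together with the fact that (\ref{eq:Px2}) reduces to (\ref{eq:Px}) on taking expectations over the auxiliary randomness. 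Marginalising over the auxiliary coordinates then gives $\overline{\mathbb{K}}_k^x(dX\times(\otimes_{p=1}^\infty \mathcal{Z}_p))=\mathbb{K}_k^x(dX)$, which identifies the target of the approximation. It then remains to verify the hypotheses of the CLT, which for fixed $k$ reduce to the potentials $G_1,\ldots,G_k$ being bounded with strictly positive Feynman--Kac normalising constants; this is immediate from the path-space layer construction, since (up to a deterministic factor) the per-step potential is a product of factors $P^{(1,i)}(\mathbf{X})\in[0,1]$ of (\ref{eq:p1decomp2}) and $P^{(2,i,j)}(\mathbf{X})\in[0,1]$ of \thmref{thm:acceptprob2} (the analogous factors in the sub-sampled variant lying in $[0,1]$ by the enlarged bounds $\tilde{L}^{(i)}_{\mathbf{X}}\le\tilde{\phi}_A(\cdot)\le\tilde{U}^{(i)}_{\mathbf{X}}$), and these are strictly positive almost surely. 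Under these conditions \cite[Chapter 9]{smc:theory:Del04} yields the stated weak convergence with asymptotic variance equal to the standard decomposition over the $k$ resampling times, the $p=0$ term vanishing because the particles are initialised at the single point $\hat{\mathbf{x}}$.

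The last step is purely \emph{bookkeeping}: rewriting each term of this abstract variance in the explicit form displayed in the statement. Using the tower property one factorises the Feynman--Kac semigroup applied to the centred test function as $G_p(\dtproc[p])\cdot\mathbb{E}_{\mathbb{W}^x}[\prod_{i=p+1}^k G_i(\dtproc[i])\mid X_{hp}]$ times $\mathbb{E}_{\mathbb{K}_k^x}[(\varphi(X_{hk})-\mathbb{K}_k^x(\varphi))^2\mid X_{hp}]$, and re-expresses the outer expectation against the predictive measure as one against $\overline{\mathbb{K}}_{p-1}^x$; the $p=1$ term is special since the predictive measure is $\delta_{\hat{\mathbf{x}}}$, so the expectation is taken under $\overline{\mathbb{W}}$ carrying only the single factor $G_1(\dtproc[1])$, and the $p=k$ term is special since no mutation follows, so the conditional second moment degenerates to $(\varphi(X_{hk})-\overline{\mathbb{K}}_k^x(\varphi))^2$. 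I expect the main obstacle to be precisely this matching --- keeping careful track of the continuous-time-embedded path coordinates and of which expectations are conditional on $X_h$ versus $X_{hp}$ --- rather than any genuinely new probabilistic content, since once the model is cast in the Feynman--Kac formalism both the CLT and its variance formula are entirely standard.
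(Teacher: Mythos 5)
Your proposal is correct and follows essentially the same route as the paper: cast the algorithm as the mean-field particle approximation of a discrete-time Feynman--Kac flow with kernels $M_k$ and potentials $G_k$, invoke the standard CLT of \cite{smc:theory:Del04,smc:clt:Cho04} (the paper routes this through the Proposition of \cite{amj26:JD08}), and then algebraically simplify the generic asymptotic-variance expression into the displayed form. The only difference is organisational: your verification that the specific IS-KBM and sub-sampled potentials satisfy the consistency identity is, in the paper, deferred to the subsequent corollary rather than being part of this proposition's proof.
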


\begin{proof}[Proof Outline]
It follows by a direct application of the argument underlying the Proposition of \cite{spl:jd08} (which itself follows from simple but lengthy algebraic manipulations from the results of \cite{bk:fk,as:c04}) that for any test function, $\varphi: \mathbb{R}^d \to \mathbb{R}$ satisfying mild regularity conditions (cf. \cite{bk:fk,as:c04}) that 
\begin{align*}
\lim_{N \to \infty} \sqrt{N} \left[\frac{1}{N} \sum_{i=1}^N \varphi(X^i_{hk}) -
  \mathbb{K}_k^x(\varphi(X^i_{hk})) \right] \Rightarrow \sigma_{k,G}(\varphi) Z
\end{align*}
where, $Z$ is a standard normal random variable, $\Rightarrow$ denotes convergence in distribution, and:
\begin{align*}
\sigma_{k,G}^2(\varphi) =& \mathbb{E}_{\overline{\mathbb{W}}} 
\left[
  \left( 
    \frac{d\overline{\mathbb{K}}_{k}^x}{d\overline{\mathbb{W}}^x}  (X_{(0,h]},\dtaux[1]) 
  \right)^2
  \mathbb{E}_{\overline{\mathbb{K}}_{k}^x}
  \left[\left. 
      \left(
        \varphi(X_{hk}) - \overline{\mathbb{K}}_{k}^x(\varphi(X_{hk})) 
      \right)^2
      \right\vert 
      \overline{\mathcal{F}}_{1}
  \right] 
\right]+\\
& \sum\limits_{p=2}^{k-1}
\mathbb{E}_{\overline{\mathbb{K}}_{p-1}} \left[\left( \frac{d\overline{\mathbb{K}}_{k}^x}{d\overline{\mathbb{K}}_{p-1}^x}
  (X_{(0,hp]},\dtaux[1:p]) \right)^2 
\mathbb{E}_{\overline{\mathbb{K}}_{k}^x}
  \left[\left. \left(\varphi(X_{hk}) - \overline{\mathbb{K}}_{k}^x(\varphi(X_{k})) \right)^2
  \right\vert \overline{\mathcal{F}}_{p}\right] 
\right]+\\
& 
\mathbb{E}_{\overline{\mathbb{K}}_{k-1}} \left[\left( \frac{d\overline{\mathbb{K}}_{k}^x}{d\overline{\mathbb{K}}_{k-1}}
  (X_{(0,hk]},\dtaux[1:k]) \right)^2 
\left(\varphi(X_{hk}) - \overline{\mathbb{K}}_{k}^x(\varphi(X_{hk})) \right)^2
\right]
\end{align*} 
with $\{\overline{\mathcal{F}}_p\}_{p\geq0}$ being the natural filtration associated with $\overline{\mathbb{W}}^x$.\\
\\
This can be straightforwardly simplified to:
{\small
\begin{align*}
\sigma_k^2(\varphi) =& \mathbb{E}_{\overline{\mathbb{W}}} 
\left[
  \left( 
\frac{G_1(\dtproc[1]) \mathbb{E}_{\overline{\mathbb{W}}^x}\left[\prod_{i=2}^k
    G(\dtproc[i])|\dtproc[1] \right]}{\overline{\mathbb{W}}^x(\prod_{i=1}^k G(\dtproc[i]))}
  \right)^2
  \mathbb{E}_{{\mathbb{K}}_{k}^x}
  \left[\left. 
      \left(
        \varphi(X_{hk}) - \mathbb{K}_{k}^x(\varphi(X_{hk})) 
      \right)^2
      \right\vert 
      X_{h}
  \right] 
\right]+\\
& \sum\limits_{p=2}^{k-1}
\mathbb{E}_{\overline{\mathbb{K}}_{p-1}^x} \left[
\left( 
\frac{\overline{\mathbb{W}}^x(\prod_{i=0}^{p-1} G(\dtproc[i]))}{\overline{\mathbb{W}}^x(\prod_{i=0}^k G(\dtproc[i]))} G(\dtproc[p]) \mathbb{E}_{\mathbb{W}^x}\left[\left.\prod_{i={p+1}}^k G(\dtproc[i]) \right| X_{hp}\right]
\right)^2 
\mathbb{E}_{\mathbb{K}_{k}^x}
  \left[\left. \left(\varphi(X_{hk}) - \mathbb{K}_{k}^x(\varphi(X_{k})) \right)^2
  \right| X_{hp}\right] 
\right]+\\
& 
\mathbb{E}_{\overline{\mathbb{K}}_{k-1}^x} \left[
\left( 
\frac{\overline{\mathbb{W}}^x(\prod_{i=0}^{k-1} G(\dtproc[i]))}{\overline{\mathbb{W}}^x(\prod_{i=0}^k G(\dtproc[i]))} G(\dtproc[k])
\right)^2 
\left(\varphi(X_{hk}) - \overline{\mathbb{K}}_{k}^x(\varphi(X_{hk})) \right)^2
\right]
\end{align*}}
\end{proof}

\noindent We conclude with the following corollary, showing that the particular combination of sub-sampling scheme and path space sampler fits into this framework and providing its particular asymptotic variance expression.
\begin{corollary}
  Such a CLT is satisfied in particular:
  \begin{enumerate}[(a)]
    \item 
      If no sub-sampling is used and one evaluates the \emph{exact} (intractable) killing rate (as described in \algref{alg:qsmc}).
    \item 
      If sub-sampling is employed within the construct of the layered path-space rejection sampler (as described in \algref{alg:scale}).
      \end{enumerate}
\end{corollary}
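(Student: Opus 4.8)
The plan is to deduce the corollary directly from the Proposition proved above, whose only hypotheses are that the algorithm under consideration has the abstract discrete-time Feynman--Kac structure of \apxref{app:smc} (state spaces $\dtss[k]=({C}(h(k-1),hk],\mathcal{Z}_k)$, transitions $M_k$ built from Brownian motion and an auxiliary kernel $Q_k$, and a potential $G_k$), that the key identity relating $d\mathbb{K}^x_{0,hk}/d\mathbb{W}^x_{0,hk}$ to $\mathbb{E}_{\mathbb{W}^x_{0,hk}}[\prod_{i\le k} G_i(\dtproc[i])\mid X_{0:hk}]$ holds, and that mild regularity conditions are met. Thus for each of the two algorithms it suffices to: (i) identify the auxiliary space $\mathcal{Z}_k$ and kernel $Q_k$; (ii) write down $G_k$; (iii) verify the identity; and (iv) check that $G_k$ is uniformly bounded above and almost surely strictly positive, so that the normalising constants appearing in the asymptotic variance are finite and non-zero. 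Throughout we work with the deterministically-resampled discrete-time embedding, as in \apxref{app:smc}, to which the results of \cite{smc:theory:Del04,smc:clt:Cho04} apply verbatim.

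For case (a), I would let $\dtaux[k]$ record, for the time segment $(h(k-1),hk]$, the path-space layer information of \apxref{s:pathspace layer} together with the exponentially spaced event times $\{\xi_j\}$ and the Brownian skeleton values $\mathbf{X}_{\xi_j}$ at those times, exactly as in \thmref{thm:acceptprob2}; $Q_k$ is the (tractable, by \apxref{s:pathspace layer}) law of this construction given the Brownian path. The potential $G_k(\dtproc[k])$ is the restriction of $P(\mathbf{X})$ in (\ref{eq:p1decomp2}) to the $k$-th interval, namely the product over the layers in $(h(k-1),hk]$ of the $P^{(1,i)}(\mathbf{X})$ factors times the sampled $P^{(2,i,j)}(\mathbf{X})$ factors. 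Taking the conditional expectation $\mathbb{E}[\,\cdot\mid X_{0:hk}]$ over the event times and skeleton values within each interval returns, by \thmref{thm:acceptprob2}, the quantity $e^{\Phi h}\exp\{-\int_{h(k-1)}^{hk}\phi(\mathbf{X}_s)\,ds\}$, whose product over $k$ is proportional to $d\mathbb{K}^x_{0,hk}/d\mathbb{W}^x_{0,hk}$ by (\ref{eq:k killing}), the constants $e^{\Phi h}$ being absorbed into the proportionality. Boundedness is immediate: $P^{(1,i)}(\mathbf{X})\in[0,1]$ by construction and $P^{(2,i,j)}(\mathbf{X})\in[0,1]$ by (\ref{eq:pois ap}), so $G_k\in[0,1]$, while $G_k>0$ almost surely, giving strictly positive normalising constants.

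For case (b), I would enlarge $\dtaux[k]$ to carry in addition the subsampling indices $A_j=(I_j,J_j)$, one independent pair per event time, and replace $\phi$ by the control-variate estimator $\tilde\phi_{A_j}$ of (\ref{eq:subsamplelambda}) and the bounds $L^{(i)}_\mathbf{X},U^{(i)}_\mathbf{X}$ by $\tilde L^{(i)}_\mathbf{X},\tilde U^{(i)}_\mathbf{X}$; the potential $G_k$ is then the restriction of the modified estimator (\ref{eq:Px2}) to the $k$-th interval. The identity is verified by a tower-property argument: conditioning first on the path and the layer/event-time configuration, one averages over the $A_j$ using the unbiasedness $\mathbb{E}_{\mathcal{A}}[\tilde\phi_{A}(\cdot)]=\phi(\cdot)$, which turns each factor $(\tilde U^{(i)}_\mathbf{X}-\tilde\phi_{A_j})/(\tilde U^{(i)}_\mathbf{X}-\tilde L^{(i)}_\mathbf{X})$ into $(\tilde U^{(i)}_\mathbf{X}-\phi)/(\tilde U^{(i)}_\mathbf{X}-\tilde L^{(i)}_\mathbf{X})$ and reduces the computation to exactly case (a) but with the wider bounds $\tilde L^{(i)}_\mathbf{X},\tilde U^{(i)}_\mathbf{X}$; since (\ref{eq:Px}) holds for any valid lower and upper bounds, a further expectation over the event configuration recovers $d\mathbb{K}^x_{0,hk}/d\mathbb{W}^x_{0,hk}$ as before. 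Boundedness again holds because $\tilde L^{(i)}_\mathbf{X},\tilde U^{(i)}_\mathbf{X}$ were chosen precisely so that $\tilde\phi_{A}(\mathbf{X}_{[\tau_{i-1},\tau_i]})\in[\tilde L^{(i)}_\mathbf{X},\tilde U^{(i)}_\mathbf{X}]$ for every realisation of $A$, whence each factor lies in $[0,1]$, so $G_k\in[0,1]$ with $G_k>0$ a.s.

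I expect the main obstacle to be organisational rather than analytic: setting up the order of conditioning in (b) carefully enough that the tower-property argument is rigorous, given that the number of event times $\tilde\kappa_i$ within a layer is itself random and layer-dependent and a fresh $A_j$ is drawn at each event, and confirming that the per-interval potentials genuinely compose, via the strong Markov property of the layered Brownian proposal, into the time-homogeneous discrete-time Feynman--Kac flow of \apxref{app:smc}. A secondary point is to make precise which ``mild regularity conditions'' are invoked --- essentially uniform boundedness of the $G_k$ and strict positivity of the associated normalising constants, which is exactly what the bounds above supply --- so that the cited results of \cite{smc:theory:Del04,smc:clt:Cho04} apply directly.
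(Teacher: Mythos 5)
Your proposal is correct and follows the paper's high-level strategy exactly: instantiate the discrete-time Feynman--Kac embedding of \apxref{app:smc} by specifying $(\mathcal{Z}_k, Q_k, G_k)$ and verifying that the conditional expectation of $\prod_i G_i$ given the Brownian path reproduces $d\mathbb{K}^x_{0,hk}/d\mathbb{W}^x_{0,hk}$, then invoke the Proposition. For part (b) your identification (layers, Poisson event times and i.i.d.\ subsampling indices in $\dtaux[k]$, potential given by the per-interval restriction of (\ref{eq:Px2}), identity via the tower property and unbiasedness of $\tilde{\phi}_A$) is in substance the paper's own, which merely writes out $\mathcal{Z}_k$, $Q_k$ and $G_k$ in explicit measure-theoretic notation. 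The one genuine divergence is part (a): the paper takes the trivial identification $\mathcal{Z}_k=\emptyset$ with $G_k$ proportional to the \emph{exact} Radon--Nikod\'ym derivative $d\mathbb{K}^{X_{h(k-1)}}_{h(k-1),hk}/d\mathbb{W}^{X_{h(k-1)}}_{h(k-1),hk}$ --- i.e.\ the potential literally \emph{is} the intractable integrated killing rate --- whereas you retain the layer/Poisson auxiliary structure and evaluate the exact $\phi$ at the event times, using \thmref{thm:acceptprob2} to recover the identity. Your version is a perfectly valid instance of the framework (and is arguably closer to what \algref{alg:qsmc} actually computes), but note it yields a different asymptotic variance $\sigma_k(\varphi)$ than the paper's (a), since the auxiliary randomisation inflates the variance relative to using the exact rate; the two identifications should not be conflated. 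Your explicit check that $G_k\in[0,1]$ and $G_k>0$ a.s.\ (via (\ref{eq:pois ap}) and the choice of $\tilde{L}^{(i)}_{\mathbf{X}},\tilde{U}^{(i)}_{\mathbf{X}}$) is a useful concretisation of the ``mild regularity conditions'' the paper leaves implicit, and the composability concern you raise is handled in the paper by resetting the layer construction at each resampling time $kh$, exactly as its specification of $\tau_{k,0}=(k-1)h$ indicates.
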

\begin{proof}
  Both claims follow directly by the above argument with the appropriate identifications.

  (a) is established by setting:
  \begin{align*}
    \mathcal{Z}_k =& \emptyset
    &
    G_k(\dtproc[k]) =& G(X_{[h(k-1),hk]})\\
    && \propto& \frac{d\mathbb{K}^{X_{h(k-1)}}_{h(k-1),hk}}{d\mathbb{W}^{X_{h(k-1)}}_{h(k-1),hk}}(X_{(h(k-1):hk]})
  \end{align*}

  (b) is established by setting (where we denote by $c$ the number of pairs of data points employed by the subsampling mechanism; $c=1$ for the examples in this paper): 
  \begin{align*}
    \mathcal{Z}_k =& \cup_{m_k=1}^\infty  \otimes_{p=1}^{m_k} R(\tau_{k,p-1},\tau_{k,p})\\
      R(s,t) =& \cup_{\kappa=0}^\infty \{\kappa\} \times (s,t]^\kappa \times \{1,\ldots,n\}^{2c\kappa} \\
        \dtaux[k] =& (r_{k,1},\ldots,r_{k,m_k}) \\
        r_{k,p} =&(
\kappa_{k,p},
        \xi_{k,p,1},\ldots,\xi_{k,p,\kappa_{k,p}},s_{k,j,1,1:2c},\ldots,s_{k,p,\kappa_{k,p},1:2c}) \\
        G_k(\dtproc[k]) =&\exp\left(-\sum_{p=1}^{m_k} L_\theta(X_{\tau_k,p-1})(\tau_{k,p} - \tau_{k,p-1})\right)\\
        &\qquad\cdot \prod_{p=1}^{m_k} \prod_{j=1}^{\kappa_{k,p}}
        \left[
\frac{U_\theta(X_{\tau_{k,p-1}}) - \widetilde{\phi}(X_{\xi_{k,p,j}},s_{k,p,j,1:2c})}{U_\theta(X_{\tau_{k,p-1}}) - L_\theta(X_{\tau_{k,p-1}})} 
\right]
         \\
         Q_k(X_{(h(k-1),hk]},d\dtaux[k]) =& \prod_{p=1}^{m_k} \left[ \textsf{PP}(d\xi_{k,p,1:\kappa_{k,p}}; ( U_\theta(X_{\tau_{k,p-1}}) - L_\theta(X_{\tau_{k,p-1}}) ), [\tau_{k,p-1},\tau_{k,p}])\right.\\
             & \left. \qquad \cdot
\prod_{j=1}^{\kappa_{k,p}} \frac{1}{n^{2c}} \prod_{l=1}^{2c} \delta_{\{1,\ldots,n\}} (ds_{k,j,l}))
             \right] 
  \end{align*}
  where $\textsf{PP}(\cdot;\lambda,[a,b])$ denotes the law of a homogeneous Poisson process of rate $\lambda$ over interval $[a,b]$, $\delta_{\{1,\ldots,n\}}$ denotes the counting measure over the first $n$ natural numbers and a number of variables which correspond to deterministic transformations of the $X$ process have been defined to lighten notation:
  \begin{align*}
    \tau_{k,p} =
    \left\{
    \begin{array}{ll}
      (k-1)h & p = 0\\
      \inf \{ t : | X_t - X_{\tau_{k,p-1}}| \geq \theta\}  & p=1,\ldots,m_k-1\\
      kh & p=m_k
    \end{array}\right.
  \end{align*}
and $m_k$ is the number of distinct layer pairs employed in interval $k$ of the discrete time embedding of the algorithm (i.e. it is the number of first passage times simulated within the continuous time algorithm after time $(k-1)h$ until one of them exceeds $kh$; as detailed in Appendices \ref{ss:fpt} and \ref{ss:fptinter}).
\end{proof}

\section{Estimation of Effective Sample Size}\label{app:ESS}

Assume QSMC (or ScaLE) has been run for an execution (diffusion) time of length $T$, and that the weighted particle set (of size $N$) is to be used at the following auxiliary mesh times $t^*,\ldots{},t_m := T$ (recalling from \secref{s:qsmc} that $t^*\in(t_0,\ldots{},t_m)$ is a user selected quasi-stationary burn-in time) for computation of the Monte Carlo estimators (\ref{eq:occupation2}, \ref{eq:occupation3}).\\
\\
The posterior mean for the parameters at time $t_i\in[t^*,T]$ is simply estimated using the particle set by $\hat{\mathbf{X}}_{t_i}=\sum_{k=1}^N w^{(k)}_{t_i}\cdot \mathbf{X}^{(k)}_{t_i}$. An overall estimate of the posterior mean and variance can be computed as follows:
\begin{align}
 \bar{\mathbf{X}} & =\dfrac{1}{m(T-t^*)/T}\sum_{i=m(T-t^*)/T}^m \hat{\mathbf{X}}_{t_i}, \\
 \hat{\sigma}^2_{\mathbf{X}} & =
\dfrac{1}{m(T-t^*)/T}\sum_{i=m(T-t^*)/T}^m \sum_{k=1}^N w^{(k)}_{t_i}\Big(\mathbf{X}^{(k)}_{t_i}- \bar{\mathbf{X}}\Big)^2,
\end{align}
The marginal ESS for particles at a single time point can be estimated as the ratio of the variance of $\hat{\mathbf{X}}_{t}$ to the estimate of the posterior variance,
\begin{align}
 \mbox{ESS}_{M}=\hat{\sigma}^2_{\mathbf{X}}\left(\frac{1}{m(T-t^*)/T} \sum_{t=m(T-t^*)/T}^m \Big(\hat{\mathbf{X}}_{t_i}-\bar{\mathbf{X}}\Big)^2
\right)^{-1}.
\end{align}
Although in total we have $(m(T-t^*)/T)$ sets of particles (after burn-in), these will be correlated. This is accounted for using the lag-1 auto correlation of $\hat{\mathbf{X}}_{t^*},\ldots,\hat{\mathbf{X}}_{T}$, which we denote $\hat{\rho}$. Our overall estimated ESS is,
\begin{align}
  \mbox{ESS} := (m(T-t^*)/T)\cdot \dfrac{1-\hat{\rho}}{1+\hat{\rho}} \cdot \mbox{ESS}_{M}.
\end{align}

\end{document}